\newtheorem{definition_it}{Definition}
\newtheorem{problem_bold}{Problem}
\newtheorem{example_bold}{Example}
\let\myOmega\Omega
\renewcommand{\Omega}{\mathrm{\myOmega}}
\let\myLambda\Lambda
\renewcommand{\Lambda}{\mathrm{\myLambda}}
\let\myDelta\Delta
\renewcommand{\Delta}{\mathrm{\myDelta}}
\let\myPhi\Phi
\renewcommand{\Phi}{\mathrm{\myPhi}}
\let\myPsi\Psi
\renewcommand{\Psi}{\mathrm{\myPsi}}
\let\myGamma\Gamma
\renewcommand{\Gamma}{\mathrm{\myGamma}}
\newcommand{\Fqm}{\ensuremath{\mathbb F_{q^m}}}
\newcommand{\Fq}{\ensuremath{\mathbb F_{q}}}
\newcommand{\myset}[1]{\mathcal{#1}}
\newcommand{\intervallincl}[2]{\ensuremath{[#1,#2]}}
\newcommand{\intervallexcl}[2]{\ensuremath{[#1,#2-1]}}
\newcommand{\setelements}[2]{\ensuremath{\{#1_0,#1_1,\dots, #1_{#2-1}\}}}
\newcommand{\Basis}{\ensuremath{\myset{B}}}
\newcommand{\NormbasisOrdered}{\boldsymbol{\Normelement}}
\newcommand{\NormbasisDualOrdered}{\boldsymbol{\Normelement}^{\perp}}
\newcommand{\Normelement}{\beta}
\newcommand{\Normdualelement}{{\beta^{\perp}}}
\newcommand{\qreciproc}[1]{\overline{#1}}
\newcommand{\qtrafo}[1]{\widehat{#1}}
\newcommand{\Linpolyring}{\mathbb{L}_{q^m}[x]}
\newcommand{\mymap}[1]{\textup{{#1}}}
\newcommand{\OCompl}[1]{\ensuremath{\mathcal{O}({#1})}}
\newcommand{\printalgo}[1]
{\begin{center}
\vspace{-2ex}
\scalebox{0.87}{
\begin{tabular}{p{0.94\textwidth}}
\begin{algorithm}[H]
 #1
\end{algorithm}
\end{tabular}}
\vspace{-2ex}
\end{center}}
\newcommand{\printalgoWidth}[2]
{\begin{center}
\vspace{-2ex}
\scalebox{0.85}{
\begin{tabular}{p{#2\textwidth}}
\begin{algorithm}[H]
 #1
\end{algorithm}
\end{tabular}}
\vspace{-2ex}
\end{center}}
\DeclareMathOperator{\defi}{def}
\newcommand{\defeq}{\overset{\defi}{=}}
\renewcommand{\mod}{\; \textnormal{ mod } \;}
\newcommand{\Indfunc}{\mathbf{1}}
\DeclareMathOperator{\rk}{rk}
\renewcommand{\vec}[1]{\ensuremath{\mathbf{#1}}}
\newcommand{\Mat}[1]{\ensuremath{\mathbf{#1}}}
\newcommand{\vecelements}[1]{\ensuremath{(#1_0 \ #1_1 \ \dots \ #1_{n-1})}}
\newcommand{\vecelementsm}[1]{\ensuremath{(#1_0 \ #1_1 \ \dots \ #1_{m-1})}}
\newcommand{\veceval}[2]{\ensuremath{\left(#1(#2_0) \ #1(#2_1) \ \dots \ #1(#2_{n-1})\right)}}
\newcommand{\vecevalm}[2]{\ensuremath{\left(#1(#2_0) \ #1(#2_1) \ \dots \ #1(#2_{m-1})\right)}}
\newcommand{\Mooremat}[2]{\mymap{qvan}_{#1}( #2 )}
\newcommand{\MoormatExplicit}[3]{
\begin{pmatrix}
#1_{0} & #1_{1} & \dots& #1_{#3-1}\\
#1_{0}^{[1]} & #1_{1}^{[1]} & \dots& #1_{#3-1}^{[1]}\\
\vdots &\vdots&\ddots& \vdots\\
#1_{0}^{[#2-1]} & #1_{1}^{[#2-1]} & \dots& #1_{#3-1}^{[#2-1]}\\
\end{pmatrix}}
\renewcommand{\a}{\mathbf a}
\renewcommand{\c}{\mathbf c}
\newcommand{\e}{\mathbf e}
\newcommand{\f}{\mathbf f}
\renewcommand{\r}{\mathbf r}
\newcommand{\s}{\mathbf s}
\newcommand{\x}{\mathbf x}
\newcommand{\A}{\mathbf A}
\newcommand{\B}{\mathbf B}
\newcommand{\C}{\mathbf C}
\newcommand{\E}{\mathbf E}
\newcommand{\G}{\mathbf G}
\newcommand{\I}{\mathbf I}
\renewcommand{\H}{\mathbf H}
\newcommand{\Q}{\mathbf Q}
\newcommand{\R}{\mathbf R}
\renewcommand{\S}{\mathbf S}
\newcommand{\X}{\mathbf X}
\newcommand{\0}{\mathbf 0}
\newcommand{\mycode}[1]{\ensuremath{\mathsf{#1}}}
\newcommand{\Gab}[1]{\ensuremath{\mycode{Gab}[#1]}}
\newcommand{\IntGab}[1]{\ensuremath{\mathrm{I}\mycode{Gab}[#1]}}
\newcommand{\IntGabstar}[1]{\ensuremath{\mathrm{I}\mycode{Gab}^*[#1]}}
\newcommand{\IntGabNoInput}{\ensuremath{\mathrm{I}\mycode{Gab}}}
\newcommand{\dhalf}{\left\lfloor (d-1)/2\right\rfloor}
\newcommand{\nkhalf}{\left\lfloor (n-k)/2\right\rfloor}
\newcommand{\nkint}{\left\lfloor s(n-k)/(s+1)\right\rfloor}
\newcommand{\Uniquecorrcap}{\ensuremath{\tau_0}}
\newcommand{\numbRowErasures}{\varrho}
\newcommand{\numbColErasures}{\gamma}
\newcommand{\myspace}[1]{\mathcal{#1}}
\newcommand{\Rowspace}[1]{\myspace{R}_q\left(#1\right)}
\newcommand{\Colspace}[1]{\myspace{C}_q\left(#1\right)}
\newcommand{\Ball}[2]{\mathcal{B}^{(#1)}(#2)}
\newcommand{\Sphere}[2]{\mathcal{S}^{(#1)}(#2)}
\newcommand{\List}{\mathcal L}
\begin{document}

\title{List and Unique Error-Erasure Decoding of Interleaved Gabidulin Codes with Interpolation Techniques\thanks{The work of A. Wachter-Zeh has been supported by the German Research Council (DFG) under Grant No. Bo867/21 and a Minerva Postdoctoral Fellowship. The work of A. Zeh has been supported by the German Research Council (DFG) under Grants No. Bo867/22 and Ze1016/01.\\
This work was partly presented at the International Workshop on Coding and Cryptography (WCC), Apr. 2013, Bergen, Norway \cite{WachterZeh2013InterpolationInterleavedGabidulin_conf}. 
}}
\titlerunning{Interpolation-Based Decoding of Interleaved Gabidulin Codes}
\author{Antonia Wachter-Zeh \and Alexander Zeh}
\date{\today}
\authorrunning{A.~Wachter-Zeh and A.~Zeh}
\institute{A.~Wachter-Zeh was with the Institute of Communications Engineering, University of Ulm, Ulm, Germany and the 
Institut de Recherche Mathématique de Rennes (IRMAR), Université de Rennes 1, Rennes, France and is now with the Computer Science Department, Technion---Israel Institute of Technology, Haifa, Israel.\\
\email{\texttt{antonia@codingtheory.eu}}\\
A.~Zeh was with the Institute of Communications Engineering, University of Ulm, Ulm, Germany and Research Center INRIA Saclay - \^{I}le-de-France, \'{E}cole Polytechnique, France and is now with the Computer Science Department, Technion---Israel Institute of Technology, Haifa, Israel.\\
\email{\texttt{alex@codingtheory.eu}}}

\maketitle

\begin{abstract}
A new interpolation-based decoding principle for interleaved Gabidulin codes is presented.
The approach consists of two steps: First, a multi-variate linearized polynomial is constructed which interpolates the coefficients of the received word and second, the roots of this polynomial have to be found.
Due to the specific structure of the interpolation polynomial, both steps (interpolation and root-finding) can be accomplished by solving a linear system of equations.
This decoding principle can be applied as a list decoding algorithm (where the list size is not necessarily bounded polynomially) as well as an efficient probabilistic unique decoding algorithm. 
For the unique decoder, we show a connection to known unique decoding approaches and give an upper bound on the failure probability.
Finally, we generalize our approach to incorporate not only errors, but also row and column erasures.
\end{abstract}

\keywords{(interleaved) Gabidulin codes \and interpolation-based decoding \and list decoding \and rank-metric codes}

%

\section{Introduction}
During the last years, \emph{random linear network coding} (RLNC) has been attracting a lot of attention as a powerful means for spreading information in networks from sources to sinks \cite{Ahlswede_NetworkInformationFlow_2000,HoKoetterMedardKargerEffros-BenefitsCodingRouting_2003,MedardKoetterKargerEffrosShiLeong-RLNCMulticast_2006}. 
K\"otter and Kschischang \cite{koetter_kschischang} used subspace codes for error control in RLNC.
A subspace code is a non-empty set of subspaces of the vector space of dimension $n$ over a finite field and each codeword is a subspace itself,
compare \cite{Wang2003Linear,koetter_kschischang,Xia2009Johnson,Etzion2009ErrorCorrecting,Skachek2010Recursive,Etzion2011ErrorCorrecting,Bachoc2012Bounds}.
Silva, Kschischang and K\"otter \cite{silva_rank_metric_approach} showed that \emph{lifted Gabidulin codes} provide almost optimal subspace codes for RLNC. 
Gabidulin codes are the rank-metric analogs of Reed--Solomon codes and were introduced by Delsarte~\cite{Delsarte_1978}, Gabidulin~\cite{Gabidulin_TheoryOfCodes_1985} and Roth \cite{Roth_RankCodes_1991}. 
A \emph{lifted} Gabidulin code is a special subspace code, where each codeword is the row space of a matrix $[\,\I \ \ \C^T]$, $\I$ denotes the identity matrix and $\C$ is a codeword in matrix representation of a Gabidulin code.

\emph{Interleaved} Gabidulin codes can be seen as $s$ parallel codewords of Gabidulin codes. When applied to RLNC, they can be advantageous compared to usual Gabidulin codes since only {one} identity matrix is appended to $s$ Gabidulin codewords which reduces the relative "overhead". 
Independently from this application, it is remarkable that they can be decoded beyond the usual \emph{bounded minimum distance} (BMD) decoding capability with high probability.

In this contribution, a new interpolation-based approach for decoding interleaved Gabidulin codes of length $n$, interleaving order $s$ and elementary dimensions $k^{(i)}$, $\forall i\in \intervallincl{1}{s}$, is presented\footnote{Throughout this paper, $\intervallincl{a}{b}$ is a short-hand notation for the set of integers $ \{i:a \leq i \leq  b\}$.}.
Our decoding principle relies on constructing a multi-variate linearized polynomial which interpolates the $s$ elementary received words. 
We prove that the evaluation polynomials (of $q$-degree less than $k^{(i)}$) of any interleaved Gabidulin codeword in rank distance less than ${(sn  - \sum_{i=1}^{s}k^{(i)} +s)}/{(s+1)}$ are roots of this multi-variate polynomial.
Due to the structure of the multi-variate interpolation polynomial, its roots can be found by simply solving a linear system of equations.
This idea is related to the "linear-algebraic" decoding methods by Guruswami and Wang for folded/derivative Reed--Solomon codes \cite{Guruswami2011Linearalgebraic,GuruswamiWang-LinearAlgebraicForVariantsofReedSolomonCodes_2012} and Mahdavifar and Vardy for folded Gabidulin codes \cite{Mahdavifar2012Listdecoding}.

This paper is structured as follows.
In Section~\ref{sec:prelim}, we give notations and definitions.
Section~\ref{subsec:intgab_interpolation_algo} explains the basic principle of our decoder and shows how the two main steps---interpolation and root-finding---can each be accomplished by solving a linear system of equations. 
Our decoder is first applied as a (not necessarily polynomial-time) list decoding algorithm in Section~\ref{subsec:intgab_list_decoding} and second, as a unique decoding algorithm with a certain failure probability in Section~\ref{subsec:intgab_uniquedecoding}. 
Finally, in Section~\ref{subsec:intgab_errorerasure}, we show how our algorithm can be generalized to error-erasure decoding and conclude the paper in Section~\ref{sec:conclusion}.

\section{Preliminaries and Known Approaches}\label{sec:prelim}
\subsection{Definitions and Notations}
Let $q$ be a power of a prime, and let $\Fq$ be the finite field of order $q$ and by $\Fqm$ its extension field of degree $m$. 
We use $\Fq^{s \times n}$ to denote the set of all $s\times n$ matrices over $\Fq$ and 
$\Fqm^n =\Fqm^{1 \times n}$ for the set of all {row} vectors of length $n$ over $\Fqm$. 
Therefore, $\Fq^n$ denotes the vector space of dimension $n$ over $\Fq$. 
Denote $[i] \defeq q^i$ for any integer $i$.
For a vector $\mathbf a = \vecelements{a}\in\Fqm^n$ 
the \emph{$q$-Vandermonde matrix} is defined by
\begin{align}
\Mooremat{s}{\a} 
 \defeq \MoormatExplicit{a}{s}{n}.\label{eq:def_Moorematrix}
\end{align}
If $a_0$, $a_1, \dots$, $a_{n-1}$ are linearly independent over $\Fq$, then $\Mooremat{s}{\a}$ has rank $\min\{s,n\}$, see e.g. \cite[Lemma 3.15]{Lidl-Niederreiter:FF1996}.

A \emph{linearized polynomial}, see \cite{Ore_OnASpecialClassOfPolynomials_1933,Ore_TheoryOfNonCommutativePolynomials_1933,Lidl-Niederreiter:FF1996},
over $\Fqm$ has the form
\begin{equation*}
f(x) = \sum_{i=0}^{d_f} f_i x^{[i]}, 
\end{equation*}
with $f_i \in \Fqm$, $\forall i \in \intervallincl{0}{d_f}$. 
If $f_{d_f}\neq 0$, we call $d_f \defeq \deg_q f(x)$ the \textit{q-degree} of $f(x)$. 
For all $\alpha_1,\alpha_2 \in \Fq$ and all $a,b \in \Fqm$, it holds that
$f(\alpha_1 a+\alpha_2 b) = \alpha_1 f(a)+\alpha_2 f(b)$.
The (usual) addition and the non-commutative composition $f(g(x))$ convert the set of linearized polynomials into a non-commutative ring with the identity element $x^{[0]}=x$. In the following, all polynomials are linearized polynomials and $\Linpolyring$ denotes the ring of linearized polynomials. 
Further, for some $\vec{g} = \vecelements{g}$ and some $a(x) \in \Linpolyring$, we denote $a(\vec{g}) = \veceval{a}{g}$.

Throughout this paper, we use linearized Lagrange interpolation.
Let the elements in $\myset{G}= \setelements{g}{n}\subseteq \Fqm$ be linearly independent over $\Fq$ (as in Definition~\ref{def:int_gab}). Given $a(x) = \sum_{i=0}^{n-1}a_ix^{[i]}$, let $\qtrafo{a}(x)\in \Linpolyring$ denote the unique linearized polynomial of $q$-degree less than $n$ such that $\qtrafo{a}(g_i)=a_i$, $\forall i$, which can be calculated by:
\begin{equation}\label{eq:received_lagrange}
\qtrafo{a}(x)  = \sum\limits_{i=0}^{n-1} a_i \cdot \frac{L_i(x)}{L_i(g_i)},
\end{equation}
where $L_i(x)$ denotes the {$i$-th linearized Lagrange basis polynomial} of $q$-degree $n-1$ (see \cite{SilvaKschischang-RankMetricPriorityTransmission}), which is defined as the minimal subspace polynomial of $\myset{G}\setminus g_i = \{ g_0,\dots,g_{i-1}, g_{i+1},\dots,g_{n-1}\}$, i.e.:
\begin{equation}\label{eq:linearized_Lagrange_basis_poly}
L_i(x) 
= \prod_{B_0=0}^{q-1}\cdots \prod_{B_{i-1}=0}^{q-1}\cdot \prod_{B_{i+1}=0}^{q-1} \cdots\prod_{B_{n-1}=0}^{q-1} \Big(x-\sum_{j=0, j\neq i}^{n-1}B_j g_j\Big).
\end{equation}
Note that ${L_i(g_j)}/{L_i(g_i)} = 1$ if $i=j$ and $0$ else.

For a given basis of $\Fqm$ over $\Fq$, there is a bijective mapping for each vector $\mathbf x \in \mathbb{F}_{q^m}^n$ on a matrix $\mathbf X \in \Fq^{m \times n}$. 
Let $\rk(\mathbf x)$ denote the (usual) rank of $\mathbf X$ over $\mathbb{F}_{q}$ and let 
$\Rowspace{\X}$ and $\Colspace{\X}$ denote the row  and column space of $\X$ over $\Fq$. 
The right kernel of a matrix is denoted by $\ker(\x) = \ker(\X)$. 
The rank-nullity theorem states that for an $m \times n$ matrix, if $\dim \ker(\x) = t$, then $\dim \Colspace{\X}=\rk(\x) = n-t$.
Throughout this paper, we use the notation as vector (e.g. from $\Fqm^n$) or matrix (e.g. from $\Fq^{m \times n}$) equivalently, whatever is more convenient.

The \emph{minimum rank distance} $d$ of a block code $\mycode{C}$ over $\Fqm$ is defined by 
\begin{equation*}
d \defeq \min_{\substack{\mathbf c_1,\c_2 \in \mycode{C}\\ \mathbf c_1 \neq \mathbf c_2}} \rk(\mathbf c_1-\c_2) . 
\end{equation*}
%
%
%
For linear codes of length $n \leq m$ and dimension $k$, the Singleton-like upper bound \cite{Delsarte_1978,Gabidulin_TheoryOfCodes_1985,Roth_RankCodes_1991} implies that $d \leq n-k+1$.
If $d=n-k+1$, the code is called a \emph{maximum rank distance} (MRD) code. 

Further, $\Ball{\tau}{\a}$ denotes a ball of radius $\tau$ in rank metric around a word $\a\in \Fqm^n$ and $\Sphere{\tau}{\a}$ denotes a sphere in rank metric of radius $\tau$ around the word $\a$. 

\subsection{(Interleaved) Gabidulin Codes}
Gabidulin codes \cite{Delsarte_1978,Gabidulin_TheoryOfCodes_1985,Roth_RankCodes_1991} are special MRD codes and 
are considered as rank-metric analogs of Reed--Solomon codes.
%
Interleaved Gabidulin codes consist of $s$ horizontally or vertically arranged codewords of (not necessarily different) Gabidulin codes.
Vertically interleaved Gabidulin codes were introduced by Loidreau and Overbeck in 
\cite{Loidreau_Overbeck_Interleaved_2006,Overbeck_Diss_InterleveadGab} and 
rediscovered by Silva, Kschischang and Kötter \cite{silva_rank_metric_approach,Silva_PhD_ErrorControlNetworkCoding} as the 
{Cartesian product} of $s$ transposed codewords of Gabidulin codes. 
Later, Sidorenko and Bossert introduced \emph{horizontally} interleaved Gabidulin codes 
\cite{SidBoss_InterlGabCodes_ISIT2010,Sidorenko2011SkewFeedback}. 



We consider \emph{vertically} interleaved Gabidulin codes.
However, if one requires matrices with the dimensions of a horizontally interleaved Gabidulin code, we can simply transpose all codewords. 


\begin{definition_it}[Interleaved Gabidulin Code]\label{def:int_gab}
A linear (vertically) interleaved\\ Gabidulin code $\IntGab{s;n,k^{(1)},\dots,k^{(s)}}$ over $\Fqm$ of length $n \leq m$, elementary dimensions $k^{(1)},\dots,k^{(s)} \leq n$, and interleaving order $s$ is defined by
\begin{equation*}
\IntGab{s;n,k^{(1)},\dots,k^{(s)}}\defeq 
\left\lbrace
\begin{pmatrix}
f^{(1)}(\vec{g})\\
f^{(2)}(\vec{g})\\
\vdots\\
f^{(s)}(\vec{g})\\
\end{pmatrix}
 :  \deg_q f^{(i)}(x) < k^{(i)} \leq n, \forall i \in \intervallincl{1}{s}
\right\rbrace,
\end{equation*}
where $f^{(i)}(x)\in \Linpolyring$, $\forall i \in \intervallincl{1}{s}$, $\vec{g} = \vecelements{g}$ and the fixed elements $g_0,g_1, \dots, g_{n-1} \in \Fqm$ are linearly independent over $\Fq$. 
\end{definition_it}
Note that $\vec{c}^{(i)} = f^{(i)}(\vec{g}) \in \Gab{n,k^{(i)}}= \IntGab{1;n,k^{(i)}}$.
We can represent the codewords of the interleaved code as matrix in $\Fqm^{s\times n}$ or as matrix in $\Fq^{sm \times n}$.

\begin{corollary} 
Let \IntGab{s;n,k,\dots,k} be a linear interleaved Gabidulin code over $\Fqm$ as in Definition~\ref{def:int_gab} with $k^{(i)} = k$, $\forall i\in\intervallincl{1}{s}$.
Its minimum rank distance is $d=n-k+1$ and it is an MRD code.
\end{corollary}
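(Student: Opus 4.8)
The plan is to reduce the claim to the already-recorded fact that the elementary Gabidulin code $\Gab{n,k}=\IntGab{1;n,k}$ is MRD with minimum rank distance $n-k+1$. Since $\IntGab{s;n,k,\dots,k}$ is $\Fqm$-linear, its minimum rank distance equals $\min_{\c\neq\0}\rk(\c)$, where a codeword $\c=(f^{(1)}(\vec{g});\dots;f^{(s)}(\vec{g}))$ is identified with the $sm\times n$ matrix over $\Fq$ obtained by stacking the $\Fq$-matrix representations of the elementary codewords $\c^{(i)}=f^{(i)}(\vec{g})\in\Gab{n,k}$. First I would prove the lower bound $d\geq n-k+1$. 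If $\c\neq\0$, then $f^{(j)}(x)\not\equiv 0$ for some $j$; since a nonzero linearized polynomial of $q$-degree less than $k\leq n$ has an $\Fq$-kernel of dimension at most $k-1<n$, while $g_0,\dots,g_{n-1}$ span an $n$-dimensional $\Fq$-subspace, $f^{(j)}$ cannot vanish on all the $g_i$, so $\c^{(j)}\neq\0$ is a nonzero codeword of the MRD code $\Gab{n,k}$ and thus $\rk(\c^{(j)})\geq n-k+1$. The $\Fq$-matrix of $\c^{(j)}$ is a block of $m$ rows of the $\Fq$-matrix of $\c$, and deleting rows cannot increase the rank, hence $\rk(\c)\geq\rk(\c^{(j)})\geq n-k+1$; taking the minimum over all nonzero $\c$ yields $d\geq n-k+1$.

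For the matching upper bound I would exhibit one codeword of rank exactly $n-k+1$: because $\Gab{n,k}$ is MRD it contains a codeword $\c^{(1)}=f^{(1)}(\vec{g})$ with $\rk(\c^{(1)})=n-k+1$, and setting $f^{(2)}=\dots=f^{(s)}=0$ produces the interleaved codeword $\c=(\c^{(1)};\0;\dots;\0)$, whose $\Fq$-matrix has the matrix of $\c^{(1)}$ on top and zero rows below, so $\rk(\c)=\rk(\c^{(1)})=n-k+1$. Combining the two bounds gives $d=n-k+1$. To justify the MRD label I would invoke the Singleton-type bound for matrix (rank-metric) codes: deleting any $d-1$ columns from a code in $\Fq^{N\times n}$ of minimum rank distance $d$ keeps all codewords distinct, so such a code has at most $q^{N(n-d+1)}$ elements; applied with $N=sm\geq n$ (recall $n\leq m$) to our code of cardinality $q^{smk}$ this gives $smk\leq sm(n-d+1)$, \ie $d\leq n-k+1$, with equality attained by the codeword just constructed --- hence the code is MRD.

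There is no genuine obstacle here --- this is a corollary of standard facts about Gabidulin codes --- but two points are worth stating carefully. The first is that the rank of the stacked $\Fq$-matrix of an interleaved codeword is bounded below by the rank of each elementary block, which is exactly what lets a single nonzero component pin down the minimum distance from below; without this observation one might worry that interleaving could reduce the distance. The second is that ``MRD'' in the statement refers to the generalized (matrix-code) Singleton bound on $(sm)\times n$ matrices over $\Fq$, not to the $\Fqm$-linear bound $d\leq n-k+1$ recalled in the preliminaries (under which the $\Fqm$-dimension of the interleaved code is $sk$, not $k$); making explicit which bound is meant removes the apparent tension between ``$d=n-k+1$'' and the code's $\Fqm$-dimension.
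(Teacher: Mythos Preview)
The paper states this corollary without proof, so there is no argument to compare against; your proposal is correct and supplies exactly the standard justification one would expect: bound $d$ from below by noting that any nonzero interleaved codeword has a nonzero elementary row $\c^{(j)}\in\Gab{n,k}$ whose $\Fq$-matrix sits as a block of rows in the $(sm)\times n$ matrix of $\c$, and bound $d$ from above by exhibiting a minimum-weight Gabidulin codeword padded with zeros. Your closing remark is the only substantive point worth highlighting: the paper's preliminary Singleton bound is phrased for $\Fqm$-linear codes in $\Fqm^n$, where the interleaved code has dimension $sk$, so the ``MRD'' label must be read against the Delsarte matrix-code bound $|\mycode{C}|\leq q^{sm(n-d+1)}$ for codes in $\Fq^{sm\times n}$; you identify and resolve this cleanly.
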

In general, for arbitrary $k^{(i)}$, the minimum rank distance of \IntGab{s;n,k^{(1)},\dots,k^{(s)}} is $d = n-\max_i\{k^{(i)}\}+1$, which is not necessarily an MRD code.

\subsection{Known Approaches for Decoding Interleaved Gabidulin Codes}\label{sec:joint_decoding}

So far, there are two approaches for decoding interleaved Gabidulin codes: \cite{Loidreau_Overbeck_Interleaved_2006}
and \cite{SidBoss_InterlGabCodes_ISIT2010}.
Both are probabilistic unique decoding algorithms up to the radius $\tau = \nkint$ (for $k^{(i)}=k$, $\forall i\in\intervallincl{1}{s}$) and return the unique solution with high probability.
In the following, we shortly summarize the two principles and prove a relation between them. It is important to remark that the approach from \cite{SidBoss_InterlGabCodes_ISIT2010} was originally described for \emph{horizontally} interleaved Gabidulin codes, i.e., where an interleaved codeword is defined by $(f^{(1)}(\vec{g}) \ {f}^{(2)}(\vec{g})\ \dots \ {f}^{(s)}(\vec{g}))$, but in the following, we describe it for \emph{vertically} interleaved Gabidulin codes as in Definition~\ref{def:int_gab}. 

Let $\r^{(i)} = \vecelements{r^{(i)}}$, $\forall i \in \intervallincl{1}{s}$, denote the $s$ elementary received words, i.e., 
$\r^{(i)} = \c^{(i)}+ \e^{(i)}$ and $\c^{(i)} \in \Gab{n,k^{(i)}}$ as in Definition~\ref{def:int_gab}.
Further, let $t^{(i)} = \rk(\e^{(i)})$ and let $t \defeq \rk (\e^{(1)T} \ \e^{(2)T} \ \dots \ \e^{(s)T} )$.
We assume throughout this paper that every matrix $(\e^{(1)T} \ \e^{(2)T} \ \dots \ \e^{(s)T} )^T \in \Fqm^{s \times n}$ of rank $t$ is {equi-probable}.


For the explanation of the two known decoding principles, we assume that we know the actual rank of the error $t$, which enables us to directly set up the corresponding system of equations with the appropriate size.
A straight-forward algorithmic realization would therefore solve this system of equations for every $t$, where $\dhalf+1 \leq t \leq \tau$, but this principle can easily be improved.
%

\subsubsection*{A Decoding Approach based on the Received Word}
We show the main properties of the algorithm from \cite{Loidreau_Overbeck_Interleaved_2006} in the following;
for details the reader is referred to \cite{Loidreau_Overbeck_Interleaved_2006,Overbeck_Diss_InterleveadGab,Overbeck_AttacksPublicKeyCryptoGabidulin_2008}.
For some $t \leq \tau$, the main step of the decoding algorithm from \cite{Loidreau_Overbeck_Interleaved_2006} is to solve a homogeneous linear system of equations
\begin{equation}\label{eq:intgab_lo_systemeq}
\R_{R} \cdot \boldsymbol{\lambda}^T = \0,
\end{equation}
for $\boldsymbol{\lambda} = \vecelements{\lambda}$, where the $(n-t-1 + s(n-t)-\sum_{i=1}^{s}k^{(i)})\times n$ matrix $\R_{R}$ depends on $\vec{g}=\vecelements{g}$ and the received words:
\begin{equation}\label{eq:matrix_loidreauoverbeck}
\R_{R} = 
\begin{pmatrix}
\G_{R}\\
\R_{R}^{(1)}\\
\R_{R}^{(2)}\\
\vdots\\
\R_{R}^{(s)}\\
\end{pmatrix}
\defeq
\begin{pmatrix}
\Mooremat{n-t-1}{\vec{g}}\\
\Mooremat{n-k^{(1)}-t}{\r^{(1)}}\\
\Mooremat{n-k^{(2)}-t}{\r^{(2)}}\\
\vdots\\
\Mooremat{n-k^{(s)}-t}{\r^{(s)}}\\
\end{pmatrix},
\end{equation}
and "qvan" defines the $q$-Vandermonde matrix as in \eqref{eq:def_Moorematrix}.

If the right kernel of $\R_{R}$ has dimension one, the closest interleaved codeword can be reconstructed, 
see \cite{Loidreau_Overbeck_Interleaved_2006} and \cite[Algorithm~3.2.1]{Overbeck_Diss_InterleveadGab}.
When $\R_{R}$ has rank less than $n-1$, the codeword cannot be reconstructed in most cases. 
Thus, the \emph{decoding failure} is at most the probability that $\rk(\R_{R})$ is less than $n-1$.


The first $k^{(i)}$ rows of $\G_R$, for $i \in \intervallincl{1}{s}$, constitute the generator matrix of the $\Gab{n,k^{(i)}}$ code, which is the $i$-th elementary code of the $\IntGab{s;n,k^{(1)},\dots,k^{(s)}}$ code. 
This is due to the fact that $t \leq \tau \leq n-\max_i\{k^{(i)}\}-1$, and hence, $k^{(i)} \leq n-t -1$, $\forall i \in \intervallincl{1}{s}$. 
Therefore, the right kernel of $\R_{R}$ can also be expressed in terms of the elementary {error} words:
\begin{equation}\label{eq:intab_lo_kernel}
\ker\big(\R_{R}\big) = 
\ker
\begin{pmatrix}
\Mooremat{n-t-1}{\vec{g}}\\
\Mooremat{n-k^{(1)}-t}{\e^{(1)}}\\
\vdots\\
\Mooremat{n-k^{(s)}-t}{\e^{(s)}}\\
\end{pmatrix}
\defeq
\ker\left(\E_{R}\right).
\end{equation}
The rank of $\G_R$ is $n-t-1$ and the rank of the lower $s$ submatrices of $\E_{R}$ is $t \leq \tau$. 
Hence, the overall rank is $\rk(\R_{R}) = \rk(\E_{R})\leq n-1$. 
For $s \leq t $, the probability that $\rk(\R_{R}) < n-1$ is upper bounded in \cite[Eq.~(6)]{Loidreau_Overbeck_Interleaved_2006}, \cite[Eq.~(12)]{Overbeck_Diss_InterleveadGab} as follows:
\begin{equation}\label{eq:intgab_failure_lo}
P\big(\rk(\R_{R}) < n-1\big) \leq 1 - \left(1-\frac{4}{q^m}\right)\left(1-q^{m(s-t)}\right)^s.
\end{equation}

\subsubsection*{A Syndrome-Based Decoding Approach}
The approach from \cite{SidBoss_InterlGabCodes_ISIT2010,Sidorenko2011SkewFeedback} 
is a generalization of key equation-based decoding of Gabidulin codes \cite{Gabidulin_TheoryOfCodes_1985,Roth_RankCodes_1991}. 
Denote $s$ syndrome vectors of length $n-k^{(i)}$ by:
\begin{equation*}
\s^{(i)} \defeq \r^{(i)}\cdot \H^{(i)T} =  \e^{(i)}\cdot \H^{(i)T} = (s_0^{(i)} \ s_1^{(i)} \ \dots \ s_{n-k^{(i)}-1}^{(i)}), \quad \forall i \in \intervallincl{1}{s},
\end{equation*}
where $\H^{(i)}$ is a parity-check matrix of the elementary $\Gab{n,k^{(i)}}$ code, $\forall i \in \intervallincl{1}{s}$.
Further, we define $s$ modified syndromes by the following coefficients:
\begin{equation*}
\widetilde{s}^{(i)}_j  = s^{(i)[j-n+k^{(i)}+1]}_{n-k^{(i)}-1-j}, \forall i \in \intervallincl{1}{s}, \forall j \in \intervallexcl{0}{n-k^{(i)}},
\end{equation*}
and denote the corresponding polynomials by $\widetilde{s}^{(i)}(x) = \sum_{j=0}^{n-k^{(i)}-1}\widetilde{s}^{(i)}_jx^{[j]}$.
 Then, a \emph{key equation} for the row space of the whole error matrix holds as follows:
\begin{equation}\label{eq:interleavedGabi_SidBoss}
\renewcommand{\arraystretch}{1.3}
\S\cdot \boldsymbol{\Gamma}^T =\begin{pmatrix}
\S^{(1)}\\
\S^{(2)}\\
\vdots\\
\S^{(s)}\\
\end{pmatrix} \cdot \boldsymbol{\Gamma}^T = \0,
\end{equation}
where $\boldsymbol{\Gamma} = (\Gamma_0 \ \Gamma_{1} \ \dots \ \Gamma_t)$ and 
\begin{equation}\label{eq:interleavedGabi_matrix_SidBoss}
\S^{(i)}
= 
\renewcommand{\arraystretch}{1.7}
\setlength{\arraycolsep}{.8ex}
\begin{pmatrix}
s_{n-k^{(i)}-1-t}^{(i)[t-n+k^{(i)}+1]}& s_{n-k^{(i)}-t}^{(i)[t-n+k^{(i)}+1]} & \dots &s_{n-k^{(i)}-1}^{(i)[t-n+k^{(i)}+1]} \\
s_{n-k^{(i)}-2-t}^{(i)[t-n+k^{(i)}+2]}& s_{n-k^{(i)}-t-1}^{(i)[t-n+k^{(i)}+2]} & \dots &s_{n-k^{(i)}-2}^{(i)[t-n+k^{(i)}+2]} \\
\vdots &\vdots&\ddots& \vdots\\
s_{0}^{(i)[0]}& s_{1}^{(i)[0]} & \dots &s_{t}^{(i)[0]} \\
\end{pmatrix}, \ \forall i \in \intervallincl{1}{s}.\renewcommand{\arraystretch}{1}
\end{equation}
If $\rk(\S) =t$, we obtain a unique solution of the error span polynomial $\Gamma(x)$ (except for a scalar factor) and we can reconstruct the $s$ error vectors, compare \cite{SidBoss_InterlGabCodes_ISIT2010}.

Hence, the probability of failure for the approach from \cite{SidBoss_InterlGabCodes_ISIT2010} can be upper bounded by the probability that $\S$ from \eqref{eq:interleavedGabi_SidBoss} has rank less than $t$, which is bounded in \cite[Theorem~5]{SidBoss_InterlGabCodes_ISIT2010} for $s \leq \tau$ by:
\begin{equation*}
P\big(\rk(\S) < t\big)\leq 3.5 \; q^{-m\big((s+1)(\tau-t)+1\big)}  < \frac{4}{q^m}. 
\end{equation*}
This bound improves the bound from \cite{Loidreau_Overbeck_Interleaved_2006} and in general we can use $P_f < {4}/{q^m}$ as simplified upper bound on the failure probability of both cases.


\subsubsection*{Connection Between the Two Known Approaches}
\begin{lemma}[Relation Between Decoding Matrices]\label{lem:intgab_rel_LO_SB}
Let $k^{(i)} = k$, $\forall i \in \intervallincl{1}{s}$, let $t \leq \tau = \nkint$ and let $\R_{R}$ be defined as in \eqref{eq:matrix_loidreauoverbeck} and $\S$ as in \eqref{eq:interleavedGabi_SidBoss}, \eqref{eq:interleavedGabi_matrix_SidBoss}.
Then, $\rk(\S) < t$ if and only if $\rk(\R_{R}) < n-1$.
\end{lemma}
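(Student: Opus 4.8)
The two matrices $\R_{R}$ (more precisely $\E_R$, via the kernel identity \eqref{eq:intab_lo_kernel}) and $\S$ both encode the same information — the row space of the error matrix — so the natural strategy is to exhibit an explicit relation between their ranks. The plan is to compute both ranks in terms of the quantity $t' \defeq \rk\big((\e^{(1)T} \ \dots \ \e^{(s)T})^T\big)$, the "true" rank of the stacked error, and show that each of the two rank-deficiency events "$\rk(\S) < t$" and "$\rk(\R_{R}) < n-1$" is equivalent to "$t' < t$". Since the Loidreau–Overbeck side is already half-done in the excerpt (we have $\rk(\R_{R}) = \rk(\E_R) \le n-1$ with equality governed by the rank of the lower $s$ blocks), the work concentrates on linking $\rk(\S)$ to $\rk(\E_R)$, or directly to $t'$.

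First I would recall, from the known-approaches subsection, that the syndrome matrix $\S$ in \eqref{eq:interleavedGabi_SidBoss}–\eqref{eq:interleavedGabi_matrix_SidBoss} has $\rk(\S) = t$ exactly when the error has full "row rank" $t$, i.e. when $t' = t$; this is the content of the key-equation analysis in \cite{SidBoss_InterlGabCodes_ISIT2010}. Symmetrically, I would argue that the lower block of $\E_R$ in \eqref{eq:intab_lo_kernel}, namely the stack of $\Mooremat{n-k-t}{\e^{(i)}}$, has rank equal to $\min\{t', n-t-1\} = t'$ (using $t \le \tau$, which forces $k \le n-t-1$ and, since each $\e^{(i)}$ has rank at most $t' \le t \le n-k-t$ in the relevant regime, the $q$-Vandermonde rank statement cited after \eqref{eq:def_Moorematrix} applies blockwise and then the whole stack spans exactly the row space of the error matrix). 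Combining with $\rk(\G_R) = n-t-1$ and the fact that the row space of $\G_R = \Mooremat{n-t-1}{\vec g}$ intersects the row space of the error stack trivially (because $\vec g$ and the error rows together still have the right independence — this is the place where the bound $t \le \tau = \nkint$ is used), I get $\rk(\E_R) = (n-t-1) + t'$. Hence $\rk(\R_{R}) < n-1$ iff $(n-t-1) + t' < n-1$ iff $t' < t$, and likewise $\rk(\S) < t$ iff $t' < t$; chaining the two equivalences gives the lemma.

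The main obstacle I anticipate is making the "the two row spaces meet only in $\0$" step fully rigorous, i.e. showing $\rk(\E_R) = (n-t-1) + \rk(\text{lower blocks})$ rather than something smaller. The top block $\Mooremat{n-t-1}{\vec g}$ has a row space of dimension $n-t-1$, the error stack contributes dimension $t'$, and one must verify the sum is direct; this is essentially the statement that a codeword-plus-error configuration of this size is still "generic enough," and it is exactly here that the hypothesis $t \le \tau = \lfloor s(n-k)/(s+1)\rfloor$ enters (it controls the combined $q$-degree so that no spurious linear dependence is forced). A clean way to handle this is to transport everything through the kernel: $\ker(\R_{R}) = \ker(\E_R)$ already appears in \eqref{eq:intab_lo_kernel}, so it suffices to compare $\dim\ker(\E_R)$ with $\dim\ker(\S)$ and use the rank–nullity theorem on both, reducing the whole lemma to the single identity $\dim\ker(\E_R) = \dim\ker(\S) + (n - 1 - t)$, which one can check by matching both kernels to the space of error-locator-type polynomials vanishing on the column space of the error. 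I would also double-check the edge case $s \le t$ versus $s > t$, since the cited failure bounds are stated under $s \le \tau$, but the rank equivalence itself should hold without that restriction.
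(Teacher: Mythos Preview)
Your reduction has a genuine gap, and it lies right at the heart of the argument. In the lemma (and throughout the paper), $t$ is \emph{defined} to be the rank of the stacked error matrix, $t \defeq \rk\big(\e^{(1)T}\ \dots\ \e^{(s)T}\big)$; so your auxiliary quantity $t'$ satisfies $t'=t$ identically. If it were true that $\rk(\S)=t$ iff $t'=t$, and $\rk(\R_{R})=n-1$ iff $t'=t$, then both rank-deficiency events would be impossible, and the lemma would be vacuous. But the whole point of the surrounding discussion is that the failure events $\rk(\S)<t$ and $\rk(\R_{R})<n-1$ do occur with positive probability for errors of rank \emph{exactly} $t$: the ranks of $\S$ and $\R_{R}$ depend on the fine structure of the error (the specific vectors $\a^{(i)}$ in a decomposition $\e^{(i)}=\a^{(i)}\B$), not merely on its rank. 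In particular, the ``row spaces meet only in $\0$'' step you flagged as an obstacle is not just a technicality to be made rigorous---it is sometimes false, and those are precisely the failure cases the lemma is about. The claimed characterisation ``$\rk(\S)=t$ iff $t'=t$'' is not what \cite{SidBoss_InterlGabCodes_ISIT2010} shows; that reference only proves that \emph{if} $\rk(\S)=t$ then decoding succeeds.

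The paper's argument avoids this entirely by never attempting to compute either rank in closed form. Instead it exhibits a direct matrix identity: taking $\H$ to be the $(t{+}1)\times n$ parity-check matrix of $\Gab{n,n-t-1}$ (the last $t{+}1$ rows of the parity-check matrix of $\Gab{n,k}$), one has
\[
\R_{R}\cdot\H^{T}
=\begin{pmatrix}\0\\ \S^{(1)[n-k-t-1]}\\ \vdots\\ \S^{(s)[n-k-t-1]}\end{pmatrix},
\]
so $\rk(\R_{R}\H^{T})=\rk(\S)$. One direction then follows from Sylvester's rank inequality $\rk(\R_{R})+\rk(\H^{T})-n\le\rk(\R_{R}\H^{T})$ with $\rk(\H)=t+1$, and the other from a Gaussian-elimination decomposition of $\R_{R}$ separating the $\G_{R}$-block (rank $n-t-1$) from the rest. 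You should redo the proof along these lines: link the two matrices by a product, not by a common numerical invariant.
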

\begin{proof}
First recall $\R_{R}$ from \eqref{eq:matrix_loidreauoverbeck}.
The submatrix $\G_R$ is a generator matrix of a $\Gab{n,n-t-1}$ code. 
Let $\vec{h} = \vecelements{h}$ define an $(n-k)\times n$ parity-check matrix $\H^{(0)}$ of the $\Gab{n,k}$ code (which defines the $\IntGab{s;n,k,\dots,k}$ code).

Then, $\H = \Mooremat{t+1}{\vecelements{h^{[n-k-t-1]}}}$ is a $(t+1)\times n$ parity check matrix of a $\Gab{n,n-t-1}$ code and is a $(t+1) \times n$ submatrix of $\H^{(0)}$, consisting of the lowermost $t+1$ rows of $\H^{(0)}$.
Multiplying $\R_{R}$ by ${\H}^{T}$ and comparing the result to \eqref{eq:interleavedGabi_matrix_SidBoss} gives:
\begin{equation}\label{eq:intgab_proof_connection}
\renewcommand{\arraystretch}{1.3}
\R_{R}  \H^T =
\begin{pmatrix}
\Mooremat{n-t-1}{\vec{g}}\\
\Mooremat{n-k-t}{\r^{(1)}}\\
\vdots\\
\Mooremat{n-k-t}{\r^{(s)}}\\
\end{pmatrix}
{\H}^T = 
\begin{pmatrix}
\Mooremat{n-t-1}{\vec{g}}\\
\Mooremat{n-k-t}{\e^{(1)}}\\
\vdots\\
\Mooremat{n-k-t}{\e^{(s)}}\\
\end{pmatrix}
{\H}^T = 
\begin{pmatrix}
\0\\
\S^{(1)[n-k-t-1]}\\
\vdots\\
\S^{(s)[n-k-t-1]}
\end{pmatrix}\!.
\renewcommand{\arraystretch}{1}
\end{equation}
For any integer $i$, $\rk(\A) = \rk(\A^{[i]})$, where $\A^{[i]}$ means that every entry is taken to the $q$-power $i$. 

Based on \eqref{eq:intgab_proof_connection}, we first prove the {if part}.
Calculate by Gaussian elimination of $\R_{R}$ the matrix  
$\widetilde{\E} = 
\begin{pmatrix}
\G_R\\
\widetilde{\E}_{R}
\end{pmatrix}$
such that $\rk(\R_{R})=\rk(\widetilde{\E}) = \rk(\G_R) + \rk(\widetilde{\E}_{R}) = n-t-1 + \rk(\widetilde{\E}_{R})$ (i.e., such that the ranks sum up).
Notice that $\widetilde{\E}_{R}$ does not necessarily consist of the $s$ lower submatrices of $\E_{R}$ from \eqref{eq:intab_lo_kernel}.
These elementary row operations do not change the rank and we obtain from \eqref{eq:intgab_proof_connection}
\begin{equation*}
\rk\big(\S\big) =\rk\big(\S^{[n-k-t-1]}\big) =  \rk\big(\R_{R} \cdot \H^T\big) =  \rk\big(\widetilde{\E} \cdot \H^T\big) = \rk\big(\widetilde{\E}_{R} \cdot \H^T\big).
\end{equation*}
Now, if $\rk(\R_{R}) <n-1$, then $\rk(\widetilde{\E}_{R}) < t$ since $ \rk(\G_R) = n-t-1  $.
Then, also $\rk\big(\widetilde{\E}_{R} \cdot \H^T\big) <t$ and therefore $\rk(\S) <t$.\\
Second, let us prove the {only if part}.
Due to Sylvester's rank inequality
\begin{equation*}
\rk\big(\R_{R}\big)+ \rk\big(\H^T\big)-n \leq \rk\big(\R_{R} \cdot \H^T\big)  = \rk\big(\S^{[n-k-t-1]}\big) = \rk\big(\S\big).
\end{equation*}
Clearly, $\rk(\H) = t+1$. Hence, if $\rk(\S) <t$, then
$\rk\big(\R_{R}\big) \leq n-t-1 + \rk\big(\S\big) < n-1$.
\qed\end{proof}
Thus, both approaches have the same fraction of correctable error matrices when $k^{(i)} = k$, $\forall i\in \intervallincl{1}{s}$.
This means that the tighter bound on the failure probability from \cite{SidBoss_InterlGabCodes_ISIT2010} can also be used to bound the failure probability of \cite{Loidreau_Overbeck_Interleaved_2006}.

However, for arbitrary $k^{(i)}$, it is not clear if the matrix on the RHS of \eqref{eq:intgab_proof_connection} has the same rank as $\S$ since the $q$-powers of each submatrix differ.

\section{Principle of Interpolation-Based Decoding}\label{subsec:intgab_interpolation_algo}
Guruswami and Sudan \cite {Sudan:JOC1997, Guruswami-Sudan:IEEE_IT1999} introduced polynomial-time list decoding of Reed--Solomon and Algebraic-Geometry codes based on interpolating bivariate (usual) polynomials.
For \emph{linearized} polynomials, however, it is not clear how to define mixed terms (i.e., monomials containing more than one indeterminate) and how to design a list decoding algorithm for Gabidulin codes, see also \cite{Wachterzeh_BoundsListDecodingRankMetric_IEEE-IT_2013}.
When a bivariate linearized polynomials is defined \emph{without} mixed terms, we can decode an $\Gab{n,k}$ code up to $\nkhalf=\dhalf$, which was done in \cite {Loidreau_AWelchBerlekampLikeAlgorithm_2006}.

Our decoding approach for \emph{interleaved} Gabidulin codes is based on interpolating a \emph{multi}-variate linearized polynomial without mixed terms. 

\subsection{Interpolation Step}\label{subsec:interpolation}
\begin{problem_bold}[Interpolation Step]\label{prob:interpolation}
Let $r^{(i)}(x) = \sum_{j=0}^{n-1}r_j^{(i)}x^{[j]} \in \Linpolyring$, $\forall i \in \intervallincl{1}{s}$, and $g_0,g_1,\dots,g_{n-1} \in \Fqm$, which are linearly independent over $\Fq$, be given.

Find an $(s+1)$-variate linearized polynomial of the form 
\begin{equation*}
Q(x,y_1,\dots,y_s) = Q_0(x) + Q_1(y_1) + \dots + Q_s(y_s), 
\end{equation*}
which satisfies for given integers $n,\tau,k^{(1)}, \dots, k^{(s)}$:
\begin{itemize}
\item[$\bullet$] $Q(g_j,r_j^{(1)},\dots,r_j^{(s)}) = 0$, $\forall j \in \intervallexcl{0}{n}$,
\item[$\bullet$] $\deg_q Q_0(x) < n-\tau$,
\item[$\bullet$] $\deg_q Q_i(y_i)< n-\tau-(k^{(i)}-1)$, $\forall i \in \intervallincl{1}{s}$.
\end{itemize}
\end{problem_bold}
Denote the coefficients of the univariate linearized polynomials by 
\begin{equation}\label{eq:coeffs_multivpoly}
Q_0(x) = \sum_{j=0}^{n-\tau-1} q_{0,j}x^{[j]}, \quad Q_i(y_i) = \sum_{j=0}^{n-\tau-k^{(i)}} q_{i,j}y_i^{[j]}, \quad \forall i\in \intervallincl{1}{s}.
\end{equation}
A solution to Problem~\ref{prob:interpolation} can be found by solving a linear system of equations, which is denoted by $\Mat{R} \cdot \vec{q}^T = \0$, where $\vec{g} = \vecelements{g}$ and
$\R$ is an $ n \times \big(n-\tau + \sum_{i=1}^{s}(n-\tau-k^{(i)}+1)\big)$ matrix as follows:
\begin{equation}\label{eq:interpolation_matrix}
\Mat{R} = \left( \Mooremat{n-\tau}{\vec{g}}^T \ \Mooremat{n-\tau-k^{(1)}+1}{\r^{(1)}}^T\ \dots \ \Mooremat{n-\tau-k^{(s)}+1}{\r^{(s)}}^T \right),
\end{equation} 
and $\vec{q} = (q_{0,0} \ \dots \ q_{0,n-\tau-1} \;|\; q_{1,0} \ \dots \ q_{1,n-\tau-k^{(1)}}\;| \;\dots \;|\;q_{s,0} \ \dots \ q_{s,n-\tau-k^{(s)}})$.

 \begin{lemma}\label{lem:nonzero_interpolpoly}
There is a non-zero $Q(x,y_1,\dots,y_s)$, fulfilling the conditions of Problem~\ref{prob:interpolation} if
 \begin{equation}\label{eq:decradius_het}
 \tau < \frac{sn  - \sum_{i=1}^{s}k^{(i)} +s}{s+1}.
 \end{equation}
 \end{lemma}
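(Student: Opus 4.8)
The plan is to reduce the existence of a non-zero interpolation polynomial to a counting argument about a homogeneous linear system. The conditions of Problem~\ref{prob:interpolation} are exactly the statement that the coefficient vector $\vec q$ of $Q$ lies in the right kernel of the matrix $\R$ from \eqref{eq:interpolation_matrix}. A homogeneous linear system $\R\cdot\vec q^T=\0$ has a non-zero solution whenever the number of unknowns strictly exceeds the number of equations (equivalently, whenever the number of columns of $\R$ exceeds its number of rows, since the rank is at most the number of rows). So the core of the proof is a straightforward inequality manipulation.

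First I would count. The number of equations is $n$ (one interpolation constraint per $j\in\intervallexcl{0}{n}$), so $\R$ has $n$ rows. The number of unknowns is the total number of coefficients in $Q_0,Q_1,\dots,Q_s$, namely $(n-\tau) + \sum_{i=1}^{s}(n-\tau-k^{(i)}+1)$ as recorded after \eqref{eq:interpolation_matrix}; of course one must check the degree bounds are non-negative (i.e. $n-\tau-k^{(i)}+1\ge 1$), which follows from $\tau\le n-\max_i k^{(i)}$ in the regime of interest, but if some $Q_i$ is forced to be zero one simply drops those columns and the inequality only becomes easier. Then I would impose
\begin{equation*}
n-\tau + \sum_{i=1}^{s}\bigl(n-\tau-k^{(i)}+1\bigr) > n,
\end{equation*}
which simplifies to $(s+1)\tau < sn - \sum_{i=1}^{s}k^{(i)} + s$, i.e. precisely \eqref{eq:decradius_het}. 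Hence under the stated bound the system is underdetermined and a non-zero $\vec q$, and therefore a non-zero $Q(x,y_1,\dots,y_s)$, exists.

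The only genuinely substantive point — and the one I would state carefully rather than the arithmetic — is the translation between the interpolation constraints and the matrix equation: one must verify that evaluating the linearized polynomial $Q(g_j,r_j^{(1)},\dots,r_j^{(s)})$ and collecting the coefficients $q_{\ell,\cdot}$ yields exactly the $j$-th row of $\R\,\vec q^T$, using that $Q_0(g_j)=\sum_\ell q_{0,\ell}g_j^{[\ell]}$ and $Q_i(r_j^{(i)})=\sum_\ell q_{i,\ell}(r_j^{(i)})^{[\ell]}$, so that the $j$-th row picks out the $j$-th columns of the transposed $q$-Vandermonde blocks in \eqref{eq:interpolation_matrix}. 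This is routine given the definition \eqref{eq:def_Moorematrix} of $\Mooremat{\cdot}{\cdot}$, but it is the place where the shape of $\R$ is justified. I do not anticipate any real obstacle; the lemma is essentially a dimension count, and the ``hard part'' is merely bookkeeping of the degree bounds so that the column count matches the claimed expression.
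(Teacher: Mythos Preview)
Your proposal is correct and follows essentially the same approach as the paper: both reduce the existence of a non-zero $Q$ to the homogeneous system $\R\cdot\vec q^T=\0$ being underdetermined, count $n$ equations against $(n-\tau)+\sum_{i=1}^{s}(n-\tau-k^{(i)}+1)$ unknowns, and rearrange to obtain \eqref{eq:decradius_het}. Your additional remarks on verifying the shape of $\R$ and on the non-negativity of the degree bounds are reasonable bookkeeping that the paper leaves implicit.
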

 \begin{proof}
 The number of linearly independent equations is at most the number of interpolation constraints (i.e., the number of rows of $\R$ in \eqref{eq:interpolation_matrix}), i.e., $n$, and has to be less than the number of unknowns (given by the length of $\vec{q}$) in order to guarantee that there is a non-zero solution:
 \begin{equation*}
   n < n-\tau + \sum\limits_{i=1}^{s} \left( n-\tau-k^{(i)}+1\right)
  \quad \Longleftrightarrow \quad \tau (s+1) < sn +s - \sum\limits_{i=1}^{s} k^{(i)}.
\end{equation*} 
\qed\end{proof}
For the special case $k^{(i)}=k$, $\forall i\in \intervallincl{1}{s}$, this gives  $\tau < s(n-k+1)/(s+1)$.
 
 The unique decoding approaches from \cite {Loidreau_Overbeck_Interleaved_2006,SidBoss_InterlGabCodes_ISIT2010} (see Section~\ref{sec:prelim}) have maximum decoding radius $\tau_{u} = \lfloor{(sn - \sum_{i=1}^{s} k^{(i)})}/{(s+1)}\rfloor$. A comparison to the maximum value of $\tau$, given by Lemma~\ref{lem:nonzero_interpolpoly}, provides the following corollary and shows that our decoding radius is at least the same as $\tau_{u}$.
 \begin{corollary}
 Let $\tau_{u} =  \lfloor {(sn - \sum_{i=1}^{s} k^{(i)})}/{(s+1)} \rfloor$ and let $\tau$ be the greatest integer fulfilling \eqref{eq:decradius_het}. 
 Then, 
 $1\geq\tau - \tau_{u} \geq 0$.
 \end{corollary}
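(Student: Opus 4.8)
The plan is to compare the two quantities directly. Write $M \defeq sn - \sum_{i=1}^{s}k^{(i)}$, so that $\tau_u = \lfloor M/(s+1)\rfloor$, and the defining inequality \eqref{eq:decradius_het} reads $\tau(s+1) < M + s$, i.e. $\tau \leq \lceil (M+s)/(s+1)\rceil - 1$ when $(s+1)\nmid(M+s)$ and $\tau \leq (M+s)/(s+1) - 1$ otherwise; in all cases the greatest such integer is $\tau = \lceil (M+s)/(s+1)\rceil - 1 = \lceil (M - 1)/(s+1)\rceil$ after simplifying, but rather than juggle ceilings I would argue more cleanly via the Euclidean division $M = a(s+1) + b$ with $0 \leq b \leq s$, so $\tau_u = a$.

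First I would handle the lower bound $\tau \geq \tau_u$. With $M = a(s+1)+b$ and $0\le b\le s$, we have $M + s = a(s+1) + (b+s)$ where $0 \le b+s \le 2s < 2(s+1)$; hence $\lfloor (M+s-1)/(s+1)\rfloor \ge a$ because $M + s - 1 \ge a(s+1)$ (using $b + s \ge 1$, which holds as $s \ge 1$). Since $\tau$ is the greatest integer with $\tau(s+1) < M+s$, i.e. $\tau(s+1) \le M + s - 1$, this gives $\tau \ge a = \tau_u$, establishing $\tau - \tau_u \ge 0$.

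Next I would handle the upper bound $\tau - \tau_u \leq 1$, i.e. $\tau \leq a+1$. It suffices to show $(a+2)(s+1) \ge M + s$, since then $a+2$ fails the strict inequality and $\tau$, being the largest integer satisfying it, is at most $a+1$. Indeed $(a+2)(s+1) - (M+s) = (a+2)(s+1) - a(s+1) - b - s = 2(s+1) - b - s = s + 2 - b \ge 2 > 0$ because $b \le s$. Hence $\tau \le a+1 = \tau_u + 1$, and combining the two bounds yields $0 \le \tau - \tau_u \le 1$.

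There is essentially no hard step here — the whole argument is a one-line floor/ceiling manipulation once the Euclidean division of $M$ by $s+1$ is introduced — so the only thing to be careful about is bookkeeping the inequalities $0 \le b \le s$ and the strictness of \eqref{eq:decradius_het} (the "$<$" versus "$\le$"), and the standing assumption $s \ge 1$. I would state the proof in exactly this order: set up the notation and Euclidean division, then the lower bound, then the upper bound.
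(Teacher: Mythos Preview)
Your argument is correct: writing $M = a(s+1)+b$ with $0\le b\le s$ and checking that $a(s+1)<M+s$ while $(a+2)(s+1)\ge M+s$ is exactly the right way to pin down $\tau\in\{a,a+1\}$, and your bookkeeping of the strict inequality and the assumption $s\ge 1$ is sound. The paper itself gives no proof of this corollary --- it is stated as an immediate consequence of comparing \eqref{eq:decradius_het} with the definition of $\tau_u$ --- so your Euclidean-division write-up simply makes explicit what the paper leaves to the reader.
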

 


The following theorem shows that the evaluation words of the interleaved Gabidulin code are a root of any valid interpolation polynomial. 
 \begin{theorem}[Roots of Interpolation Polynomial]\label{theo:interpolpolyiszero}
Let $\c^{(i)} = f^{(i)}(\vec{g})$, where\\ $\deg_q f^{(i)}(x)$ $ < k^{(i)}$, and let $\r^{(i)}  = \c^{(i)}+\e^{(i)}$, $\forall i \in \intervallincl{1}{s}$. 

Let $t= \rk \big(\e^{(1)T} \ \e^{(2)T} \ \dots \ \e^{(s)T} \big) \leq \tau$,
where $\tau$ satisfies \eqref{eq:decradius_het}.
  Let $Q(x,y_1,\dots,y_s) \neq 0$ be given, fulfilling the interpolation constraints from Problem~\ref{prob:interpolation}. Then,
 \begin{equation}\label{eq:defi_px_null}
 F(x) \defeq Q\left(x,f^{(1)}(x), \dots,f^{(s)}(x)\right) = 0.
 \end{equation}
  \end{theorem}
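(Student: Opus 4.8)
The plan is to show that $F(x)$ is a linearized polynomial of $q$-degree less than $n-\tau$ that vanishes on the $(n-\tau)$-dimensional $\Fq$-span of a suitable subset of the $g_j$, forcing $F(x)=0$.

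First I would argue that $F(x)$ is indeed a linearized polynomial and bound its $q$-degree. Since $\deg_q Q_0(x) < n-\tau$ and $\deg_q Q_i(y_i) < n-\tau-(k^{(i)}-1)$ while $\deg_q f^{(i)}(x) < k^{(i)}$, the composition $Q_i(f^{(i)}(x))$ has $q$-degree at most $(n-\tau-k^{(i)}-1+1) + (k^{(i)}-1) = n-\tau-1$. Hence $F(x) = Q_0(x) + \sum_{i=1}^s Q_i(f^{(i)}(x))$ is a linearized polynomial with $\deg_q F(x) < n-\tau$.

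Next I would identify the roots of $F$. Let $\e = (\e^{(1)T} \ \dots \ \e^{(s)T})^T$ have rank $t \leq \tau$, so its column space (equivalently, the row space of the error matrix over $\Fq$) has dimension $t$; equivalently the right kernel of the error matrix, viewed as acting on index columns, has dimension $n-t \geq n-\tau$. Concretely, there is an $\Fq$-subspace $\mathcal{U} \subseteq \Fq^n$ of dimension at least $n-\tau$ such that every vector $\u = (u_0 \ \dots \ u_{n-1}) \in \mathcal{U}$ satisfies $\sum_j u_j e_j^{(i)} = 0$ for all $i$. For such $\u$, set $g_\u \defeq \sum_j u_j g_j$. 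By $\Fq$-linearity of all the involved linearized polynomials and the interpolation constraints $Q(g_j, r_j^{(1)},\dots,r_j^{(s)}) = 0$, we get
\begin{align*}
F(g_\u) &= Q_0(g_\u) + \sum_{i=1}^s Q_i\big(f^{(i)}(g_\u)\big) = \sum_{j=0}^{n-1} u_j \Big( Q_0(g_j) + \sum_{i=1}^s Q_i\big(f^{(i)}(g_j)\big)\Big)\\
&= \sum_{j=0}^{n-1} u_j \Big( Q_0(g_j) + \sum_{i=1}^s Q_i\big(c_j^{(i)}\big)\Big) = \sum_{j=0}^{n-1} u_j \Big( Q_0(g_j) + \sum_{i=1}^s Q_i\big(r_j^{(i)} - e_j^{(i)}\big)\Big).
\end{align*}
Using $Q_i(r_j^{(i)} - e_j^{(i)}) = Q_i(r_j^{(i)}) - Q_i(e_j^{(i)})$ and the interpolation conditions, this reduces to $-\sum_j u_j \sum_i Q_i(e_j^{(i)}) = -\sum_i Q_i\big(\sum_j u_j e_j^{(i)}\big) = 0$. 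So $F$ vanishes on $\{g_\u : \u \in \mathcal{U}\}$.

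Finally, since the $g_j$ are $\Fq$-linearly independent, the map $\u \mapsto g_\u$ is an $\Fq$-linear injection, so $\{g_\u : \u \in \mathcal{U}\}$ is an $\Fq$-subspace of $\Fqm$ of dimension $\geq n-\tau$. A nonzero linearized polynomial of $q$-degree $< n-\tau$ can have at most $q^{n-\tau-1}$ roots and in particular its root space has $\Fq$-dimension $< n-\tau$; since $F$ vanishes on a space of dimension $\geq n-\tau$, we conclude $F(x) = 0$. The main obstacle — really just a bookkeeping point to get right — is confirming that the rank-$t$ condition on the stacked error matrix yields a common right kernel of dimension exactly $n-t$ (so $\geq n-\tau$) for all $s$ elementary error vectors simultaneously; this is precisely the rank-nullity statement recalled in the preliminaries applied to the $t$-dimensional row space of the combined error matrix.
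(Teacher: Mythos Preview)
Your proof is correct and follows the same overall strategy as the paper --- bound $\deg_q F(x) < n-\tau$, exhibit an $\Fq$-subspace of dimension at least $n-\tau$ on which $F$ vanishes, and conclude $F=0$ --- but your execution is more direct. The paper introduces the linearized Lagrange interpolants $\qtrafo{r}^{(i)}(x)$ and $\qtrafo{e}^{(i)}(x)$, forms the auxiliary polynomial $R(x) = Q(x,\qtrafo{r}^{(1)}(x),\dots,\qtrafo{r}^{(s)}(x))$, and then invokes an appendix lemma on row spaces of compositions (Lemma~\ref{lem:eval_row_space}) to bound $\rk(F(\vec{g})) \leq t$; from there it passes to the root-space dimension. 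You instead work on the dual side from the start: you take $\u$ in the right kernel of the stacked error matrix and compute $F(g_\u)=0$ by a single chain of $\Fq$-linearity and the interpolation constraints. This bypasses both the Lagrange machinery and Lemma~\ref{lem:eval_row_space} entirely, at the cost of nothing --- the kernel argument is exactly the rank argument unwound. The paper's formulation has the minor advantage that the row-space lemma is reused later (in the error-erasure section), so it amortizes; yours is self-contained and shorter for this theorem alone.
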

\begin{proof}
Define $\qtrafo{r}^{(i)}(x)$ and $\qtrafo{e}^{(i)}(x)$ such that $\qtrafo{r}^{(i)}(g_j) = r_j^{(i)}$ and $\qtrafo{e}^{(i)}(g_j) = e_j^{(i)}=r_j^{(i)}-c_j^{(i)}$, $\forall j\in \intervallexcl{0}{n}$ and $\forall i \in \intervallincl{1}{s}$ 
as in \eqref{eq:received_lagrange}, \eqref{eq:linearized_Lagrange_basis_poly}. 
Further, denote $R(x) \defeq Q\big(x,\qtrafo{r}^{(1)}(x), \dots,\qtrafo{r}^{(s)}(x)\big)$. 
Since all polynomials are linearized, 
\begin{align*}
&R(x) - F(x) =\\ 
&Q\big(0,\qtrafo{e}^{(1)}(x), \dots,\qtrafo{e}^{(s)}(x)\big) = Q_1\big(\qtrafo{e}^{(1)}(x)\big)+ Q_2\big(\qtrafo{e}^{(2)}(x)\big)+ \dots + Q_s\big(\qtrafo{e}^{(s)}(x)\big).
\end{align*}
Then, $R\big(\vec{g}\big) -F\big(\vec{g}\big) =$
\begin{align*}
&\sum\limits_{i=1}^{s} Q_i\big(\qtrafo{e}^{(i)}(\vec{g})\big) = \sum\limits_{i=1}^{s} Q_i\big(\e^{(i)}\big) = 
\Big(\sum\limits_{i=1}^{s} Q_i(e^{(i)}_0) \ \sum\limits_{i=1}^{s} Q_i(e^{(i)}_1) \ \dots \ \sum\limits_{i=1}^{s} Q_i(e^{(i)}_{n-1})\Big).
\end{align*}
Lemma~\ref{lem:eval_row_space} in the appendix shows that the row spaces fulfill 
\begin{equation*}
\Rowspace{\sum_{i=1}^{s} Q_i\big(\e^{(i)}\big)} \subseteq \Rowspace{(\e^{(1)T} \ \e^{(2)T} \ \dots \ \e^{(s)T})^T}.
\end{equation*}
Because of the interpolation constraints, we obtain $R(\vec{g}) = \0$ and hence, $\rk\left(F(\vec{g})\right) = \rk(\sum_{i=1}^{s} Q_i(\e^{(i)})) \leq \rk(\e^{(1)T} \ \e^{(2)T} \ \dots \ \e^{(s)T} )=t \leq \tau$. 

If $\rk(F(\vec{g})) \leq \tau$, the dimension of the root space of $F(x)$ in $\Fqm$ has to be at least $n-\tau$, which is only possible if its $q$-degree is at least $n-\tau$. However, $\deg_q F(x) \leq n-\tau-1$ due to the interpolation constraints and therefore $F(x) = 0$.
\qed\end{proof}

The interpolation step can be accomplished by solving the linear system of equations based on the matrix $\R$ from \eqref{eq:interpolation_matrix}, which requires cubic complexity in $\Fqm$ with Gaussian elimination. 
Instead of this, it seems that the efficient linearized interpolation from \cite {Xie2011General} can be used and the complexity of the interpolation step can be reduced to $\mathcal O(s^2 n (n-\tau))$ operations over $\Fqm$. 


\subsection{Root-Finding Step}\label{subsec:rootfinding}

Given $Q(x,y_1,\dots,y_s)$, fulfilling the constraints of Problem~\ref{prob:interpolation}, the task of the root-finding step is to find all tuples $(f^{(1)}(x), f^{(2)}(x),\dots, f^{(s)}(x))$ such that
\begin{equation*}
 F(x) = Q_0(x) +Q_1\big(f^{(1)}(x)\big)+Q_2\big(f^{(2)}(x)\big) + \dots + Q_s\big(f^{(s)}(x)\big) = 0.
\end{equation*}
The important observation is that this is a linear system of equations over $\Fqm$ in the coefficients of $f^{(1)}(x), f^{(2)}(x),\dots, f^{(s)}(x)$. This is similar to the root-finding step of Guruswami and Wang for folded/derivative Reed--Solomon codes \cite{Guruswami2011Linearalgebraic,GuruswamiWang-LinearAlgebraicForVariantsofReedSolomonCodes_2012} and to Mahdavifar and Vardy for folded Gabidulin codes \cite{Mahdavifar2012Listdecoding}. 
Recall for this purpose that $(a +b )^{[i]} = a^{[i]} + b^{[i]}$ for any $a,b \in \Fqm$ and any integer $i$.

\begin{example_bold}[Root-Finding]
Let $s=2$, $n=m=7$, $k^{(1)}=k^{(2)}=2$ and $\tau = 3$. Find all pairs $(f^{(1)}(x), f^{(2)}(x))$ with $\deg_q f^{(1)}(x),\deg_q f^{(2)}(x)<2$ such that $F(x)  = F_0 x^{[0]} +F_1 x^{[1]}+\dots+F_{n-\tau-1} x^{[n-\tau-1]} = 0$. 
Due to the constraints of Problem~\ref{prob:interpolation}, $\deg_q F(x) \leq n-\tau-1 = 3$. Thus, 
\begin{align*}
F_0 &= 0 = q_{0,0} + q_{1,0} f^{(1)}_0+ q_{2,0} f^{(2)}_0,\\
F_1 &= 0 = q_{0,1} + q_{1,1} f_0^{(1)[1]} + q_{1,0} f^{(1)}_1+ q_{2,1} f_0^{(2)[1]} + q_{2,0} f^{(2)}_1,\\
F_2 & = 0 = q_{0,2} + q_{1,2} f_0^{(1)[2]} + q_{1,1} f^{(1)[1]}_1+ q_{2,2} f_0^{(2)[2]} + q_{2,1} f^{(2)[1]}_1,\\
F_{3} &= 0 = q_{0,3} + q_{1,2} f_{1}^{(1)[2]} + q_{2,2} f_{1}^{(2)[2]}.
\end{align*}
Therefore, given $Q(x,y_1,y_2)$, we can calculate the coefficients of all possible pairs $f^{(1)}(x)$, $f^{(2)}(x) $ of $q$-degree less than two by the following linear system of equations:
\begin{equation}\label{eq:factorization_system_s2}
\begin{pmatrix}
q_{1,0} & q_{2,0}&	&\\
q_{1,1}^{[-1]} & q_{2,1}^{[-1]} & q_{1,0}^{[-1]}&q_{2,0}^{[-1]}\\
q_{1,2}^{[-2]} & q_{2,2}^{[-2]} & q_{1,1}^{[-2]}&q_{2,1}^{[-2]}\\
&&q_{1,2}^{[-3]} & q_{2,2}^{[-3]}\\
\end{pmatrix}
\cdot
\begin{pmatrix}
f^{(1)}_0\\
f^{(2)}_0\\
f_1^{(1)[-1]}\\
f_1^{(2)[-1]}\\
\end{pmatrix}
=
\begin{pmatrix}
-q_{0,0}\\
-q_{0,1}^{[-1]}\\
-q_{0,2}^{[-2]}\\
-q_{0,3}^{[-3]}\\
\end{pmatrix}.
\end{equation}
\end{example_bold}
In order to set up \eqref{eq:factorization_system_s2} in general, we can use more than one $Q(x,y_1,\dots,y_s)$. Namely, we can use all polynomials corresponding to different basis vectors of the solution space of the interpolation step. This also decreases the probability that the system of equations for the root-finding step does not have full rank (see also Section~\ref{subsec:intgab_uniquedecoding}). 
In order to calculate the dimension of the solution space of the interpolation step, denoted by $d_I$, we need the rank of the interpolation matrix.
\begin{lemma} 
Let $\rk\big(\e^{(1)T} \ \e^{(2)T} \ \dots \ \e^{(s)T} \big) = t \leq \tau$,
where $\tau$ satisfies \eqref{eq:decradius_het}.
Then, for the interpolation matrix from \eqref{eq:interpolation_matrix}, $\rk(\R) \leq n-\tau + t$ holds.
\end{lemma}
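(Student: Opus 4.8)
The plan is to bound $\rk(\R)=\dim_{\Fqm}\col(\R)$ by exhibiting a single $\Fqm$-subspace $W\subseteq\Fqm^n$ of dimension at most $n-\tau+t$ that contains every column of $\R$. Writing $\R$ from \eqref{eq:interpolation_matrix} as the horizontal concatenation of the block $\Mooremat{n-\tau}{\vec{g}}^T$ and the $s$ blocks $\Mooremat{n-\tau-k^{(i)}+1}{\r^{(i)}}^T$, I first set $U\defeq\col\big(\Mooremat{n-\tau}{\vec{g}}^T\big)$. Since $g_0,\dots,g_{n-1}$ are linearly independent over $\Fq$, the matrix $\Mooremat{n-\tau}{\vec{g}}$ has rank $\min\{n-\tau,n\}=n-\tau$ (the fact cited right after \eqref{eq:def_Moorematrix}), hence $\dim_{\Fqm}U=n-\tau$, and $U$ is spanned by the columns $(g_0^{[j]}\ \dots\ g_{n-1}^{[j]})^T$ for $j\in\intervallexcl{0}{n-\tau}$.

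Next I would split the remaining columns via $\r^{(i)}=\c^{(i)}+\e^{(i)}$. The codeword part is harmless: from $\c^{(i)}=f^{(i)}(\vec{g})$ with $\deg_q f^{(i)}(x)<k^{(i)}$ one gets $c_j^{(i)[l]}=\sum_{u=0}^{k^{(i)}-1}f_u^{(i)[l]}g_j^{[u+l]}$, and for a column index $l\in\intervallexcl{0}{n-\tau-k^{(i)}+1}$ the exponent $u+l$ never exceeds $(k^{(i)}-1)+(n-\tau-k^{(i)})=n-\tau-1$, so $(c_0^{(i)[l]}\ \dots\ c_{n-1}^{(i)[l]})^T\in U$. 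For the error part I would use the same decomposition underlying the proof of Theorem~\ref{theo:interpolpolyiszero} and the discussion around \eqref{eq:intab_lo_kernel}: choose an $\Fq$-basis $\b_1,\dots,\b_t\in\Fq^n$ of the common row space $\Rowspace{(\e^{(1)T}\ \dots\ \e^{(s)T})^T}$, which has $\Fq$-dimension $t$ by hypothesis, and write $\e^{(i)}=\sum_{u=1}^{t}a_u^{(i)}\b_u$ with scalars $a_u^{(i)}\in\Fqm$. Because each entry $b_{u,j}$ lies in $\Fq$ and is therefore fixed by $x\mapsto x^{[l]}$, we get $e_j^{(i)[l]}=\sum_{u=1}^{t}(a_u^{(i)})^{[l]}b_{u,j}$, so every vector $(e_0^{(i)[l]}\ \dots\ e_{n-1}^{(i)[l]})^T$ lies in $\mathrm{span}_{\Fqm}\{\b_1^T,\dots,\b_t^T\}$, a subspace of dimension at most $t$.

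Combining the two contributions, every column of every block of $\R$ lies in $W\defeq U+\mathrm{span}_{\Fqm}\{\b_1^T,\dots,\b_t^T\}$, whence $\rk(\R)=\dim_{\Fqm}\col(\R)\le\dim_{\Fqm}W\le(n-\tau)+t$. I expect the only slightly delicate point to be making the factorization $\e^{(i)}=\sum_u a_u^{(i)}\b_u$ over a \emph{common} $\Fq$-rational frame $\{\b_u\}$ precise; this is exactly the standard correspondence between the $\Fq$-rank of a vector over $\Fqm$ and such a factorization, applied here simultaneously to all $s$ error vectors through the stacked matrix $(\e^{(1)T}\ \dots\ \e^{(s)T})^T$ of rank $t$. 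Once that is in hand, checking the exponent bound $u+l\le n-\tau-1$ that keeps the codeword columns inside $U$ is the only remaining verification, and everything else is bookkeeping.
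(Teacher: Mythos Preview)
Your proposal is correct and follows essentially the same approach as the paper's proof: both exploit that the codeword columns of each block $\Mooremat{n-\tau-k^{(i)}+1}{\r^{(i)}}^T$ can be absorbed into the column span of $\Mooremat{n-\tau}{\vec{g}}^T$ (the paper phrases this as ``subtracting the codewords and their $q$-powers'' via column operations, you phrase it as membership in $U$), leaving only the error contribution, which is confined to a $t$-dimensional space because all $\e^{(i)}$ share a common $\Fq$-row-space basis. Your write-up is more explicit about the exponent check $u+l\le n-\tau-1$ and the common factorization $\e^{(i)}=\sum_u a_u^{(i)}\b_u$, but the underlying argument is the same.
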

\begin{proof}
The first $k^{(i)}$ columns of $\Mat{R}$ contain the generator matrices of the Gabidulin codes $\Gab{n,k^{(i)}}$. For calculating the rank of $\Mat{R}$, we can subtract the codewords and their $q$-powers from the $s$ right submatrices such that
these submatrices only depend on the error. 
Hence, the rank of $\Mat{R}$ depends on $\rk(\Mooremat{n-\tau}{\vec{g}})$, which is $n-\tau$, and on the rank of the error matrix, which is $t$. 
Hence, $\rk(\Mat{R}) \leq n-\tau + t$.
\qed\end{proof}
The dimension of the solution space of the interpolation step is therefore:
\begin{align}
d_I \defeq \dim \ker (\R) &\geq (s+1)(n-\tau) - \sum\limits_{i=1}^{s}(k^{(i)}-1)-(n-\tau+t)\nonumber\\
 &= s(n-\tau+1)-\sum\limits_{i=1}^{s}k^{(i)}-t,\label{eq:intgab_dimsolspace_interpolation}
\end{align}
and for $k^{(i)} = k$, $\forall i \in \intervallincl{1}{s}$, we obtain
$d_I \geq s(n-\tau -k+1)-t$.

In the following, let $Q^{(h)}(x,y_1,\dots,y_s)$, $\forall h\in\intervallincl{1}{d_I}$, denote the interpolation polynomials corresponding to different basis vectors of the solution space of the interpolation step. 
We denote the following matrices:
\begin{equation}\label{eq:intgab_notation_Qij}
\Q_j^{[i]} \defeq \begin{pmatrix}
q^{(1)[i]}_{1,j} &q^{(1)[i]}_{2,j} & \dots & q^{(1)[i]}_{s,j}\\
q^{(2)[i]}_{1,j} &q^{(2)[i]}_{2,j} & \dots & q^{(2)[i]}_{s,j}\\
\vdots &\vdots&\ddots& \vdots\\
 q^{(d_I)[i]}_{1,j} &  q^{(d_I)[i]}_{2,j} &\dots & q^{(d_I)[i]}_{s,j}\\
\end{pmatrix},
\quad
\f_j^{[i]} \defeq
\begin{pmatrix}
f_j^{(1)[i]}\\
f_j^{(2)[i]}\\
\vdots\\
f_j^{(s)[i]}\\
\end{pmatrix}, 
\quad
\vec{q}_{0,j}^{[i]} \defeq
\begin{pmatrix}
q_{0,j}^{(1)[i]}\\
q_{0,j}^{(2)[i]}\\
\vdots\\
q_{0,j}^{(d_I)[i]}\\
\end{pmatrix}. 
\end{equation}
The linear system of equations for finding the roots of $Q(x,y_1,\dots,y_s)$, where $k = \max_i \{k^{(i)}\}$, is: 
\begin{align}
&Q^{(h)}\big(x,f^{(1)}(x), \dots,f^{(s)}(x)\big) =\\
&\hspace{8ex}Q_0^{(h)}(x) +Q_1^{(h)}(f^{(1)}(x)) + \dots + Q_s^{(h)}(f^{(s)}(x)) = 0, \quad \forall h \in \intervallincl{1}{d_I} \nonumber\\
 & \hspace{30ex}\Longleftrightarrow \nonumber\\
&\renewcommand{\arraystretch}{1.4}
\begin{pmatrix}
\Q_0^{[0]}\\
\Q_1^{[-1]} & \Q_0^{[-1]}\\
\Q_2^{[-2]} &\Q_1^{[-2]} & \Q_0^{[-2]}\\
\ddots & \ddots & \ddots\\
&\ddots & \ddots & \ddots\\
&\Q_{n-\tau-k}^{[-(n-\tau-3)]} &\Q_{n-\tau-k-1}^{[-(n-\tau-3)]} & \Q_{n-\tau-k-2}^{[-(n-\tau-3)]}\\
&&\Q_{n-\tau-k}^{[-(n-\tau-2)]} &\Q_{n-\tau-k-1}^{[-(n-\tau-2)]} \\
&&&\Q_{n-\tau-k}^{[-(n-\tau-1)]}  \\
\end{pmatrix}\label{eq:factorization_system_submat}
\cdot
\begin{pmatrix}
\f_0\\
\f_1^{[-1]}\\
\vdots\\
\f_{k-1}^{[-(k-1)]}\\
\end{pmatrix}
=
\begin{pmatrix}
-\vec{q}_{0,0}\\
-\vec{q}_{0,1}^{[-1]}\\
\vdots\\
-\vec{q}_{0,n-\tau-1}^{[-(n-\tau-1)]}\\
\end{pmatrix}, \\[-3ex]
&\underbrace{\phantom{\hspace{51ex}}}_{} \hspace{2ex} \underbrace{\phantom{\hspace{11ex}}}_{} \hspace{5ex}\underbrace{\phantom{\hspace{14ex}}}_{}\nonumber\\[-2ex]
&\hspace{25ex}\Mat{Q}\hspace{24.5ex} \cdot \hspace{6ex}\vec{f}\hspace{5.5ex} = \hspace{7ex}\vec{q}_0\nonumber
\renewcommand{\arraystretch}{1}
\end{align}
where $\Q$ is an $((n-\tau)d_I)\times sk$ matrix and where we assume that $f^{(i)}_j = 0$ if $j \geq k^{(i)}$ and $q_{i,j} = 0$ when $j \geq n- \tau - k^{(i)}$, $\forall i\in\intervallincl{1}{s}$. 

\begin{lemma}[Complexity of the Root-Finding Step]
Let $Q^{(h)}(x,y_1,\dots,y_s)$, $\forall h \in \intervallincl{1}{d_I}$, be given, satisfying the interpolation constraints from Problem~\ref{prob:interpolation}.  
Then, the basis of the subspace, containing the coefficients of all tuples $(f^{(1)}(x), \dots, f^{(s)}(x))$ such that 
\begin{equation*}
F(x) = Q\left(x,f^{(1)}(x), \dots,f^{(s)}(x)\right) 
 = 0,
\end{equation*}
can be found recursively with complexity at most $\mathcal O(s^3 k^2)$ operations in $\Fqm$.
\end{lemma}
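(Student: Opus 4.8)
The plan is to exploit the block lower-triangular, banded shape of the matrix $\Q$ in \eqref{eq:factorization_system_submat} and to solve the system $\Q\vec{f} = \vec{q}_0$ by block-wise forward substitution, determining the coefficient blocks $\f_0, \f_1^{[-1]}, \dots, \f_{k-1}^{[-(k-1)]}$ one after another. The first block row couples only to $\f_0$, namely $\Q_0^{[0]}\f_0 = -\vec{q}_{0,0}$, so it pins down $\f_0$ as an affine subspace of $\Fqm^s$ (carrying free parameters when $\Q_0^{[0]}$ is not of full column rank); in general, the $j$-th block row couples the \emph{new} block $\f_j^{[-j]}$ only to the already-computed blocks $\f_0,\dots,\f_{j-1}^{[-(j-1)]}$, so after moving the known contributions to the right-hand side it leaves a linear system $\Q_0^{[-j]}\f_j^{[-j]} = \vec{b}_j$ in the $s$ entries of $\f_j^{[-j]}$. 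Since $f^{(i)}_j=0$ for $j\ge k^{(i)}$ and $q_{i,j}=0$ for $j\ge n-\tau-k^{(i)}$, the only block rows that introduce new unknowns are the first $k$ of them, and these already involve all $sk$ unknowns; hence the recursion is run for $j=0,\dots,k-1$ only, and the affine space it produces already contains every tuple $(f^{(1)}(x),\dots,f^{(s)}(x))$ with $F(x)=0$, as claimed. This is exactly what makes the cost independent of $n$.

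For the running time I would account the $k$ steps separately. In step $j$, forming the right-hand side $\vec{b}_j$ means subtracting at most $k$ products $\Q_{j-\ell}^{[-j]}\f_\ell^{[-\ell]}$; each of these matrices is (after restricting to a bounded number of the interpolation polynomials $Q^{(h)}$) of size $\mathcal{O}(s)\times s$, so each product costs $\mathcal{O}(s^2)$ operations in $\Fqm$ and the whole update costs $\mathcal{O}(k s^2)$. Solving the remaining $\mathcal{O}(s)\times s$ system for $\f_j^{[-j]}$, e.g.\ by Gaussian elimination, costs $\mathcal{O}(s^3)$. The $q$-power shifts $(\cdot)^{[-j]}$ appearing throughout act $\Fq$-linearly on $\Fqm$ and involve no multiplication in $\Fqm$ (with a normal basis they are cyclic shifts), so they are a lower-order term. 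Summing $\mathcal{O}(s^3 + k s^2)$ over the $k$ steps yields $\mathcal{O}(k s^3 + k^2 s^2)$, which is $\mathcal{O}(s^3 k^2)$ operations in $\Fqm$.

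The main obstacle is to justify that a bounded number of the interpolation polynomials $Q^{(h)}$ — in fact $\mathcal{O}(s)$ of them, generically $s$ — already suffices for the root-finding, so that the matrices $\Q_j^{[i]}$ entering each step may be taken of size $\mathcal{O}(s)\times s$ rather than $d_I\times s$ (recall from \eqref{eq:intgab_dimsolspace_interpolation} that $d_I$ can grow with $n$). One has to argue that keeping a row basis of each block row — which never needs more than $s$ rows, since that row couples only to the $s$-dimensional block $\f_j^{[-j]}$ — discards no constraint needed to describe the root space, and that extracting such a row basis is itself cheap once the recursion is arranged. The remaining work is routine index bookkeeping: for each $j$, determining which shifted blocks $\Q_{j-\ell}^{[-j]}$ are nonzero (the band endpoints $\max(0,j-(n-\tau-k^{(i)}))\le \ell\le \min(j,k^{(i)}-1)$) and checking that the accumulated $q$-power offsets are consistent across the steps.
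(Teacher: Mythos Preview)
Your approach is essentially the same as the paper's: block forward substitution along the lower-triangular structure of $\Q$, solving a $d_I\times s$ system for each $\f_j$ after subtracting the contributions of $\f_0,\dots,\f_{j-1}$, with $q$-powers treated as negligible. The paper does not address the obstacle you flag about restricting to $\mathcal O(s)$ interpolation polynomials; it simply writes ``$s\,d_I\approx s^2$'' and proceeds, so your worry is well placed but goes beyond what the paper's own proof supplies.
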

\begin{proof}
The complexity of calculating $q$-powers is negligible (compare e.g., \cite{Gadouleau_complexity}). 
The solution of \eqref{eq:factorization_system_submat} can be found by the following recursive procedure. 
First, solve the linear system of equations $\Q_0^{[0]} \cdot \f_0 = -\vec{q}_{0,0}$ of size $d_I \times s$ for $\f_0$ with complexity at most $\mathcal O(s^3)$ using Gaussian elimination.
Afterwards, calculate $\Q_1^{[-1]} \cdot \f_0$ with $sd_I \approx s^2$ multiplications over $\Fqm$ and solve the system $\Q_1^{[-1]} \cdot \f_0 + \Q_0^{[-1]} \cdot\f_1^{[-1]} = -\vec{q}_{0,1}^{[-1]} $ for $\f_1$ with complexity at most $\mathcal O(s^3)$ operations.
We continue this until we obtain all coefficients of $f^{(1)}(x), \dots, f^{(s)}(x)$, where for $\f_j$, we first have to calculate $(j-1)\cdot s \cdot d_I $ multiplications over $\Fqm$ and solve a $d_I \times s$ linear system of equations. Hence, the overall complexity for the root-finding step is upper bounded by
$\sum_{j=1}^{k} \left((j-1)\cdot s \cdot d_I + s^3\right) \leq \mathcal O(s^2k^2 + s^3k) \leq \mathcal O(s^3k^2)$
operations over $\Fqm$.
\qed\end{proof}

\section{Decoding Approaches}\label{sec:interpolation_decoders}
The decoding principle from the previous section can be used as a list decoding algorithm, returning all codewords of the interleaved Gabidulin code in rank distance at most $\tau$ from the received word, where $\tau$ satisfies \eqref{eq:decradius_het} (described in Subsection~\ref{subsec:intgab_list_decoding}), or as a probabilistic unique decoding algorithm (described in Section~\ref{subsec:intgab_uniquedecoding}).

\subsection{A List Decoding Approach}\label{subsec:intgab_list_decoding}
Our decoding approach for interleaved Gabidulin codes can be seen as a list decoding algorithm, consisting of solving two linear systems of equations. 
\begin{lemma}[Maximum List Size]
Let $\r^{(i)}$, $\forall i\in\intervallincl{1}{s}$, be given and let 
$\tau$ satisfy \eqref{eq:decradius_het}. 
Then, the list size $\ell_I$, i.e., the number of codewords from $\IntGab{s;n,k^{(1)},\dots,k^{(s)}}$ over $\Fqm$ in rank distance at most $\tau$ to $\r = (\r^{(1)T} \ \r^{(2)T} \ \dots \ \r^{(s)T} )^T$,
is upper bounded by:
\begin{equation*}
\ell_I  \defeq  \max_{\r \in \Fqm^{s \times n}}\left\{\big|\IntGab{s;n,k^{(1)},\dots,k^{(s)}} \cap \Ball{\tau}{\r}\big|\right\}\leq q^{m\left(\sum_{i=1}^{s}k^{(i)}- \min_i\{k^{(i)}\}\right)}.
\end{equation*}
\end{lemma}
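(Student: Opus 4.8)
I want to bound the number of codewords of the interleaved Gabidulin code lying in a rank ball of radius $\tau$ around an arbitrary received word $\r$. The natural idea is to use the interpolation–root-finding machinery already developed: any codeword $(f^{(1)}(\vec g),\dots,f^{(s)}(\vec g))$ within rank distance $\tau$ of $\r$ (so that $t\le\tau$ and \eqref{eq:decradius_het} applies) satisfies $Q(x,f^{(1)}(x),\dots,f^{(s)}(x))=0$ for \emph{every} interpolation polynomial $Q$ solving Problem~\ref{prob:interpolation}, by Theorem~\ref{theo:interpolpolyiszero}. Hence the list is contained in the solution set of the root-finding linear system \eqref{eq:factorization_system_submat}, i.e. in an affine subspace (a coset of $\ker\Q$) of the space of coefficient tuples $(f^{(1)},\dots,f^{(s)})$. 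So the list size is at most $q^{m\dim\ker\Q}$, and the job reduces to showing $\dim\ker\Q \le \sum_{i=1}^s k^{(i)} - \min_i\{k^{(i)}\}$.

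**Bounding the kernel dimension.** The unknowns in \eqref{eq:factorization_system_submat} are the coefficients $f_j^{(i)}$ with $j<k^{(i)}$, so there are $\sum_{i=1}^s k^{(i)}$ of them. It therefore suffices to show the matrix $\Q$ has rank at least $\min_i\{k^{(i)}\}$. Reorder the indices so that $k^{(1)}=\min_i\{k^{(i)}\}$ (WLOG, relabel). Look at the block structure: the first block-row is $\Q_0^{[0]}$, which is the $d_I\times s$ matrix whose $i$-th column is built from the leading coefficients $q^{(h)}_{i,0}$ of the $Q_i^{(h)}$. The key point is that at least one column of the coefficient part — say the one corresponding to the index $i_0$ achieving the minimal $k^{(i)}$ — together with the staircase shift structure, forces $\min_i\{k^{(i)}\}$ linearly independent rows. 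Concretely, consider the $\min_i\{k^{(i)}\}$ successive block-rows of $\Q$ and restrict attention to the $s$ columns of $\f_0^{[0]}, \f_1^{[-1]},\dots$; the leading (top-left) $\Q_0$-blocks appear on a staircase diagonal, so if $\Q_0$ has rank $\ge 1$ — which it does, since at least one $Q_i^{(h)}$ is nonzero (else all $Q^{(h)}=Q_0^{(h)}$ are univariate of $q$-degree $<n-\tau$, contradicting $Q^{(h)}\ne 0$ paired with the root property forcing $Q_0^{(h)}=0$ too) — one extracts at least one independent row from each of the first $\min_i\{k^{(i)}\}$ staircase levels, giving $\rk\Q \ge \min_i\{k^{(i)}\}$.

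**Where the difficulty lies.** The genuinely delicate step is the rank lower bound $\rk\Q\ge\min_i\{k^{(i)}\}$: one must argue that the staircase pattern of $\Q_0$-blocks along the diagonal really does contribute $\min_i\{k^{(i)}\}$ independent rows, which requires knowing $\Q_0^{[0]}\ne 0$ and handling the possibility that several columns of the coefficient matrices vanish. I expect the cleanest route is to reduce to the worst case where all $k^{(i)}=k$ are equal to the minimum (padding the shorter ones does not decrease the true list), and in that case exhibit $k$ rows — one from each of the $k$ diagonal occurrences of $\Q_0$ — that are independent because a single nonzero entry of $\Q_0$ propagates down the diagonal into distinct $\f_j$-columns. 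One also needs the side remark that some $Q_i^{(h)}$ with $i\ge 1$ is nonzero: if every $Q^{(h)}$ had $Q_i^{(h)}=0$ for all $i\ge1$, then $Q^{(h)}(x,\dots)=Q_0^{(h)}(x)$ would have to vanish on all the $g_j$ while having $q$-degree $<n-\tau\le n$, forcing $Q_0^{(h)}=0$ and hence $Q^{(h)}=0$, contradicting that the $Q^{(h)}$ are nonzero basis vectors. With that in hand, $\rk\Q\ge\min_i\{k^{(i)}\}$, so $\dim\ker\Q\le\sum_i k^{(i)}-\min_i\{k^{(i)}\}$, and the bound $\ell_I\le q^{m(\sum_i k^{(i)}-\min_i\{k^{(i)}\})}$ follows.
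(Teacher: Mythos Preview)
Your overall strategy matches the paper's: bound the list by the affine solution set of the root-finding system \eqref{eq:factorization_system_submat}, then show $\rk(\Q)\ge\min_i\{k^{(i)}\}$. You also correctly justify that some $Q_i^{(h)}$ with $i\ge 1$ is nonzero (otherwise $Q^{(h)}=Q_0^{(h)}$ would vanish on the $n$ independent points $g_j$ with $q$-degree $<n-\tau$, forcing $Q^{(h)}=0$).

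The gap is in your step from ``some $Q_i^{(h)}\neq 0$'' to ``$\Q_0$ has rank $\ge 1$''. The block $\Q_0$ contains only the \emph{coefficients of $y_i^{[0]}$}, i.e.\ the entries $q_{i,0}^{(h)}$. A nonzero $Q_i^{(h)}$ may well have $q_{i,0}^{(h)}=0$ (its lowest nonzero coefficient could be $q_{i,j}^{(h)}$ for some $j>0$), in which case $\Q_0$ can be the zero matrix and your staircase argument based on the diagonal $\Q_0$-blocks collapses. Nothing in Problem~\ref{prob:interpolation} forces any interpolation polynomial to have a nonzero $q_{i,0}$ term.

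The paper's proof fixes exactly this: pick $i$ with $Q_i\neq 0$, let $j$ be the \emph{smallest} index with $q_{i,j}\neq 0$, and then look at the columns of $\Q$ corresponding to $f^{(i)}$. In those columns, the entries $q_{i,j}^{(h)[-j]},q_{i,j}^{(h)[-(j+1)]},\dots,q_{i,j}^{(h)[-(j+k^{(i)}-1)]}$ sit on the diagonal of a $k^{(i)}\times k^{(i)}$ lower triangular submatrix (the shift by $j$ is harmless because $j\le n-\tau-k^{(i)}$, so all $k^{(i)}$ block-rows $j,\dots,j+k^{(i)}-1$ exist). This gives $\rk(\Q)\ge k^{(i)}\ge\min_i\{k^{(i)}\}$. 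Your argument becomes correct once you replace $\Q_0$ by $\Q_j$ for this minimal $j$ and trace the staircase from block-row $j$ onward.
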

\begin{proof}
The list size can be upper bounded by the maximum number of solutions of the root-finding step \eqref{eq:factorization_system_submat}.
There exists an integer $i\in \intervallincl{1}{s}$ such that $Q_i(x) \neq 0$, since $Q(x, y_1,\dots,y_s) \neq 0$.
Note that $Q_0(x)\neq 0$ and $Q_i(x) = 0$ , $\forall i \in \intervallincl{1}{s}$, is not possible since $(\Mooremat{n-\tau}{\vec{g}})^T$ is a full-rank matrix.

Hence, let $i \in \intervallincl{1}{s}$ be such that $Q_i(x) \neq 0$ and let $j$ be the smallest integer such that $q_{i,j} \neq 0$.
Consider the submatrix of $\Q$, which consists of the columns corresponding to the coefficients of $f^{(i)}(x)$. 
For some $h \in \intervallincl{1}{s}$, this submatrix contains at least one $k^{(i)} \times k^{(i)}$ lower triangular matrix 
with $q^{(h)[-j]}_{i,j},q^{(h)[-(j+1)]}_{i,j},\dots,$ $ q^{(h)[-(j+k^{(i)}-1)]}_{i,j}$ on the diagonal.
Therefore, $\rk(\Q)\geq \min_i\{k^{(i)}\}$ and the dimension of the solution space is at most $\sum_{i=1}^{s}k^{(i)}- \min_i\{k^{(i)}\}$. 
\qed\end{proof}
It is not clear whether the list size can really be that large. 
Finding the actual list of codewords out of the solution space of \eqref{eq:factorization_system_submat} further reduces the list size.


When $\ell_I>1$, the system of equations for the root-finding step \eqref{eq:factorization_system_submat} cannot have full rank. 
The following lemma estimates the average list size. For most parameters, this value is almost one (see Example~\ref{ex:failure_prob}).
The proof proceeds similar to McEliece's proof for the average list size in the Guruswami--Sudan algorithm \cite {McEliece_OntheaveragelistsizefortheGuruswami-Sudandecoder_2003}.
\begin{lemma}[Average List Size]\label{lem:average_listsize} 
Let   $\c^{(i)} = f^{(i)}(\vec{g})$, $\forall i \in \intervallincl{1}{s}$, where $\deg_q f^{(i)}(x)$ $ < k^{(i)}$ and let $\r^{(i)}  = \c^{(i)}+\e^{(i)}$. 
Let $\rk(\e^{(1)T} \ \e^{(2)T} \ \dots \ \e^{(s)T})  = t\leq \tau$ and let $\tau$ satisfy \eqref{eq:decradius_het}. 
Then, the average list size, i.e., the average number of codewords $(\c^{(1)T} \ \c^{(2)T} \ \dots \ \c^{(s)T} )^T \in \IntGab{s;n,k^{(1)},\dots,k^{(s)}}$ such that 
\begin{equation*}
\rk\left((\r^{(1)T} \ \r^{(2)T} \ \dots \ \r^{(s)T} )- (\c^{(1)T} \ \c^{(2)T} \ \dots \ \c^{(s)T} ) \right)\leq \tau,
\end{equation*}
is upper bounded by
\begin{equation*}
\overline {\ell_I} < 1 + 4\left(q^{m\sum_{i=1}^{s}k^{(i)}}-1\right)q^{(sm+n)\tau-\tau^2-smn}.
\end{equation*}
\end{lemma}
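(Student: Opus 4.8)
The plan is to adapt McEliece's averaging argument for the Guruswami--Sudan list size \cite{McEliece_OntheaveragelistsizefortheGuruswami-Sudandecoder_2003} to the rank metric. Identify an $s$-tuple $(\cdot^{(1)T}\ \dots\ \cdot^{(s)T})^T\in\Fqm^{s\times n}$ with the corresponding $sm\times n$ matrix over $\Fq$. By $\Fqm$-linearity of $\IntGab{s;n,k^{(1)},\dots,k^{(s)}}$ and of the rank distance, we may assume the transmitted codeword is $\0$; then the received word equals the error matrix $\e$, which is uniformly distributed on the set of $sm\times n$ matrices over $\Fq$ of rank exactly $t$. Since $t\le\tau$, the codeword $\0$ lies on every list, which contributes the leading summand $1$. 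Subtracting it and applying linearity of expectation,
\begin{equation*}
\overline{\ell_I}=1+\sum_{\c'}\Pr_{\e}\!\bigl[\rk(\e-\c')\le\tau\bigr],
\end{equation*}
where $\c'$ ranges over the $q^{m\sum_{i=1}^{s}k^{(i)}}-1$ nonzero codewords of $\IntGab{s;n,k^{(1)},\dots,k^{(s)}}$ (its cardinality being $\prod_{i=1}^{s}q^{mk^{(i)}}$, one factor per elementary Gabidulin code).

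It therefore suffices to show that every nonzero codeword $\c'$ satisfies $\Pr_{\e}[\rk(\e-\c')\le\tau]<4\,q^{(sm+n)\tau-\tau^2-smn}$; summing this over the $q^{m\sum_i k^{(i)}}-1$ codewords gives the claimed bound. Write $V_\tau$ for the number of $sm\times n$ matrices over $\Fq$ of rank at most $\tau$ and $N_t$ for the number of those of rank exactly $t$; standard Gaussian-binomial estimates give $V_\tau<4\,q^{(sm+n-\tau)\tau}$ and $N_t\ge c\,q^{(sm+n-t)t}$, and one notes $(sm+n)\tau-\tau^2-smn=(sm+n-\tau)\tau-smn$, so that $4\,q^{(sm+n)\tau-\tau^2-smn}$ is essentially $V_\tau/q^{smn}$, the density of the rank-$\tau$ ball $\Ball{\tau}{\c'}$. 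Since $\Pr_{\e}[\rk(\e-\c')\le\tau]=\bigl|\{\e:\rk(\e)=t,\ \rk(\e-\c')\le\tau\}\bigr|/N_t$, the per-codeword inequality is the statement that conditioning on $\rk(\e)=t$ does \emph{not} increase the chance of landing in $\Ball{\tau}{\c'}$ beyond that of a uniformly random matrix (up to the constant $4$, which absorbs the $q$-Pochhammer factor $\prod_{j\ge1}(1-q^{-j})^{-1}<4$).

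The crux --- and the step I expect to be the main obstacle --- is precisely this rank-metric ball-versus-sphere count. Here one uses that $\c'$ is a nonzero codeword, hence $\rk(\c')\ge d=n-\max_i\{k^{(i)}\}+1$ is large: any $\e$ with $\rk(\e)=t$ and $\rk(\e-\c')\le\tau$ must satisfy $\rk(\e-\c')\ge\rk(\c')-t$, and, more to the point, having $\rk(\e)=t$ while $\c'$ has high rank forces so much cancellation that the column space of $\e$ is pinned to lie inside that of $\c'$ up to a $t$-dimensional correction; the rank-$t$ sphere therefore meets $\Ball{\tau}{\c'}$ in far fewer than $V_\tau$ points. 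Counting these configurations explicitly --- by choosing the column space of $\e-\c'$ and the induced completion of $\c'$ to a rank-$t$ matrix, and weighting by Gaussian binomials --- produces a bound of the form $\bigl|\{\e:\rk(\e)=t,\ \rk(\e-\c')\le\tau\}\bigr|\le 4\,N_t\,q^{(sm+n-\tau)\tau-smn}$, which is what was needed. The remaining ingredients --- the reduction to $\c'=\0$, the extraction of the leading $1$, and the summation over the nonzero codewords --- are routine.
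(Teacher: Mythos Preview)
The paper's route is much simpler than yours: it does \emph{not} condition on the error having rank exactly $t$, but instead takes the received matrix $\r$ to be uniform over all of $\Fqm^{s\times n}$. Under that distribution, for every nonzero codeword $\c'$ one has $\Pr[\rk(\r-\c')\le\tau]=\Pr[\rk(\r)\le\tau]=\bigl|\{\Mat{M}\in\Fq^{sm\times n}:\rk(\Mat{M})\le\tau\}\bigr|/q^{smn}$ exactly, the standard estimate $<4\,q^{(sm+n)\tau-\tau^2}$ on the ball cardinality applies directly, and summing over the $q^{m\sum_i k^{(i)}}-1$ nonzero codewords finishes the argument in one line. The hypothesis $t\le\tau$ is used only to justify adding back the transmitted codeword as the leading~$1$.

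Your reading --- averaging over $\e$ uniform on the rank-$t$ sphere --- is arguably the more literal interpretation of the statement and would give a stronger result, but the step you yourself flag as ``the crux'' is a genuine gap, not a routine count. You need, for each nonzero $\c'$, that
\begin{equation*}
\frac{\bigl|\Sphere{t}{\0}\cap\Ball{\tau}{\c'}\bigr|}{N_t}\le 4\,q^{(sm+n)\tau-\tau^2-smn},
\end{equation*}
and your sketch (``choose the column space of $\e-\c'$ and the induced completion of $\c'$ to a rank-$t$ matrix'') does not establish this. The trivial bound $\bigl|\Sphere{t}{\0}\cap\Ball{\tau}{\c'}\bigr|\le V_\tau$ only yields $V_\tau/N_t$, which exceeds the target by a factor of roughly $q^{smn}/N_t\approx q^{(sm-t)(n-t)}\ge 1$; closing that factor genuinely requires exploiting $\rk(\c')\ge d>\tau$, and no such argument is actually carried out in your proposal. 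The paper sidesteps this difficulty entirely by dropping the conditioning on~$t$.
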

\begin{proof}
Let $R$ be a random variable, uniformly distributed over all matrices in $\Fqm^{s \times n}$ 
and let $\r$ be a realization of $R$, i.e., the $s$ elementary received words written as rows of a matrix.
Let $\c \in \IntGab{s;n,k^{(1)},\dots,k^{(s)}}$ be the fixed transmitted codeword. Then,
$P(\rk(\r -\c)\leq \tau) = P(\rk(\r)\leq \tau)$, which is 
the probability that a random $sm \times n$ matrix over $\Fq$ has rank at most $\tau$. 
Let $\IntGabstar{s;n,k^{(1)},\dots,k^{(s)}}$ be the code $\IntGab{s;n,k^{(1)},\dots,k^{(s)}}$ \emph{without} the transmitted codeword.

Let us further consider another random variable $X$, which depends on $R$:
\begin{equation*}
X(R)=\Big|\big\lbrace \IntGabNoInput^* \cap \Ball{\tau}{\r}\big\rbrace\Big|,
\end{equation*}
where $\r \in \Fqm^{s \times n}$. 
Denote by $\Indfunc(..)$ the indicator function, then the expectation of $X$ is given by:
\begin{align*}
E[X] &= \sum_{\r \in \Fqm^{s\times n}} P(R=\r) X(R)
=\sum_{\c \in \IntGabNoInput^*}\sum_{\r \in \Fqm^{s\times n}}  \Indfunc(\rk(\r-\c)\leq \tau) P(R=\r)\\
&= \!\sum_{\c \in\IntGabNoInput^* }\!\! E\big[\Indfunc(\rk(\r-\c)\leq \tau)\big]
=\!\sum_{\c \in\IntGabNoInput^* }\!\! P(\rk(\r-\c)\leq \tau)
=\!\sum_{\c \in \IntGabNoInput^*}\!\! P(\rk(\r)\leq \tau).
\end{align*}
Therefore,
\begin{align*}
E[X]&= \big|\IntGabNoInput^*\big| \cdot \frac{\big|\R \in \Fq^{sm \times n} : \rk(\R) \leq \tau\big|}{q^{smn}}\\
&< \left((q^m)^{\sum_{i=1}^{s}k^{(i)}}-1\right)\frac{4q^{(sm+n)\tau-\tau^2}}{q^{smn}}.
\end{align*}
The average list size is $\overline{\ell_I} = E[X]+1$ due to the transmitted codeword.
\qed\end{proof}
Unfortunately, it is not clear if it is possible that $\ell_I = 1$ and still, the system of equations for the root-finding step \eqref{eq:factorization_system_submat} does {not} have full rank. 
Thus, Lemma~\ref{lem:average_listsize} does not bound the probability that the rank of $\Q$ is not full;
this is done in Lemma~\ref{lem:intgab_failureproba_altern}.

Theorem~\ref{theo:list_decoding_int} summarizes the properties of our list decoding algorithm and Algorithm~\ref{algo:list_decoding} shows the steps of the decoder in pseudocode.

\begin{theorem}[List Decoding of Interleaved Gabidulin Codes]\label{theo:list_decoding_int}
Let the interleaved Gabidulin code $\IntGab{s;n,k^{(1)},\dots,k^{(s)}}$ over $\Fqm$ consist of $\c^{(i)} = f^{(i)}(\vec{g})$, where $\deg_q f^{(i)}(x)$ $ < k^{(i)}$, and let the elementary received words $\r^{(i)}$, $\forall i \in \intervallincl{1}{s}$, be given. 

Then, we can find a basis of the affine subspace,
containing all tuples of polynomials $(f^{(1)}(x), \dots, f^{(s)}(x))$, such that their evaluation at $\vec{g}$ is in rank distance 
\begin{equation*}
 \tau < \frac{sn - \sum_{i=1}^{s}k^{(i)}+s}{s+1}
\end{equation*}
from $(\r^{(1)T} \ \r^{(2)T} \ \dots \ \r^{(s)T})^T$ with overall complexity at most $\mathcal O(s^3n^2)$.
\end{theorem}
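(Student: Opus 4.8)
The plan is to assemble Theorem~\ref{theo:list_decoding_int} from the two building blocks already established: the interpolation step of Section~\ref{subsec:interpolation} and the root-finding step of Section~\ref{subsec:rootfinding}. First I would invoke Lemma~\ref{lem:nonzero_interpolpoly} to guarantee that, under the stated bound $\tau < (sn - \sum_{i=1}^{s}k^{(i)}+s)/(s+1)$, the homogeneous linear system $\Mat{R}\cdot\vec{q}^T = \0$ with $\Mat{R}$ from \eqref{eq:interpolation_matrix} has a non-trivial solution space; by the lemma on $\rk(\R)$ its dimension $d_I$ satisfies \eqref{eq:intgab_dimsolspace_interpolation}, so we can compute a basis $Q^{(1)},\dots,Q^{(d_I)}$ of all valid interpolation polynomials. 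Then Theorem~\ref{theo:interpolpolyiszero} tells us that every transmitted evaluation tuple with $\rk(\e^{(1)T}\ \dots\ \e^{(s)T})\leq\tau$ satisfies $F(x) = Q^{(h)}(x,f^{(1)}(x),\dots,f^{(s)}(x)) = 0$ for every $h$, i.e. the coefficient vector $\vec f$ of the sought tuple lies in the affine solution set of $\Mat{Q}\cdot\vec f = \vec q_0$ from \eqref{eq:factorization_system_submat}. Solving that second linear system (recursively, as in the proof of the root-finding complexity lemma) produces a basis of the affine subspace containing all candidate tuples, and the Maximum List Size lemma confirms the output is a genuine (finite) description of all codewords within radius $\tau$.

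For the complexity bound $\mathcal O(s^3 n^2)$ I would add up the costs of the two phases. The interpolation step is a linear system over $\Fqm$ with $n$ rows and $n-\tau + \sum_{i=1}^s(n-\tau-k^{(i)}+1) = \mathcal O(sn)$ columns; solving it by Gaussian elimination costs $\mathcal O(n^2\cdot sn) = \mathcal O(sn^3)$ naively, but using the efficient linearized interpolation of \cite{Xie2011General} mentioned right after Theorem~\ref{theo:interpolpolyiszero} this drops to $\mathcal O(s^2 n(n-\tau)) \leq \mathcal O(s^2 n^2)$ operations over $\Fqm$. The root-finding step, by the Complexity of the Root-Finding Step lemma, costs $\mathcal O(s^3 k^2) \leq \mathcal O(s^3 n^2)$ (since $k = \max_i k^{(i)} \leq n$). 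Adding these, the dominant term is $\mathcal O(s^3 n^2)$, which is the claimed bound; I would state this as the sum $\mathcal O(s^2 n^2) + \mathcal O(s^3 n^2) = \mathcal O(s^3 n^2)$.

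The step that needs the most care is making precise what "basis of the affine subspace containing all tuples" means and why the procedure delivers it. The root-finding system \eqref{eq:factorization_system_submat} is inhomogeneous, so its solution set is a coset $\vec f_0 + \ker(\Mat{Q})$; I would argue that the recursive elimination used in the proof of the root-finding complexity lemma both finds a particular solution and simultaneously exposes $\ker(\Mat{Q})$, so one returns $\vec f_0$ together with a basis of $\ker(\Mat{Q})$. One must also check the two degenerate situations flagged in the Maximum List Size lemma: that $Q_0(x)\neq 0$ (guaranteed since $(\Mooremat{n-\tau}{\vec g})^T$ has full column rank) and that at least one $Q_i$, $i\geq 1$, is non-zero, so that the $\Q_j$ blocks in \eqref{eq:factorization_system_submat} are not all zero and the bound $\rk(\Q)\geq\min_i\{k^{(i)}\}$ applies — this keeps the affine subspace proper and its dimension at most $\sum_{i=1}^s k^{(i)} - \min_i\{k^{(i)}\}$. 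I do not expect a genuine obstacle here; the only subtlety is bookkeeping, ensuring the degree constraints $\deg_q f^{(i)}(x) < k^{(i)}$ are built into the column structure of $\Mat{Q}$ (the convention $f^{(i)}_j = 0$ for $j\geq k^{(i)}$), so that the returned tuples automatically lie in $\IntGab{s;n,k^{(1)},\dots,k^{(s)}}$.

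Finally I would close by noting that correctness — every codeword within radius $\tau$ appears among the returned tuples — is immediate from Theorem~\ref{theo:interpolpolyiszero} applied to each basis polynomial $Q^{(h)}$, while finiteness and the size bound come from the Maximum List Size lemma, and the running time from the addition above; Algorithm~\ref{algo:list_decoding} records the resulting procedure in pseudocode. This completes the proof of Theorem~\ref{theo:list_decoding_int}.
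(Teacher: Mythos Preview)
Your proposal is correct and mirrors the paper's own treatment: Theorem~\ref{theo:list_decoding_int} is presented there as a summary of the preceding results (Lemma~\ref{lem:nonzero_interpolpoly}, Theorem~\ref{theo:interpolpolyiszero}, the root-finding complexity lemma, and the Maximum List Size lemma) with no separate proof, and your assembly of those pieces --- including the complexity accounting $\mathcal O(s^2 n^2) + \mathcal O(s^3 k^2) \leq \mathcal O(s^3 n^2)$ --- matches exactly. One small slip: the full column rank of $(\Mooremat{n-\tau}{\vec g})^T$ does \emph{not} guarantee $Q_0(x)\neq 0$; rather, it rules out the case $Q_0\neq 0$ with all $Q_i = 0$ for $i\geq 1$, so the actual conclusion (which you also state) is that some $Q_i$ with $i\geq 1$ is nonzero.
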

The complexity of finding the \emph{basis} of the list is quadratic in $n$, but the complexity for finding explicitly the whole list can be exponential in $n$. The dimension of the solution space of \eqref{eq:factorization_system_submat} is $(sk-\rk(\Mat{Q}))$ over $\Fqm$, which results in $q^{m(sk-\rk{\Mat{Q}})}$ possible solutions. If the rank of $\Mat{Q}$ is not full (i.e., less than $sk$), we have to examine all these solutions and check if they correspond to a valid codeword.
Therefore, the complexity of our list decoder depends on the rank of the matrix $\Q$, but not on the "real" list size. It is not clear if there is a connection between the real list size and the rank of $\Q$, going beyond the fact that if the real list size is greater than one, the rank of $\Q$ cannot be full.

Therefore, this is not a polynomial-time list decoder, although in most cases a unique solution can be found with quadratic time complexity.
\printalgo{
\caption{\newline$\List$  $\leftarrow$\textsc{ListDecodingInterleavedGabidulin}$\big(\r^{(1)}, \dots, \r^{(s)} \big)$}
\label{algo:list_decoding}
\DontPrintSemicolon
\SetAlgoVlined
\BlankLine
\LinesNumberedHidden
\SetKwInput{KwIn}{\underline{Input}}
\SetKwInput{KwOut}{\underline{Output}}
\SetKwInput{KwIni}{\underline{Initialize}}
\KwIn{$\r^{(i)} = \vecelements{r^{(i)}}\in \Fqm^n$ with $n\leq m$, $\forall i \in \intervallincl{1}{s}$}
\vspace{0.2ex}
\KwIni{$\List =\emptyset$}
\BlankLine
\textbf{Interpolation step:} \;
\vspace{0.2ex}
\ShowLn Define $\Mat{R}$ as in \eqref{eq:interpolation_matrix}\;
\ShowLn Solve $\Mat{R} \cdot \Mat{q} = \0$ for $\Mat{q}\in \Fqm^{n-\tau+\sum_{i=1}^{s}(n-\tau-k^{(i)}+1)}$ \nllabel{algoline:calcq}\;\vspace{0.2ex}
\ShowLn Define $Q(x,y_1,\dots,y_s) = Q_0(x)+Q_1(x)+\dots+Q_s(x)$ as in \eqref{eq:coeffs_multivpoly}, where $\Mat{q}$ is calculated in Line~\ref{algoline:calcq} \;
\BlankLine
\textbf{Root-finding step:} \;
\vspace{0.2ex}
\ShowLn Define $\Mat{Q}$ as in \eqref{eq:intgab_notation_Qij}, \eqref{eq:factorization_system_submat}\;\vspace{0.2ex}
\ShowLn Determine affine solution space of $\Mat{Q} \cdot \vec{f} = \vec{q}_0$ of dimension $(sk-\rk(\Mat{Q}))$\;\vspace{0.2ex}
\ShowLn Determine all vectors in this solution space and save them in set $\myset{F}$\;\vspace{0.2ex}
\ForEach{$\vec{f} = (\vec{f}^{(1)} \ \dots \ \vec{f}^{(s)})\in \myset{F}$}
{
\If{$\rk((\r^{(1)} \ \dots \ \r^{(s)})-(f^{(1)}(\vec{g}) \ \dots \ f^{(s)}(\vec{g}))) \leq \tau$}
{$\List \leftarrow \List \cup \vec{f}$}
}
\BlankLine
\KwOut{List of evaluation words $\List$}
\BlankLine}

\subsection{A Probabilistic Unique Decoding Approach}\label{subsec:intgab_uniquedecoding}
In this section, we apply our decoding approach to {probabilistic unique} decoding.
Since the list size might be greater than one
, there is not always a unique solution.
We accomplish the interpolation step as before and declare a {decoding failure} as soon as the rank of the root-finding matrix $\Q$ is not full (see \eqref{eq:factorization_system_submat}). 
We upper bound this probability and call it \emph{failure probability}.
The failure probability is actually the fraction of non-correctable error matrices.
We show a relation to the approaches from \cite {Loidreau_Overbeck_Interleaved_2006,SidBoss_InterlGabCodes_ISIT2010}. 
The upper bound as well as simulation results show that the failure probability is quite small. 
Therefore, we can use our decoder as probabilistic unique decoder which basically consists of solving two structured linear systems of equations and has overall complexity at most $\mathcal O(s^3n^2)$, where $s \ll n$ is usually a small fixed integer.

It is important to observe that we always set up the system of equations for the interpolation step (Problem~\ref{prob:interpolation})
with maximum possible $\tau$, but---in contrast to solving the systems of equations from \eqref{eq:intgab_lo_systemeq} and \eqref{eq:interleavedGabi_SidBoss}---we also find the unique solution (if it exists) if $t<\tau$ without decreasing the size of the matrix, since the rank of the matrix $\Mat{R}$ from \eqref{eq:interpolation_matrix} is not important. 

Recall the notations from \eqref{eq:intgab_notation_Qij}
and denote additionally the $d_I \times (s+1)$ matrix
\begin{equation}\label{eq:intgab_matrixQ0}
\overline{\Q}_0 \defeq \begin{pmatrix}
q^{(1)}_{0,0}&q^{(1)}_{1,0} & \dots & q^{(1)}_{s,0}\\
q^{(2)}_{0,0}&q^{(2)}_{1,0} & \dots & q^{(2)}_{s,0}\\
\vdots&\vdots &\ddots& \vdots \\
q^{(d_I)}_{0,0}& q^{(d_I)}_{1,0} & \dots & q^{(d_I)}_{s,0}\\
\end{pmatrix}.
\end{equation}
For any matrix $\A$ with entries in $\Fqm$ it holds that $\rk(\A^{[i]}) = \rk(\A)$ for any integer $i$. 
The matrix $\Q$ \eqref{eq:factorization_system_submat} contains a lower block triangular matrix, 
providing Lemma~\ref{lem:matfactsmall}.

\begin{lemma}[Rank of Root-Finding Matrix]\label{lem:matfactsmall}
Let $\Q$ be defined as in \eqref{eq:factorization_system_submat} and $\Q^{[0]}_0$ as in \eqref{eq:intgab_notation_Qij}.
If $\rk(\Q_0^{[0]}) = s$
, then $\rk(\Q)= sk$.
\end{lemma}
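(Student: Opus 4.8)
The plan is to exploit the lower block‑triangular shape of $\Q$ in \eqref{eq:factorization_system_submat}. Recall that $\Q$ is an $((n-\tau)d_I)\times sk$ matrix whose block rows are indexed by the coefficient index of $F(x)$ (there are $n-\tau$ of them, namely indices $0,\dots,n-\tau-1$), and whose block columns are indexed by the coefficient index $j$ of the unknowns $\f_j$ (there are $k$ of them, $j=0,\dots,k-1$). The crucial structural fact, visible from the display, is that the block on the diagonal of the $j$‑th block column is $\Q_0^{[-j]}$, sitting in the block row of coefficient index $j$, while everything strictly to the right of that diagonal block in the same block row is zero. Concretely, block row $j$ reads $\Q_j^{[-j]}\f_0 + \Q_{j-1}^{[-j]}\f_1^{[-1]} + \dots + \Q_0^{[-j]}\f_j^{[-j]} = -\vec q_{0,j}^{[-j]}$, so once $\f_0,\dots,\f_{j-1}$ are fixed, this block row is a linear system in $\f_j$ with coefficient matrix $\Q_0^{[-j]}$.

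First I would make this precise: reorder/restrict to the $k$ block rows with indices $0,1,\dots,k-1$ (there are at least $k$ block rows since $n-\tau\ge k$, which follows from $\tau\le n-k$, itself a consequence of \eqref{eq:decradius_het} together with $k\le n$ — actually one should check $n-\tau-1\ge k-1$, i.e.\ $\tau\le n-k$, which holds because $\tau<\frac{sn-\sum k^{(i)}+s}{s+1}\le\frac{sn - s(k-1)+s - (k-1)}{s+1}\cdot\frac{s+1}{s+1}$... I would just cite $\tau \le \tau_u \le n-k$). Deleting the other block rows of $\Q$ only decreases the rank, so it suffices to show the $k d_I \times sk$ submatrix formed by block rows $0,\dots,k-1$ has rank $sk$. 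That submatrix is block lower triangular with diagonal blocks $\Q_0^{[0]},\Q_0^{[-1]},\dots,\Q_0^{[-(k-1)]}$, each of which is $d_I\times s$.

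The key step is then: a block lower triangular matrix whose diagonal blocks each have full column rank has full column rank. Since $\rk(\Q_0^{[-j]})=\rk(\Q_0^{[0]})$ for every $j$ (the $q$‑power map preserves rank, stated just before the lemma) and $\rk(\Q_0^{[0]})=s$ by hypothesis, each diagonal block $\Q_0^{[-j]}$ has rank $s$, i.e.\ full column rank $s$. To conclude full column rank of the whole block lower triangular array I would argue directly that its kernel is trivial: if $(\f_0,\f_1^{[-1]},\dots,\f_{k-1}^{[-(k-1)]})$ is in the kernel, then block row $0$ gives $\Q_0^{[0]}\f_0=\0$, hence $\f_0=\0$ by full column rank; substituting, block row $1$ gives $\Q_0^{[-1]}\f_1^{[-1]}=\0$, hence $\f_1=\0$; inductively all $\f_j=\0$. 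Therefore the $kd_I\times sk$ submatrix has rank $sk$, and so does $\Q$; and since $sk$ is the number of columns, $\rk(\Q)=sk$ is the maximum possible.

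I do not expect a genuine obstacle here — the content is entirely the triangular‑elimination argument above. The only points needing care are bookkeeping ones: confirming there really are at least $k$ block rows available (so the triangular submatrix exists), confirming that the top‑left diagonal block is exactly $\Q_0^{[0]}$ and not $\Q_0^{[-j]}$ for some shift, and being explicit that $d_I \ge s$ so that "$\Q_0^{[0]}$ has rank $s$'' is not vacuous (this follows from \eqref{eq:intgab_dimsolspace_interpolation} and can be invoked if needed). A slick alternative to the kernel computation is to pick, in each diagonal block $\Q_0^{[-j]}$, a set of $s$ rows forming an invertible $s\times s$ matrix, and keep only those rows across all $k$ block rows: the resulting $sk\times sk$ matrix is block lower triangular with invertible diagonal blocks, hence invertible, so $\rk(\Q)\ge sk$. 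I would present whichever of these two phrasings is shorter in the surrounding notation.
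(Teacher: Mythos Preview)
Your proposal is correct and follows essentially the same approach as the paper: both exploit the block lower triangular structure of $\Q$ with diagonal blocks $\Q_0^{[0]},\Q_0^{[-1]},\dots,\Q_0^{[-(k-1)]}$ and use that $\rk(\Q_0^{[i]})=\rk(\Q_0^{[0]})$. The paper states this in a single sentence, whereas you spell out the triangular elimination (kernel) argument and the bookkeeping explicitly.
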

\begin{proof}
This holds since $\Q$ contains a lower block triangular matrix with $\mathbf{Q}_0^{[0]}$, $\dots$, $\mathbf{Q}_0^{[k-1]}$ on the diagonal of the first $k$ blocks and since $\rk(\Q_0^{[0]}) = \rk(\Q_0^{[i]})$.
\qed\end{proof}
The $d_I \times s$ matrix $\Q^{[0]}_0$ can have rank $s$ only if $d_I \geq s$, which is guaranteed for $t = \tau$ if (compare~\eqref{eq:intgab_dimsolspace_interpolation}):
\begin{equation}\label{eq:tau_restri_2}
d_I = \dim \ker (\R) \geq s(n-\tau+1)-\sum_{i=1}^{s}k^{(i)}-t \geq s \ \Longleftrightarrow \ t \leq \frac{sn - \sum_{i=1}^{s}k^{(i)}}{(s+1)}.
\end{equation}
This is equivalent to the decoding radius of joint decoding and slightly different to \eqref{eq:decradius_het}, which
is the maximum decoding radius when we consider our algorithm as a list decoder (see Section~\ref{subsec:intgab_list_decoding}).


Let us show a connection between the probability that $\Q$ does not have full rank and that the matrix $\R_R$ from \cite {Loidreau_Overbeck_Interleaved_2006}, see \eqref{eq:matrix_loidreauoverbeck}, 
does not have full rank.
\begin{lemma}[Connection Between Matrices of Different Approaches]\label{lem:connection_interpol_loidreau}
Let $\overline{\Q}_0$ be defined as in \eqref{eq:intgab_matrixQ0} and $\R_R$ as in \eqref{eq:matrix_loidreauoverbeck} for
$t = \tau = \lfloor(sn - \sum_{i=1}^{s}k^{(i)})/(s+1)\rfloor$. 
If $\rk(\overline{\Q}_0) < s$, then $\rk(\R_R) <n-1$. 
\end{lemma}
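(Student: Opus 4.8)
The plan is to relate the root‑finding matrix data (more precisely the small matrix $\overline{\Q}_0$) back to the Loidreau–Overbeck matrix $\R_R$ by exploiting the fact that, for $t=\tau$, the solution space of the interpolation step has dimension exactly $d_I$ and that the interpolation polynomials $Q^{(h)}$ encode the same linear‑algebraic information as the kernel vectors $\boldsymbol\lambda$ of $\R_R$. Concretely, I would first argue the contrapositive is awkward; instead I would argue directly: assume $\rk(\overline{\Q}_0)<s$ and produce a nonzero $\boldsymbol\lambda$ with $\R_R\boldsymbol\lambda^T=\0$ whose support forces $\rk(\R_R)<n-1$, or equivalently exhibit a $2$‑dimensional subspace of $\ker(\R_R)$.

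The key steps, in order, are: (1) Unwind the definitions: a basis $Q^{(1)},\dots,Q^{(d_I)}$ of the interpolation solution space corresponds, via \eqref{eq:interpolation_matrix}, to $d_I$ linearly independent relations among the columns $\Mooremat{n-\tau}{\vec g}^T$ and $\Mooremat{n-\tau-k^{(i)}+1}{\r^{(i)}}^T$; the entries $q^{(h)}_{0,0}$ and $q^{(h)}_{i,0}$ appearing in $\overline{\Q}_0$ are exactly the "constant" ($x^{[0]}$, $y_i^{[0]}$) coefficients of these relations. (2) Observe that $\R_R$ from \eqref{eq:matrix_loidreauoverbeck} at $t=\tau$ is built from the \emph{same} $q$‑Vandermonde blocks, only with the $q$-degrees shifted: $\Mooremat{n-\tau-1}{\vec g}$ versus $\Mooremat{n-\tau}{\vec g}$, and $\Mooremat{n-k^{(i)}-\tau}{\r^{(i)}}$ versus $\Mooremat{n-\tau-k^{(i)}+1}{\r^{(i)}}$ — i.e.\ each block of $\R_R$ is obtained from the transpose of the corresponding block of $\R$ by deleting one row (the top one). (3) Translate a linear dependency of the columns of $\overline{\Q}_0$ into a common linear combination of the $Q^{(h)}$ whose \emph{constant} parts all vanish; because $F(x)=Q_0(x)+\sum_i Q_i(f^{(i)}(x))$ and, after the row‑finding recursion, the degree‑$0$ block is governed by $\Q_0^{[0]}$, a drop in rank of $\overline{\Q}_0$ (equivalently of the relevant block) produces a nonzero kernel vector of $\R_R$ that is \emph{not} the "generic" one, hence $\ker(\R_R)$ has dimension at least $2$ and $\rk(\R_R)<n-1$. (4) Conclude.

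The main obstacle I anticipate is step (3): carefully matching the index bookkeeping between the interpolation polynomials' coefficient vector $\vec q$ (with blocks of sizes $n-\tau$ and $n-\tau-k^{(i)}+1$) and the Loidreau–Overbeck vector $\boldsymbol\lambda\in\Fqm^n$, and verifying that the degree shift by one between the two families of $q$‑Vandermonde matrices is exactly what converts "$\overline{\Q}_0$ rank‑deficient" into "an extra kernel vector of $\R_R$." One must also check that the extra kernel vector obtained is genuinely independent of the codeword‑induced one (the one coming from $\Mooremat{n-t-1}{\vec g}$), using $t=\tau$ and the rank bound $\rk(\R_R)=\rk(\E_R)\le n-1$ from \eqref{eq:intab_lo_kernel}. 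Once the dictionary between $Q^{(h)}$, $\overline{\Q}_0$, $\boldsymbol\lambda$ and $\R_R$ is set up explicitly, the rest is linear algebra of the same flavour as Lemma~\ref{lem:intgab_rel_LO_SB}.
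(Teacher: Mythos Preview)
Your overall strategy matches the paper's: exploit the rank deficiency of $\overline{\Q}_0$ to manufacture a special interpolation polynomial $Q$ whose degree-$0$ coefficients all vanish, then use the block relation between the interpolation matrix $\R$ and $\R_R$ that you correctly identify in step~(2). However, step~(3) contains a direction mix-up that would derail the execution.

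First, a minor point: ``a linear dependency of the columns of $\overline{\Q}_0$'' is a right null vector in $\Fqm^{s+1}$, whereas what yields ``a linear combination of the $Q^{(h)}$ whose constant parts all vanish'' is a \emph{left} null vector $c\in\Fqm^{d_I}$ (so that $\sum_h c_h q^{(h)}_{i,0}=0$ for all $i\in\intervallincl{0}{s}$). Such $c\neq 0$ exists because $\rk(\overline{\Q}_0)<s\le d_I$.

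The more serious issue is what this special $Q=\sum_h c_h Q^{(h)}$ actually buys you. Since $q_{0,0}=q_{1,0}=\dots=q_{s,0}=0$, its truncated coefficient vector $\widetilde{\vec q}\neq 0$ lies in the right kernel of $\widetilde{\R}$, the matrix obtained from $\R$ by deleting those $s{+}1$ columns. Your own step~(2) is precisely the identity $\widetilde{\R}^T=\R_R^{[1]}$ (deleting the top row of each transposed block introduces the Frobenius twist); hence $\widetilde{\vec q}$ is a \emph{left} null vector of $\R_R^{[1]}$, not a right one. You never obtain a $\boldsymbol\lambda\in\Fqm^n$ with $\R_R\boldsymbol\lambda^T=\0$, so the plan to exhibit an ``extra'' right-kernel vector and then check its independence from the codeword-induced one has nothing to work with.

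The paper concludes instead by a direct rank count: the existence of $\widetilde{\vec q}$ forces $\rk(\R_R)=\rk(\R_R^{[1]})=\rk(\widetilde{\R})$ to be strictly less than the number of columns of $\widetilde{\R}$ (equivalently, the number of rows of $\R_R$), and for $\tau=\lfloor(sn-\sum_i k^{(i)})/(s+1)\rfloor$ this yields $\rk(\R_R)<n-1$. No appeal to $F(x)=Q_0(x)+\sum_i Q_i(f^{(i)}(x))$ or to the root-finding recursion is needed; that portion of your step~(3) is a detour.
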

\begin{proof}
If $\rk(\overline{\Q}_0) < s$, then by linearly combining the $d_I \geq s$ dimensional basis of the solution space of the interpolation step, there exists a non-zero interpolation polynomial $Q(x,y_1,\dots,y_s)$, which fulfills Problem~\ref{prob:interpolation} and has the coefficients
$q_{0,0} = q_{1,0} = \dots = q_{s,0} = 0$. Since $Q(x,y_1,\dots,y_s) \neq 0$ (Lemma~\ref{lem:nonzero_interpolpoly}), the interpolation matrix without the first column of each submatrix (i.e., the columns corresponding to $q_{0,0},q_{1,0},\dots,q_{s,0}$), denoted by $\widetilde\R$, does not have full rank. 

Moreover $ \R_R^{[1]} = {\widetilde\R}^T$ and hence, 
\begin{equation*}
\rk(\R_R) = \rk(\widetilde\R) < \sum\limits_{i=0}^{s}\deg_q Q_i(x) = (s+1)(n-\tau)-\sum_{i=1}^{s}(k^{(i)} - 1).
\end{equation*}
For $\tau = \left\lfloor{(sn - \sum_{i=1}^{s}k^{(i)})}/{(s+1)}\right\rfloor$, this gives 
$\rk(\R_R) < n-1$. 
\qed\end{proof}
Combining the last two lemmas, we obtain the following theorem.
\begin{theorem}[Connection Between Failure Probabilities]\label{theo:our_failureprob}
Assume that $\r^{(i)}$,$\forall i \in \intervallincl{1}{s}$, consists of random elements uniformly distributed over $\Fqm$.
Let $\R_R$ be as in \eqref{eq:matrix_loidreauoverbeck} and $\S$ as in \eqref{eq:interleavedGabi_SidBoss} for
$t = \tau =\lfloor(sn - \sum_{i=1}^{s}k^{(i)})/(s+1)\rfloor$.
Then, for $k = \max_i\{k^{(i)}\}$:
\begin{equation}\label{eq:failure_proba_theo}
P\big(\rk(\Q)<sk\big) \leq P\big(\rk(\overline{\Q}_0) < s\big) \leq P\big(\rk(\R_R) < n-1\big). 
\end{equation}
Therefore, for $\tau \geq s$:
\begin{equation*}
P\big(\rk(\Q)<sk\big) \leq1 - \left(1-\frac{4}{q^m}\right)\left(1-q^{m(s-\tau)}\right)^s.
\end{equation*}
If $k^{(i)} = k$, $\forall i\in \intervallincl{1}{s}$, additionally $P\big(\rk(\Q)<sk\big) \leq P\big(\rk(\S) < \tau\big)$ holds.
\end{theorem}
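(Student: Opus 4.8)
The plan is to chain together the three inequalities in \eqref{eq:failure_proba_theo} from right to left, using the lemmas already established, and then substitute the known bounds.

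\textbf{First inequality.} I would prove $P(\rk(\Q)<sk) \leq P(\rk(\overline{\Q}_0) < s)$ by showing the event $\{\rk(\Q) < sk\}$ is contained in the event $\{\rk(\overline{\Q}_0) < s\}$. By Lemma~\ref{lem:matfactsmall}, $\rk(\Q_0^{[0]}) = s$ forces $\rk(\Q) = sk$; contrapositively, $\rk(\Q) < sk$ implies $\rk(\Q_0^{[0]}) < s$. Since $\Q_0^{[0]}$ is obtained from $\Q_0$ by raising every entry to the $q$-power $0$, i.e.\ $\Q_0^{[0]} = \Q_0$, and $\Q_0$ is precisely the submatrix of $\overline{\Q}_0$ consisting of columns $1$ through $s$ (dropping the column of $q^{(h)}_{0,0}$'s), we get $\rk(\overline{\Q}_0) \geq \rk(\Q_0^{[0]})$, hence $\rk(\overline{\Q}_0) < s$ whenever $\rk(\Q_0^{[0]}) < s$. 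Containment of events gives the probability inequality. One should double-check the indexing convention in \eqref{eq:intgab_notation_Qij} versus \eqref{eq:intgab_matrixQ0} so that $\Q_0$ really sits inside $\overline{\Q}_0$ as claimed; this is the only subtle bookkeeping point here.

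\textbf{Second inequality.} For $P(\rk(\overline{\Q}_0) < s) \leq P(\rk(\R_R) < n-1)$, I would invoke Lemma~\ref{lem:connection_interpol_loidreau} directly: it states that $\rk(\overline{\Q}_0) < s \implies \rk(\R_R) < n-1$ (for exactly the value $t=\tau=\lfloor(sn-\sum k^{(i)})/(s+1)\rfloor$ appearing in the theorem), so again the event on the left is contained in the event on the right, yielding the probability bound. Composing the two containments proves \eqref{eq:failure_proba_theo}.

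\textbf{The explicit numerical bound.} For the displayed bound on $P(\rk(\Q)<sk)$ when $\tau \geq s$, I would combine \eqref{eq:failure_proba_theo} with \eqref{eq:intgab_failure_lo}, namely $P(\rk(\R_R) < n-1) \leq 1 - (1-\tfrac{4}{q^m})(1-q^{m(s-t)})^s$; setting $t = \tau$ (the hypothesis $\tau \geq s$ matches the condition $s \leq t$ required by \eqref{eq:intgab_failure_lo}) gives the result immediately. Finally, for the case $k^{(i)} = k$ for all $i$, the additional claim $P(\rk(\Q)<sk) \leq P(\rk(\S) < \tau)$ follows by combining \eqref{eq:failure_proba_theo} with Lemma~\ref{lem:intgab_rel_LO_SB}, which asserts $\rk(\S) < t \iff \rk(\R_R) < n-1$ in the equal-dimension case, so $P(\rk(\R_R) < n-1) = P(\rk(\S) < \tau)$ when $t = \tau$.

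\textbf{Main obstacle.} The genuinely delicate step is the first inequality: making sure that $\Q_0$ (resp.\ $\Q_0^{[0]}$) is literally a column-submatrix of $\overline{\Q}_0$, so that the rank comparison $\rk(\overline{\Q}_0) \geq \rk(\Q_0^{[0]})$ is valid — i.e.\ that the $q$-shift exponent is $0$ in the top block of \eqref{eq:factorization_system_submat} and that the column labelled $q_{0,j}$ in $\overline{\Q}_0$ is the extra one not appearing in $\Q_0$. Everything else is a direct appeal to the three preceding lemmas and to the previously stated probability bounds, so no new estimation work is needed; one only has to be careful that the value of $t=\tau$ used in each cited lemma is consistent with the hypothesis of the theorem.
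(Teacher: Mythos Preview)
Your argument for the first inequality contains a genuine logical slip. You correctly observe that $\Q_0^{[0]}$ is a column-submatrix of $\overline{\Q}_0$, hence $\rk(\overline{\Q}_0) \geq \rk(\Q_0^{[0]})$. But from this you conclude ``$\rk(\overline{\Q}_0) < s$ whenever $\rk(\Q_0^{[0]}) < s$'', which is the \emph{wrong direction}: the inequality $A \geq B$ yields $A<s \Rightarrow B<s$, not $B<s \Rightarrow A<s$. So the event containment you need, $\{\rk(\Q)<sk\}\subseteq\{\rk(\overline{\Q}_0)<s\}$, is not established by the submatrix bound alone.

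The paper closes this gap by asserting the \emph{equality} $\rk(\overline{\Q}_0)=\rk(\Q_0^{[0]})$ under the hypothesis $t=\tau=\lfloor(sn-\sum k^{(i)})/(s+1)\rfloor$ (so that $d_I=s$). The underlying reason is that the extra column $\vec{q}_{0,0}$ of $\overline{\Q}_0$ already lies in the column span of $\Q_0^{[0]}$: by Theorem~\ref{theo:interpolpolyiszero} the transmitted evaluation polynomials $f^{(1)},\dots,f^{(s)}$ satisfy $F(x)=0$, so the root-finding system $\Q\vec f=\vec q_0$ is consistent, and in particular its top block $\Q_0^{[0]}\f_0=-\vec q_{0,0}$ has a solution. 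Hence adding $\vec q_{0,0}$ as a column does not raise the rank, and $\rk(\overline{\Q}_0)=\rk(\Q_0^{[0]})$. With this equality in hand, Lemma~\ref{lem:matfactsmall} gives the first inequality immediately. Your treatment of the second inequality, the explicit numerical bound, and the equal-dimension case via Lemma~\ref{lem:intgab_rel_LO_SB} is fine and matches the paper.
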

\begin{proof}
Since $\tau = \lfloor(sn - \sum_{i=1}^{s}k^{(i)})/(s+1)\rfloor$, we obtain $d_I = s$ and $\rk(\overline{\Q}_0) = \rk(\Q_0^{[0]})$. 
The first inequality of \eqref{eq:failure_proba_theo} follows from Lemma~\ref{lem:matfactsmall} and the second from Lemma~\ref{lem:connection_interpol_loidreau}.
Hence, we can bound $P\big(\rk(\Q)<sk)$ by the failure probability from \cite {Loidreau_Overbeck_Interleaved_2006}. 
Due to Lemma~\ref{lem:intgab_rel_LO_SB}, the failure probability from \cite {Loidreau_Overbeck_Interleaved_2006} is the same as the one from \cite {SidBoss_InterlGabCodes_ISIT2010} for $k^{(i)} = k$, $\forall i\in\intervallincl{1}{s}$.
\qed\end{proof}
The assumption of random received vectors and the restriction $\tau \geq s$ follow from \cite[Theorem~3.11]{Overbeck_Diss_InterleveadGab}. 
We conjecture that $\tau \geq s$ is only a technical restriction and that the results hold equivalently for $\tau < s$.

Alternatively, we can bound the failure probability as follows. Assume,
the matrix $\Q_0^{[0]}$ consists of random values over $\Fqm$. This assumption seems to be reasonable, since in \cite {Loidreau_Overbeck_Interleaved_2006} and \cite {SidBoss_InterlGabCodes_ISIT2010} it is assumed that 
$\r^{(1)},\r^{(2)},  \dots, \r^{(s)}$ are random vectors in $\Fqm^n$. In our approach, the values of $\Q_0^{[0]}$ are obtained from a linear system of equations, where each $q_{i,0}$ is multiplied with the coefficients of a different $\r^{(i)}$.
\begin{lemma}[Alternative Calculation of Failure Probability]\label{lem:intgab_failureproba_altern}
Let\\ $\rk\big(\e^{(1)T} \ \e^{(2)T} \ \dots \ \e^{(s)T} \big)  = t\leq \tau$, where $\tau =\lfloor(sn - \sum_{i=1}^{s}k^{(i)})/(s+1)\rfloor$, let $k = \max_i\{k^{(i)}\}$, let $\Mat{Q}$ be defined as in \eqref{eq:factorization_system_submat} and let $q^{(j)}_{1,0}, q^{(j)}_{2,0},\dots, q^{(j)}_{s,0}$ for $j = 1,\dots, d_I$ be random elements uniformly distributed over $\Fqm$. 
Then, 
\begin{equation*}
P\big(\rk(\Q) < sk\big) \leq \frac{4}{q^{(m(d_I+1-s))}}= 4 q^{-m\left(s(n-\tau)-\sum_{i=1}^{s}k^{(i)}-t+1\right)}.
\end{equation*}
\end{lemma}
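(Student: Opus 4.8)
The statement to prove is Lemma~\ref{lem:intgab_failureproba_altern}: under the assumption that the entries $q^{(j)}_{1,0},\dots,q^{(j)}_{s,0}$ ($j=1,\dots,d_I$) are uniform and independent over $\Fqm$, the probability that $\rk(\Q)<sk$ is at most $4\,q^{-m(d_I+1-s)}$, which equals $4\,q^{-m(s(n-\tau)-\sum_i k^{(i)}-t+1)}$ by~\eqref{eq:intgab_dimsolspace_interpolation} (with $d_I=s(n-\tau+1)-\sum_i k^{(i)}-t$ for $t\le\tau$). The skeleton is: reduce the event $\{\rk(\Q)<sk\}$ to $\{\rk(\Q_0^{[0]})<s\}$ via Lemma~\ref{lem:matfactsmall}, then bound the latter as the probability that a $d_I\times s$ matrix with i.i.d.\ uniform entries over $\Fqm$ fails to have full column rank $s$.

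First I would invoke Lemma~\ref{lem:matfactsmall}: since $\rk(\Q_0^{[0]})=s$ forces $\rk(\Q)=sk$, we have $P(\rk(\Q)<sk)\le P(\rk(\Q_0^{[0]})<s)$. Here $\Q_0^{[0]}$ is the $d_I\times s$ matrix whose $(j,i)$ entry is $q^{(j)}_{i,0}$, i.e.\ exactly the quantities assumed to be uniform i.i.d.\ over $\Fqm$ (note $[0]$ is the identity $q$-power). So the task becomes: bound the probability that a uniformly random $d_I\times s$ matrix over the field $\Fqm$ has rank $<s$, where $d_I\ge s$ by the standing hypothesis $\tau\le\lfloor(sn-\sum_i k^{(i)})/(s+1)\rfloor$ and~\eqref{eq:tau_restri_2}.

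Second, the rank bound. A standard counting argument: for a uniform $d_I\times s$ matrix over a field of size $Q:=q^m$, the probability that the $s$ columns are linearly dependent is at most $\sum_{j=0}^{s-1} Q^{j-d_I} = \frac{Q^{s-d_I}-Q^{-d_I}}{Q-1} < \frac{Q^{s-d_I}}{Q-1}$; one builds a full-rank matrix column by column, the $\ell$-th column avoiding the span of the previous $\ell-1$, so the failure probability for column $\ell$ is $Q^{\ell-1-d_I}$ and we union-bound. For $Q\ge 2$ this is bounded by $2\,Q^{-(d_I-s+1)}$, and for the regime of interest (large $q^m$) one can afford the looser constant $4$; hence $P(\rk(\Q_0^{[0]})<s)\le 4\,q^{-m(d_I+1-s)}$. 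Finally substitute $d_I=s(n-\tau+1)-\sum_i k^{(i)}-t$ (valid since $t\le\tau$, using~\eqref{eq:intgab_dimsolspace_interpolation}) to rewrite the exponent as $s(n-\tau)-\sum_{i=1}^{s}k^{(i)}-t+1$, matching the claimed closed form.

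**Main obstacle.** The analytic content is entirely routine; the delicate point is the \emph{modeling assumption}. One must be careful that Lemma~\ref{lem:matfactsmall} legitimately reduces $\{\rk(\Q)<sk\}$ to $\{\rk(\Q_0^{[0]})<s\}$ — this is a one-sided implication, so only an upper bound on the failure probability is obtained, which is exactly what is claimed. The other subtlety is that the $q^{(j)}_{i,0}$ are not literally uniform: they are coordinates of a basis of $\ker(\R)$, hence deterministic functions of the received words. The lemma explicitly \emph{posits} their uniformity as a heuristic (the paragraph preceding the lemma argues this is "reasonable" because each $q_{i,0}$ is paired with a different received word $\r^{(i)}$), so in the proof I would simply state that under this assumption the entries of $\Q_0^{[0]}$ are i.i.d.\ uniform over $\Fqm$ and proceed with the counting bound; no genuine probabilistic independence needs to be established, only invoked. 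Thus the write-up is short: cite Lemma~\ref{lem:matfactsmall}, do the column-by-column count, bound the geometric sum, and substitute the value of $d_I$.
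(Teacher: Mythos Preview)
Your proposal is correct and follows the same skeleton as the paper: reduce $\{\rk(\Q)<sk\}$ to $\{\rk(\Q_0^{[0]})<s\}$ via Lemma~\ref{lem:matfactsmall}, then bound the probability that a uniform $d_I\times s$ matrix over $\Fqm$ has deficient column rank, and finally substitute $d_I=s(n-\tau+1)-\sum_i k^{(i)}-t$. The only difference is in the counting step: the paper writes down the exact count of $d_I\times s$ matrices over $\Fqm$ of rank at most $s-1$ (a sum of products of Gaussian-type factors) and crudely upper-bounds it by $4\,q^{m((d_I+s)(s-1)-(s-1)^2)}$, whereas you use the more elementary column-by-column union bound $\sum_{j=0}^{s-1}Q^{j-d_I}<Q^{s-d_I}/(Q-1)\le 2\,Q^{-(d_I-s+1)}$ with $Q=q^m$. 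Your argument is shorter and avoids the somewhat opaque combinatorial expression; both reach the same $4\,q^{-m(d_I-s+1)}$ bound (indeed, your route gives the sharper constant~$2$).
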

\begin{proof}
Due to $d_I \geq s$ and Lemma~\ref{lem:matfactsmall}, if $\rk(\Q^{[0]}_0) = s$, then $\rk(\Q)=sk$. Hence,
$P\left(\rk(\Q) < sk\right) \leq P(\rk(\Q^{[0]}_0) < s)$. When $q^{(j)}_{1,0},  \dots, q^{(j)}_{s,0}$, $\forall j \in \intervallincl{1}{d_I}$, are random elements from $\Fqm$, we can bound $P(\rk(\Q^{[0]}_0) < s)$ by the probability that a random $(d_I \times s)$-matrix over $\Fqm$ has rank less than $s$:
 \begin{align*}
\hspace{5ex} P\big(\rk(\Q) < sk\big) &\leq P\big(\rk(\Q^{[0]}_0)< s\big) \\
& \leq \frac{\sum\limits_{j=0}^{s-1}\prod\limits_{h=0}^{j-1}\frac{q^{d_I}-q^h}{q^j-q^h} \prod\limits_{i=0}^{j-1} (q^s-q^i)}{q^{m s d_I }} 
 < \frac{4q^{m\left((d_I+s)(s-1)-(s-1)^2\right)}}{q^{m s d_I }}\\
  &= \frac{4}{q^{m(d_I-s+1)}} = 4 q^{-m\left(s(n-\tau)-\sum_{i=1}^{s}k^{(i)}-t+1\right)}.
 \end{align*}
\qed\end{proof}
Lemma~\ref{lem:intgab_failureproba_altern} does not have the technical restriction $\tau \geq s$ as Theorem~\ref{theo:our_failureprob} and the bounds from \cite{Loidreau_Overbeck_Interleaved_2006,SidBoss_InterlGabCodes_ISIT2010}. 
Theorem~\ref{thm:unique_decoding} summarizes our results, Algorithm~\ref{algo:unique_decoding} summarizes the steps of our decoder in pseudocode and Example~\ref{ex:failure_prob} illustrates the failure probability. 

\begin{theorem}[Unique Decoding of Interleaved Gabidulin Codes]\label{thm:unique_decoding}
Let the interleaved Gabidulin code $\IntGab{s;n,k^{(1)},\dots,k^{(s)}}$ over $\Fqm$ consist of the elementary codewords $\c^{(i)} = f^{(i)}(\vec{g})$, where $\deg_q f^{(i)}(x)$ $ < k^{(i)}$, $\forall i \in \intervallincl{1}{s}$, and let the given elementary received words $\r^{(i)}$, $\forall i \in \intervallincl{1}{s}$, consist of random elements uniformly distributed over $\Fqm$.
Then, with probability at least
\begin{equation*}
1- 4 q^{-m\left(s(n-\tau)-\sum_{i=1}^{s}k^{(i)}-t+1\right)},
\end{equation*} 
we can find a unique solution $f^{(1)}(x), \dots, f^{(s)}(x)$ 
such that its evaluation at $\vec{g}$ is in rank distance 
\begin{equation*}
t \leq \tau = \Big\lfloor\frac{sn   - \sum_{i=1}^{s}k^{(i)}}{s+1}\Big\rfloor
\end{equation*}
to $(\r^{(1)T} \ \r^{(2)T} \ \dots \ \r^{(s)T})^T$ with overall complexity at most $\mathcal O(s^3n^2)$.
\end{theorem}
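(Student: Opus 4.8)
The plan is to assemble the statement from the ingredients already established in Sections~\ref{subsec:intgab_interpolation_algo} and~\ref{subsec:intgab_uniquedecoding}. First I would fix $\tau = \lfloor(sn - \sum_{i=1}^{s}k^{(i)})/(s+1)\rfloor$ and note that $\tau \leq (sn-\sum_{i=1}^{s}k^{(i)})/(s+1) < (sn-\sum_{i=1}^{s}k^{(i)}+s)/(s+1)$, so this $\tau$ satisfies the strict inequality~\eqref{eq:decradius_het}; hence Lemma~\ref{lem:nonzero_interpolpoly} yields a non-zero interpolation polynomial $Q(x,y_1,\dots,y_s)$, and by~\eqref{eq:tau_restri_2} (using $t\leq\tau$) the kernel of the interpolation matrix $\R$ has dimension $d_I \geq s$. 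I would then take $d_I$ basis polynomials $Q^{(1)},\dots,Q^{(d_I)}$ of this kernel and assemble the root-finding matrix $\Q$ of~\eqref{eq:factorization_system_submat}.

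Next I would establish correctness conditioned on $\rk(\Q) = sk$. By Theorem~\ref{theo:interpolpolyiszero}, the transmitted evaluation polynomials satisfy $Q^{(h)}\big(x,f^{(1)}(x),\dots,f^{(s)}(x)\big)=0$ for all $h$, so the coefficient vector $\vec f$ of $(f^{(1)}(x),\dots,f^{(s)}(x))$ lies in the affine solution set of $\Q\cdot\vec f = \vec q_0$. When $\rk(\Q)=sk$ this solution set is a single point, hence the unique solution of the root-finding system coincides with the transmitted codeword, which by hypothesis is at rank distance $t\leq\tau$ from $(\r^{(1)T}\ \dots\ \r^{(s)T})^T$. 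So the decoder returns the correct answer precisely when $\rk(\Q)=sk$ and otherwise declares failure.

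For the failure probability I would use Lemma~\ref{lem:matfactsmall}, by which $\rk(\Q_0^{[0]})=s$ forces $\rk(\Q)=sk$, so the failure event is contained in $\{\rk(\Q_0^{[0]})<s\}$; Lemma~\ref{lem:intgab_failureproba_altern} then bounds its probability by $4q^{-m(s(n-\tau)-\sum_{i=1}^{s}k^{(i)}-t+1)}$ under the modelling assumption that the entries $q^{(j)}_{i,0}$ are uniform over $\Fqm$, which is inherited from the assumption that $\r^{(1)},\dots,\r^{(s)}$ are random. Complementing gives the stated success probability; for $k^{(i)}=k$ one may alternatively invoke Theorem~\ref{theo:our_failureprob} and the bounds of \cite{Loidreau_Overbeck_Interleaved_2006,SidBoss_InterlGabCodes_ISIT2010}.

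Finally, for the complexity I would add the two linear-algebra steps: the interpolation step costs $\mathcal O(s^2 n(n-\tau))\leq\mathcal O(s^2 n^2)$ operations in $\Fqm$ via the fast linearized interpolation of \cite{Xie2011General} (or cubic in $\Fqm$ with plain Gaussian elimination on $\R$, still within $\mathcal O(s^3 n^2)$), the root-finding step costs $\mathcal O(s^3 k^2)\leq\mathcal O(s^3 n^2)$ by the root-finding complexity lemma of Section~\ref{subsec:rootfinding}, and the single rank test / codeword verification is dominated by these; the total is $\mathcal O(s^3 n^2)$. I expect the only delicate point to be the probabilistic step: the bound of Lemma~\ref{lem:intgab_failureproba_altern} treats $\Q_0^{[0]}$ as a uniformly random matrix, so the substantive argument (already carried out there) is the transfer of randomness from the received words to these coefficients; the rest of this proof is bookkeeping over results proved earlier.
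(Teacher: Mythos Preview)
Your proposal is correct and mirrors the paper's own treatment: Theorem~\ref{thm:unique_decoding} is explicitly presented there as a summary of the preceding results, and you assemble exactly the same ingredients (Lemma~\ref{lem:nonzero_interpolpoly} and \eqref{eq:tau_restri_2} for the choice of $\tau$ and $d_I\geq s$, Theorem~\ref{theo:interpolpolyiszero} for correctness when $\rk(\Q)=sk$, Lemma~\ref{lem:matfactsmall} together with Lemma~\ref{lem:intgab_failureproba_altern} for the failure probability, and the complexity bounds of Sections~\ref{subsec:interpolation} and~\ref{subsec:rootfinding}). Your flag that the probabilistic step rests on the heuristic transfer of uniformity from the $\r^{(i)}$ to the entries of $\Q_0^{[0]}$ is apt and matches the paper's own caveat preceding Lemma~\ref{lem:intgab_failureproba_altern}.
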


\printalgoWidth{
\caption{\newline$f^{(1)}(x), \dots, f^{(s)}(x)$ or "decoding failure"  $\leftarrow$\textsc{UniqueDecIntGab}$\big(\r^{(1)}, \dots, \r^{(s)} \big)$}
\label{algo:unique_decoding}
\DontPrintSemicolon
\SetAlgoVlined
\BlankLine
\LinesNumberedHidden
\SetKwInput{KwIn}{\underline{Input}}
\SetKwInput{KwOut}{\underline{Output}}
\SetKwInput{KwIni}{\underline{Initialize}}
\KwIn{$\r^{(i)} = \vecelements{r^{(i)}}\in \Fqm^n$ with $n\leq m$, $\forall i \in \intervallincl{1}{s}$}
\BlankLine
\textbf{Interpolation step:} \;
\vspace{0.2ex}
\ShowLn Define $\Mat{R}$ as in \eqref{eq:interpolation_matrix}\;
\ShowLn Solve $\Mat{R} \cdot \Mat{q} = \0$ for $\Mat{q}\in \Fqm^{n-\tau+\sum_{i=1}^{s}(n-\tau-k^{(i)}+1)}$\nllabel{algoline:calcq_2} \;\vspace{0.2ex}
\ShowLn Define $Q(x,y_1,\dots,y_s) = Q_0(x)+Q_1(x)+\dots+Q_s(x)$ as in \eqref{eq:coeffs_multivpoly} where $\vec{q}$ is calculated in Line~\ref{algoline:calcq_2}\;
\BlankLine
\textbf{Root-finding step:} \;
\vspace{0.2ex}
\ShowLn Define $\Mat{Q}$ as in \eqref{eq:intgab_notation_Qij}, \eqref{eq:factorization_system_submat}\;\vspace{0.2ex}
\ShowLn \If{$\rk(\Mat{Q)}=sk$}
{\BlankLine
Solve $\Mat{Q} \cdot \vec{f} = \vec{q}_0$ for $\vec{f}$\;
Define $f^{(1)}(x), \dots, f^{(s)}(x)$ from $\vec{f}$\;
\KwOut{$f^{(1)}(x), \dots, f^{(s)}(x)$}
}
\Else{\KwOut{"decoding failure"}}
\BlankLine}{1}

\begin{example_bold}[Failure Probabilities]\label{ex:failure_prob}
Consider $\IntGabNoInput[s=2;n=7,k^{(1)} = k^{(2)}=2]$ code over $\mathbb{F}_{2^7}$.
The maximum decoding radius for unique as well as for list decoding according to \eqref{eq:decradius_het} and \eqref{eq:tau_restri_2} is $\tau = 3$ whereas a BMD decoder guarantees to correct all errors of rank at most $\Uniquecorrcap=2$.

In order to estimate the failure probability,
we first simulated $10^7$ random error matrices $(\e^{(1)T} \ \e^{(2)T} \ \dots \ \e^{(s)T} )^T\in\Fqm^{s \times n}$, uniformly distributed over all matrices of rank $t = \tau = 3$. 
The following simulated probabilities occurred:
\begin{equation*}
P\big(\rk(\Q) < sk\big)  = P\big(\rk(\S) < \tau\big) = P\big(\rk(\R_R) < n-1\big) = 6.12\cdot 10^{-5}.
\end{equation*}
As a comparison, the average list size calculated with Lemma~\ref{lem:average_listsize} is $\overline{\ell_I} < 1 + 6.104 \cdot 10^{-5}$,
the upper bound from Theorem~\ref{theo:our_failureprob} (and therefore the upper bound from \eqref{eq:intgab_failure_lo}, \cite {Loidreau_Overbeck_Interleaved_2006}) gives 
\begin{equation*}
P\big(\rk(\Q) < sk\big) \leq P\big(\rk(\R_R) < n-1\big) \leq 0.04632, 
\end{equation*}
and the bound from Lemma~\ref{lem:intgab_failureproba_altern} gives
\begin{equation*}
P\big(\rk(\Q) < sk\big) \leq 4 q^{-m\left(s(n-k-\tau)-\tau+1\right)} = 2.44 \cdot 10^{-4}.
\end{equation*}

Second, in order to estimate the performance compared to BMD decoding, we simulated $10^7$ transmissions over a $q$-ary symmetric rank channel, which is defined in analogy to 
the usual $q$-ary symmetric channel such that
\begin{equation*}
P\left(\rk (\e^{(1)T} \ \e^{(2)T} \ \dots \ \e^{(s)T} ) = t \right) = \binom{n}{t} p_{qsc}^t(1-p_{qsc})^{n-t}.
\end{equation*}
Fig.~\ref{fig:sim_interleavedgabidulin} shows the block error probability of the transmission of the $\IntGabNoInput[s=2;n=7,k^{(1)} = k^{(2)}=2]$ code over such a symmetric rank channel.
The result is dominated by the probability of $t > \tau$, where all four shown decoders always fail.
The list decoder only fails when $t > \tau$.
However, Fig.~\ref{fig:sim_interleavedgabidulin} shows that the failure probability for $t \leq \tau$ is almost negligible compared to the probability that $t > \tau$.
\end{example_bold}
\begin{figure}[htb]
\centering
\includegraphics[scale=1]{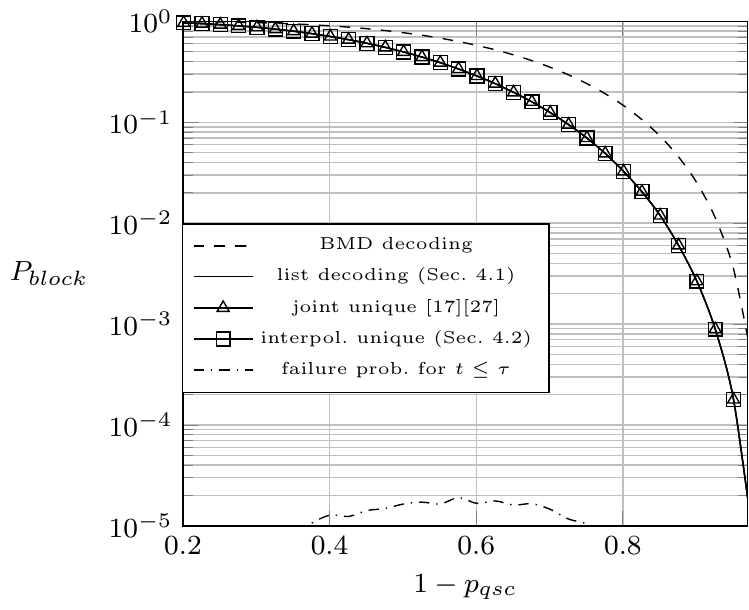}
\caption{Simulation results for a $\IntGabNoInput[s=2;n=7,k^{(1)} = 2,k^{(2)}=2]$ code over $\mathbb{F}_{2^7}$ with $10^7$ transmissions over a $q$-ary symmetric rank channel.}
\label{fig:sim_interleavedgabidulin}
\end{figure}

From the simulation results, we conjecture that 
$\rk(\Q)<sk$ \emph{if and only if} $\rk(\S) <\tau$ and $\rk(\R_R)<n-1$, i.e., that Lemma~\ref{lem:connection_interpol_loidreau} holds in \emph{both} directions.

Compared to the unique decoding approaches from \cite{Loidreau_Overbeck_Interleaved_2006,SidBoss_InterlGabCodes_ISIT2010}, our unique decoder achieves the same asymptotic time complexity and at most the same failure probability. The main advantage of our approach is that it can directly be used as a list decoder if an application can take advantage of that.




\section{Error-Erasure Decoding}\label{subsec:intgab_errorerasure}
Applications like random linear network coding provide additional side information about the occurred error, see e.g. \cite{silva_rank_metric_approach}.
Such information can be used to declare erasures and thus, to increase the decoding performance. 
In comparison to classical erasure decoders in Hamming metric, we distinguish two types of erasures in rank metric: row erasures and column erasures.
This section provides a generalization of our approach to interpolation-based error-erasure decoding of {interleaved} Gabidulin codes over $\Fqm$ with $n=m$. 
We consider the most general form of row and column erasures as in \cite{silva_rank_metric_approach,GabidulinPilipchuck_ErrorErasureRankCodes_2008} and show how the additional information can be incorporated into our decoding algorithm from the previous sections.

Notice that in this section, we consider only $n=m$. 
On the one hand, this simplifies the notations, but on the other hand, Lemma~\ref{lem:silva_qreverse_matrix} only holds for $n=m$. 

We show that the presented error-erasure list decoding approach is able
to reconstruct all codewords of an $\IntGab{s;n,k^{(1)},\dots,k^{(s)}}$ code over $\Fqm$ for $n=m$ with asymptotic complexity $\OCompl{n^2}$ operations over $\Fqm$ in distance at most 
\begin{equation*}
\tau < \frac{sn-\sum_{i=1}^{s}(k^{(i)}+\numbRowErasures^{(i)}+\numbColErasures)+s}{s+1},
\end{equation*}
from the received word, where $\numbRowErasures^{(i)}$ denotes the rank of the row erasures and $\numbColErasures$ the rank of the column erasures (see also following description).

\subsection{Row and Column Erasures and the Generalized Key Equation}
Let $\NormbasisOrdered = (\Normelement^{[0]} \ \Normelement^{[1]} \ \dots \ \Normelement^{[n-1]})$ and $\NormbasisDualOrdered = (\Normdualelement^{[0]} \ \Normdualelement^{[1]} \ \dots \ \Normdualelement^{[n-1]})$ denote (ordered) normal bases of $\Fqm$ over $\Fq$, which are dual to each other (see, e.g., \cite{Menezes2010Applications}).
Consider an $\IntGab{s;n,k^{(1)},\dots,k^{(s)}}$ code over $\Fqm$ of length $n=m$, defined by $\vec{g} = (\Normdualelement^{[0]} \ \Normdualelement^{[1]} \ \dots \ \Normdualelement^{[n-1]})$ as in Definition~\ref{def:int_gab}. The parity-check matrices of the elementary $\Gab{n,k^{(i)}}$ codes are $\Mooremat{n-k^{(i)}}{\vec{h}^{(i)}} = \Mooremat{n-k^{(i)}}{(\Normelement^{[k]} \ \Normelement^{[k^{(i)}+1]} \ \dots \ \Normelement^{[k^{(i)}+n-1]})}$, $\forall i \in \intervallincl{1}{s}$.
We assume that side information of the channel is given in form of:
\begin{itemize}
\item[$\bullet$] $\numbRowErasures^{(i)}$, $\forall i \in \intervallincl{1}{s}$, {row erasures} (in \cite{silva_rank_metric_approach} called "deviations") and
\item[$\bullet$] $\numbColErasures$ {column erasures} (in \cite{silva_rank_metric_approach} called "erasures"),
\end{itemize}  
such that the interleaved error matrix can be rewritten by:
\begin{equation}\label{eq:decomp_errorerasures_vector_intgab}
\begin{pmatrix}
\e^{(1)}\\
\e^{(2)}\\
\vdots\\
\e^{(s)}\\
\end{pmatrix}
= \begin{pmatrix}
\a^{(1,R)}\cdot \B^{(1,R)}\\
\a^{(2,R)}\cdot \B^{(2,R)}\\
\vdots\\
\a^{(s,R)}\cdot \B^{(s,R)}\\
\end{pmatrix}+ 
\begin{pmatrix}
\a^{(1,C)}\\
\a^{(2,C)}\\
\vdots\\
\a^{(s,C)}\\
\end{pmatrix}
\cdot \B^{(C)} + 
\begin{pmatrix}
\a^{(1,E)}\\
\a^{(2,E)}\\
\vdots\\
\a^{(s,E)}\\
\end{pmatrix} 
\cdot\B^{(E)} \in \Fqm^{s \times n},
\end{equation}
where $\a^{(i,R)} \in \Fqm^{\numbRowErasures^{(i)}}$, $\B^{(i,R)} \in \Fq^{\numbRowErasures^{(i)} \times n}$, 
$\a^{(i,C)} \in \Fqm^{\numbColErasures}$, $\B^{(C)} \in \Fq^{\numbColErasures\times n}$,
$\a^{(i,E)} \in \Fqm^{t}$, $\B^{(E)} \in \Fq^{t \times n}$ for all $i\in\intervallincl{1}{s}$,
and $\a^{(1,R)}, \a^{(2,R)},\dots, \a^{(s,R)}$ and $\B^{(C)}$ are known on the receiver side.

This decomposition is also shown in Fig.~\ref{fig:error_erasure} for one $\e^{(i)}$, where $\e^{(i)}$ as well as $\a^{(i,R)}$, $\a^{(i,C)}$ and $\a^{(i,E)}$ are represented by their corresponding matrices over $\Fq$.
\begin{figure}[htb, scale = 1]
\centering
\includegraphics[scale=1]{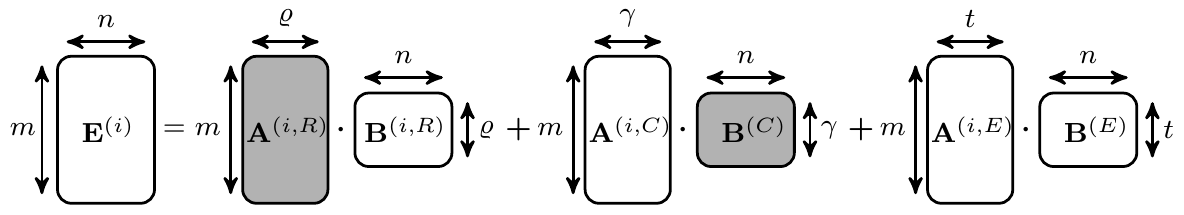}
\caption{Illustration of interleaved row erasures, column erasures and (full) errors in rank metric. The known matrices (given by the channel) are filled with gray.}
\label{fig:error_erasure}
\end{figure}

Lemma~\ref{lem:interleaved_mod_error} shows later why the $\a^{(i,R)}$ and $\B^{(i,R)}$ can be different whereas $\B^{(C)}$ has to be common for all $i\in \intervallincl{1}{s}$.
This model of errors and erasures is slightly more general than the one in \cite [Eq.~(19)]{LiSidorenkoChen-TransformDomainDecodingGabidulin}, where all $\a^{(i,R)}$ are equal.

%
%

Based on the known matrix $\B^{(C)}$, we can calculate the following basis of the row space of the column erasures prior to the decoding process:
\begin{equation}\label{eq:calc_di_errorerasure_}
d^{(C)}_i= \sum_{j=0}^{n-1}B^{(C)}_{i,j}g_j^\perp = \sum_{j=0}^{n-1}B^{(C)}_{i,j}\Normelement^{[j]},\quad \forall i\in \intervallexcl{0}{\numbColErasures}.
\end{equation}
Further, we define the linearized polynomials $\Gamma^{(C)}(x)$, $\Lambda^{(i,R)}(x)$ and $\Lambda^{(i,E)}$, $\forall i \in \intervallincl{1}{s}$, as linearized polynomials of smallest $q$-degree such that:
\begin{align}
\Gamma^{(C)}\big(d^{(C)}_j\big) &=0, \quad \forall j\in \intervallexcl{0}{\numbColErasures},\nonumber\\
\Lambda^{(i,R)}\big(a^{(i,R)}_j\big) &= 0, \quad \forall j\in \intervallexcl{0}{\numbRowErasures^{(i)}},\ i \in \intervallincl{1}{s},\nonumber\\
\Lambda^{(i,E)}\big(\Lambda^{(i,R)}(a^{(i,E)}_j)\big) &= 0, \quad\forall  j \in \intervallexcl{0}{t},\ i \in \intervallincl{1}{s}.\label{eq:defi_errorloc_span_errdev_intgab}
\end{align}
Therefore, $\Gamma^{(C)}(x)$ and $\Lambda^{(i,R)}(x)$, $\forall i \in \intervallincl{1}{s}$, can be calculated in the beginning of the decoding process since $\a^{(i,R)}$, $\forall i \in \intervallincl{1}{s}$, and $\B^{(C)}$ are known.

In the following, let $\qreciproc{p}(x) = \sum_{j=0}^{m-1}\qreciproc{p}_jx^{[j]}$ denote the \emph{full $q$-reverse linearized polynomial} of $p(x) \in \Linpolyring$, defined by the coefficients $\qreciproc{p}_j = p_{-j \!\mod\! m}^{[j]}$, $\forall j \in \intervallincl{0}{m}$, as in \cite{SilvaKschischang-FastEncodingDecodingGabidulin-2009,Silva_PhD_ErrorControlNetworkCoding}.
The following lemma shows that for $n=m$, the full $q$-reverse is closely related to the transpose of the associated evaluated matrix of $p(x)$.

\begin{lemma}[Evaluated Matrix of $q$-Reverse {\cite[Lemma~6.3]{Silva_PhD_ErrorControlNetworkCoding}}]\label{lem:silva_qreverse_matrix}
Let $p(x)\in \Linpolyring$, $\deg_q p(x)<m$, and its full $q$-reverse $\qreciproc{p}(x)$ with $\qreciproc{p}_i = p_{-i \mod m}^{[i]}$, for $i \in \intervallexcl{0}{m}$, be given. Let $\mathcal A = \setelements{\alpha}{m}$ and  $\mathcal B = \setelements{\beta}{m}$ be bases of $\,\Fqm$ over $\Fq$ and let $\mathcal A^\perp = \setelements{\alpha^\perp}{m}$ and $\mathcal B^\perp = \setelements{\beta^\perp}{m}$ denote their dual bases.
Let
\begin{equation*}
\vecevalm{p}{\alpha} = \vecelementsm{\beta} \cdot \Mat{P}, 
\end{equation*}
where $\Mat{P}\in \Fq^{m \times m}$. Then,
\begin{equation*}
\vecevalm{\qreciproc{p}}{\beta^\perp} = \vecelementsm{\alpha^\perp} \cdot \Mat{P}^T.
\end{equation*}
\end{lemma}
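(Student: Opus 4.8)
The plan is to read off both $\Mat P$ and $\Mat P^T$ entrywise through the trace bilinear form $\langle a,b\rangle = \Tr(ab)$ on $\Fqm$, for which $(\mathcal A,\mathcal A^\perp)$ and $(\mathcal B,\mathcal B^\perp)$ are pairs of dual bases. The hypothesis $\vecevalm{p}{\alpha} = \vecelementsm{\beta}\cdot\Mat P$ says $p(\alpha_j) = \sum_{i} P_{i,j}\,\beta_i$ for every $j$, so the defining relation $\Tr(\beta_i^\perp\beta_\ell) = \delta_{i\ell}$ of the dual basis yields $P_{i,j} = \Tr\big(\beta_i^\perp\, p(\alpha_j)\big)$. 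Dually, to prove $\vecevalm{\qreciproc p}{\beta^\perp} = \vecelementsm{\alpha^\perp}\cdot\Mat P^T$ it suffices to show that the coefficient of $\alpha_i^\perp$ in $\qreciproc p(\beta_j^\perp)$ equals $(\Mat P^T)_{i,j}=P_{j,i}$; since $\Tr(\alpha_i\alpha_\ell^\perp)=\delta_{i\ell}$, that coefficient is $\Tr\big(\alpha_i\,\qreciproc p(\beta_j^\perp)\big)$. Hence the whole lemma collapses to the single identity
\begin{equation*}
\Tr\big(a\cdot \qreciproc p(b)\big) = \Tr\big(b\cdot p(a)\big)\qquad\text{for all }a,b\in\Fqm .
\end{equation*}

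The key observation is that $\qreciproc p$ is the \emph{adjoint} of $p$ with respect to this form. I would first show $\Tr\big(b\,p(a)\big) = \Tr\big(a\,p^*(b)\big)$ with $p^*(x)\defeq\sum_k p_k^{[-k]}x^{[-k]}$: expanding $\Tr\big(b\,p(a)\big)=\sum_k\Tr\big(b\,p_k\,a^{[k]}\big)$ and using that the trace is invariant under $x\mapsto x^{[1]}$ (so each summand equals its $[-k]$-th Frobenius image $\Tr\big(b^{[-k]}p_k^{[-k]}a\big)$), summation gives $\Tr\big(a\cdot\sum_k p_k^{[-k]}b^{[-k]}\big)=\Tr\big(a\,p^*(b)\big)$. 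It then remains to identify $p^*$ with $\qreciproc p$: because $\deg_q p<m$ and we operate on $\Fqm$ with $n=m$ (so that $x^{[m]}=x$ and all bracket exponents may be reduced modulo $m$), the substitution $j=-k\bmod m$ turns the monomial $p_k^{[-k]}x^{[-k]}$ into $p_{-j\bmod m}^{[j]}x^{[j]}$, matching exactly the coefficient description $\qreciproc p_j=p_{-j\bmod m}^{[j]}$ of the full $q$-reverse. A direct check also shows $(\cdot)^*$ is an involution, i.e. $(\qreciproc p)^*=p$.

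Assembling these pieces finishes the proof: applying the adjoint identity to $\qreciproc p$ in place of $p$ and using $(\qreciproc p)^*=p$ gives $\Tr\big(a\,\qreciproc p(b)\big)=\Tr\big(b\,p(a)\big)$, whence $\Tr\big(\alpha_i\,\qreciproc p(\beta_j^\perp)\big)=\Tr\big(\beta_j^\perp\,p(\alpha_i)\big)=P_{j,i}$, as required. I expect no conceptual obstacle here; the only real care is bookkeeping --- keeping the index arithmetic modulo $m$ consistent (this is precisely where $n=m$ is used) and tracking which of the four bases plays the role of evaluation points and which of expansion basis on each side, so that what drops out is the transpose of $\Mat P$ rather than some other rearrangement of its entries.
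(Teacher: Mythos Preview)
Your proof is correct. Note, however, that the paper does not actually supply a proof of this lemma: it is quoted verbatim from \cite[Lemma~6.3]{Silva_PhD_ErrorControlNetworkCoding} and used as a black box, so there is no in-paper argument to compare against. Your approach --- reducing the matrix identity to the adjoint relation $\Tr\big(a\,\qreciproc p(b)\big)=\Tr\big(b\,p(a)\big)$ via dual bases, and then proving that relation by Frobenius-shifting each monomial under the trace --- is the standard one and is essentially what Silva does. One minor remark: once you have established $\Tr\big(b\,p(a)\big)=\Tr\big(a\,p^*(b)\big)$ and identified $p^*=\qreciproc p$, you are already done; the detour through the involution $(\qreciproc p)^*=p$ is unnecessary (though not wrong).
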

Based on this lemma, we can establish a (transformed) key equation, incorporating errors and row/column erasures, which is important for the proof of Lemma~\ref{lem:interleaved_mod_error}.

\begin{theorem}[Transformed Key Equation {\cite[App.~A.2]{Wachterzeh_PhD_2013}}]\label{theo:gen_trans_key_equation}
Let $\r^{(i)} = \c^{(i)} +\e^{(i)}$, with $\c^{(i)} \in \Gab{n,k^{(i)}}$, $\forall i \in \intervallincl{1}{s}$, over $\Fqm$ with $n=m$, 
be the given elementary received words and let
$\qtrafo{r}^{(i)}(x) = f^{(i)}(x)+\qtrafo{e}^{(i)}(x)$ be their linearized interpolation polynomial as in \eqref{eq:received_lagrange}.
Let $\Gamma^{(C)}(x)$, $\Lambda^{(i,R)}(x)$ and $\Lambda^{(i,E)}(x)$ be defined as in \eqref{eq:defi_errorloc_span_errdev_intgab}.

Then, these polynomials satisfy the following transformed key equation:
\begin{equation}\label{eq:transformed_keyequation_gen}
\Lambda^{(i,E)}\Big(\Lambda^{(i,R)} \big(\qtrafo{e}^{(i)}(\qreciproc{\Gamma^{(C)}}(x))\big)\Big) 
\equiv 0 \mod (x^{[m]}-x),\quad \forall i \in \intervallincl{1}{s}.
\end{equation}
\end{theorem}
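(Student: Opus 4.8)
The plan is to read the statement as an identity between $\Fq$-linear endomorphisms of $\Fqm$, i.e.\ between linearized polynomials modulo $x^{[m]}-x$. First I would note that $x^{[m]}-x$ is central in $\Linpolyring$ and is the minimal subspace polynomial of $\Fqm$, so a linearized polynomial is $\equiv 0 \mod (x^{[m]}-x)$ exactly when it vanishes on all of $\Fqm$, equivalently on the basis $\NormbasisDualOrdered$. Hence it suffices to evaluate the composed polynomial $P^{(i)}(x) := \Lambda^{(i,E)}(\Lambda^{(i,R)}(\qtrafo{e}^{(i)}(\qreciproc{\Gamma^{(C)}}(x))))$ at $x=\Normdualelement^{[j]}$ for every $j \in \intervallexcl{0}{m}$ and show the outcome is $0$. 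The workhorse is Lemma~\ref{lem:silva_qreverse_matrix}: applied with the normal basis $\NormbasisOrdered$ and its dual $\NormbasisDualOrdered$ it identifies ``take the full $q$-reverse'' with ``transpose the associated $\Fq$-matrix''. From this I would extract two facts used repeatedly: working modulo $x^{[m]}-x$, the $q$-reverse is an anti-automorphism for composition, $\qreciproc{(f\circ g)} \equiv \qreciproc{g}\circ\qreciproc{f}$, since $(\Mat{F}\Mat{G})^T=\Mat{G}^T\Mat{F}^T$; and, because $\vec g = \NormbasisDualOrdered$ and $\qtrafo{e}^{(i)}(\Normdualelement^{[j]})=e^{(i)}_j$, the polynomial $\qreciproc{\qtrafo{e}^{(i)}}$ evaluated on $\NormbasisDualOrdered$ reproduces column-by-column the transpose of the $\Fq$-matrix of $\e^{(i)}$ with respect to $\NormbasisOrdered$.

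Next I would unfold the error decomposition. By the anti-automorphism property, $\qtrafo{e}^{(i)}\circ\qreciproc{\Gamma^{(C)}} \equiv \qreciproc{\,\Gamma^{(C)}\circ\qreciproc{\qtrafo{e}^{(i)}}\,} \mod (x^{[m]}-x)$, so I analyse $\Gamma^{(C)}\circ\qreciproc{\qtrafo{e}^{(i)}}$. Substituting \eqref{eq:decomp_errorerasures_vector_intgab} into the matrix description above and regrouping the summations, $\qreciproc{\qtrafo{e}^{(i)}}(\Normdualelement^{[j]})$ becomes an $\Fq$-linear combination of the field elements $d^{(i,R)}_u := \sum_{\ell} B^{(i,R)}_{u,\ell}\Normelement^{[\ell]}$, the $d^{(C)}_u$ of \eqref{eq:calc_di_errorerasure_}, and $d^{(E)}_u := \sum_{\ell} B^{(E)}_{u,\ell}\Normelement^{[\ell]}$, the coefficients being entries of the $\Fq$-coordinate matrices of $\a^{(i,R)}$, $\a^{(i,C)}$, $\a^{(i,E)}$ respectively. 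Applying the linearized map $\Gamma^{(C)}$ then kills the $d^{(C)}_u$ terms by the defining property \eqref{eq:defi_errorloc_span_errdev_intgab}. Taking the $q$-reverse once more (Lemma~\ref{lem:silva_qreverse_matrix} again, i.e.\ transposing back) and regrouping, $\qtrafo{e}^{(i)}(\qreciproc{\Gamma^{(C)}}(\Normdualelement^{[j]}))$ comes out as an $\Fq$-linear combination of the $a^{(i,R)}_u$ and the $a^{(i,E)}_u$ alone: the column-erasure values $a^{(i,C)}_u$ have been eliminated.

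To finish, I would use that $\Lambda^{(i,R)}$ is linearized and annihilates every $a^{(i,R)}_u$ by \eqref{eq:defi_errorloc_span_errdev_intgab}, so $\Lambda^{(i,R)}(\qtrafo{e}^{(i)}(\qreciproc{\Gamma^{(C)}}(\Normdualelement^{[j]})))$ is an $\Fq$-linear combination of the elements $\Lambda^{(i,R)}(a^{(i,E)}_u)$; and $\Lambda^{(i,E)}$, which annihilates each of those by the nested condition in \eqref{eq:defi_errorloc_span_errdev_intgab}, sends it to $0$. Since this holds for all $j$, $P^{(i)}(x)$ vanishes on $\Fqm$, hence $P^{(i)}(x) \equiv 0 \mod (x^{[m]}-x)$, as claimed, and the argument is uniform in $i \in \intervallincl{1}{s}$. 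The step I expect to be the real obstacle is the bookkeeping in the middle paragraph: pinning down the dual-basis/transpose dictionary from Lemma~\ref{lem:silva_qreverse_matrix} precisely enough that the \emph{known} row-space data $\B^{(C)}$ (shared by all $s$ components) ends up on the side removed by $\Gamma^{(C)}$, while the row-erasure and genuine-error layers survive in the form $a^{(i,R)}_u$, $a^{(i,E)}_u$ so that $\Lambda^{(i,R)}$ and $\Lambda^{(i,E)}$ can clear them. This is exactly where $n=m$ is needed (for Lemma~\ref{lem:silva_qreverse_matrix}) and where one sees why $\B^{(C)}$ must be common across $i$ while the $\a^{(i,R)},\B^{(i,R)}$ need not be.
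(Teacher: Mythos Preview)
The paper itself does not contain a proof of this theorem; it is stated with a citation to \cite[App.~A.2]{Wachterzeh_PhD_2013} and nothing more. The only internal evidence about the intended argument is the opening line of the proof of Lemma~\ref{lem:interleaved_mod_error}, which invokes ``see proof of Theorem~\ref{theo:gen_trans_key_equation}'' for the intermediate fact
\[
\Lambda^{(i,R)}\Big(\big(\qtrafo{e}^{(i,R)}(x)+\qtrafo{e}^{(i,C)}(x)\big)\circ \qreciproc{\Gamma^{(C)}}(x^{[\numbColErasures]})\Big)\equiv 0 \mod (x^{[m]}-x).
\]
Your plan establishes exactly this intermediate statement along the way: after using Lemma~\ref{lem:silva_qreverse_matrix} to pass to $\Gamma^{(C)}\circ\qreciproc{\qtrafo{e}^{(i)}}$, the column-erasure layer is annihilated because its values land in the $\Fq$-span of the $d^{(C)}_u$; after transposing back, the row-erasure layer survives only as $\Fq$-combinations of the $a^{(i,R)}_u$, which $\Lambda^{(i,R)}$ kills; and the remaining genuine-error layer consists of $\Fq$-combinations of $\Lambda^{(i,R)}(a^{(i,E)}_u)$, annihilated by $\Lambda^{(i,E)}$. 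The use of $n=m$ (so that Lemma~\ref{lem:silva_qreverse_matrix} applies and the $q$-reverse is an anti-involution modulo $x^{[m]}-x$) and the observation that a linearized polynomial is $\equiv 0$ modulo $x^{[m]}-x$ iff it vanishes on a basis of $\Fqm$ are both correct and necessary.

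In short: your argument is sound, and the one piece of internal evidence (the cross-reference in Lemma~\ref{lem:interleaved_mod_error}) indicates that the cited proof proceeds through the same layer-by-layer annihilation you describe. There is nothing in the present paper to compare against at a finer level of detail.
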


\subsection{Error-Erasure Decoding of Interleaved Gabidulin Codes}

Let $\qtrafo{r}^{(i)}(x)$ denote the linearized interpolation polynomial of $r^{(i)}(x)$, $ \forall i\in \intervallincl{1}{s}$, calculated as in \eqref{eq:received_lagrange}, \eqref{eq:linearized_Lagrange_basis_poly}
and define $s$ modified transformed received words by:
\begin{equation}\label{eq:mod_rec_word}
\qtrafo{y}^{(i)}(x) \defeq \Lambda^{(i,R)} \big(\qtrafo{r}^{(i)}(\qreciproc{\Gamma^{(C)}}(x^{[\numbColErasures]}))\big) \mod (x^{[m]}-x), \quad \forall i \in \intervallincl{1}{s},
\end{equation}
where $\qreciproc{\Gamma^{(C)}}(x)$ is the full $q$-reverse of $\Gamma^{(C)}(x)$,  defined by $\qreciproc{\Gamma}_i = \Gamma_{-i \!\mod\! m}^{[i]}$, $\forall i \in \intervallincl{0}{m}$, see also \cite{silva_rank_metric_approach}.
These modified received words can immediately be calculated since all polynomials on the RHS of \eqref{eq:mod_rec_word} are known from the channel. Further,
\begin{equation}\label{eq:errorerasure_modifiedreceivedword}
\qtrafo{y}^{(i)}(x) =\underbrace{\Lambda^{(i,R)} \big(f^{(i)}(\qreciproc{\Gamma^{(C)}}(x^{[\numbColErasures]}))\big)}_{\deg_q < k^{(i)}+\numbRowErasures^{(i)}+\numbColErasures}+\,\Lambda^{(i,R)} \big(\qtrafo{e}^{(i)}(\qreciproc{\Gamma}^{(C)}(x^{[\numbColErasures]}))\big) \mod (x^{[m]}-x),
\end{equation}
where $f^{(i)}(x)$ with $\deg_q f^{(i)}(x) < k^{(i)}$ is the evaluation polynomial of the $i$-th elementary codeword such that $\c^{(i)} =f^{(i)}(\vec{g}) \in \Gab{n,k^{(i)}}$. 
The idea is to pass the evaluation of the modified transformed received words $\qtrafo{y}^{(i)}(x)$, $\forall i \in \intervallincl{1}{s}$, from \eqref{eq:errorerasure_modifiedreceivedword}---instead of the evaluation of $\qtrafo{r}^{(i)}(x)$---to our interpolation-based decoder. 

The polynomial $\Lambda^{(i,R)} \big(f^{(i)}(\qreciproc{\Gamma^{(C)}}(x^{[\numbColErasures]}))\big)$ on the RHS of \eqref{eq:errorerasure_modifiedreceivedword} has $q$-degree less than $k^{(i)}+\numbRowErasures^{(i)}+\numbColErasures$ and is the evaluation polynomial of a $\Gab{n,k^{(i)}+\numbRowErasures^{(i)}+\numbColErasures}$ codeword. 
If we arrange these polynomials for all $i \in \intervallincl{1}{s}$ vertically, we obtain the evaluation polynomial of an $\IntGab{s;n,k^{(1)}+\numbRowErasures^{(1)}+\numbColErasures,\dots,k^{(s)}+\numbRowErasures^{(s)}+\numbColErasures}$ code.

We call
$\Lambda^{(R)} \big(\qtrafo{e}(\qreciproc{\Gamma^{(C)}}(x^{[\numbColErasures]}))\big)$ {modified transformed error} in the following and show in Lemma~\ref{lem:interleaved_mod_error} that its evaluation has rank at most $t$.

\begin{lemma}[Rank of Modified Interleaved Error]\label{lem:interleaved_mod_error}
%
Let $n = m$ and let $\e^{(i, RC)} = \Lambda^{(i,R)} \big(\qtrafo{e}^{(i)}(\qreciproc{\Gamma^{(C)}}(\vec{g}^{[\numbColErasures]}))\big) \in \Fqm^n$, $\forall i \in \intervallincl{1}{s}$. 
Further, let $\e^{(i,E)} = \a^{(i,E)} \cdot \B^{(E)}$, $\forall i\in \intervallincl{1}{s}$ as in \eqref{eq:decomp_errorerasures_vector_intgab} with $\rk\big(\e^{(1,E)T} \ \e^{(2,E)T} \ \dots \ \e^{(s,E)T} \big)  = t$.
Then, 
\begin{equation*}
\rk
\begin{pmatrix}
\e^{(1,RC)}\\
\e^{(2,RC)}\\
\vdots\\
\e^{(s,RC)}\\
\end{pmatrix}
\leq 
\rk
\begin{pmatrix}
\e^{(1,E)}\\
\e^{(2,E)}\\
\vdots\\
\e^{(s,E)}\\
\end{pmatrix} = t.
\end{equation*}
\end{lemma}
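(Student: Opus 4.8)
The plan is to reduce the linearized‑polynomial expression defining $\e^{(i,RC)}$ to a single product of matrices over $\Fq$, and then to track the three summands of the error decomposition~\eqref{eq:decomp_errorerasures_vector_intgab} through that product.

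First I would rewrite $\e^{(i,RC)}$ as a matrix product. Since $n=m$, the element $\qreciproc{\Gamma^{(C)}}(g_j^{[\numbColErasures]})$ lies in $\Fqm$, which is spanned over $\Fq$ by $g_0,\dots,g_{n-1}$; hence there is a matrix $\Mat M\in\Fq^{n\times n}$, \emph{not depending on $i$}, with $\qreciproc{\Gamma^{(C)}}(g_j^{[\numbColErasures]})=\sum_{\ell}M_{j,\ell}g_\ell$ for all $j$. Because $\qtrafo e^{(i)}$ is linearized with $\qtrafo e^{(i)}(g_\ell)=e^{(i)}_\ell$ and $\Lambda^{(i,R)}$ is $\Fq$-linear, evaluating the modified transformed error at $\vec g$ gives, coordinatewise, $\e^{(i,RC)}=\Lambda^{(i,R)}\!\big(\e^{(i)}\big)\cdot\Mat M^{T}$, where $\Lambda^{(i,R)}(\e^{(i)})$ means applying $\Lambda^{(i,R)}$ to each coordinate of $\e^{(i)}$; the reduction modulo $x^{[m]}-x$ in~\eqref{eq:mod_rec_word} plays no role since we only evaluate at elements of $\Fqm$. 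Inserting~\eqref{eq:decomp_errorerasures_vector_intgab} and using $\Fq$-linearity of $\Lambda^{(i,R)}$ once more, the row‑erasure summand drops out coordinatewise because $\Lambda^{(i,R)}(a^{(i,R)}_\ell)=0$ for all $\ell$ by~\eqref{eq:defi_errorloc_span_errdev_intgab}; hence $\Lambda^{(i,R)}(\e^{(i)})=\Lambda^{(i,R)}(\a^{(i,C)})\,\B^{(C)}+\Lambda^{(i,R)}(\a^{(i,E)})\,\B^{(E)}$.

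The crux, and the step I expect to be the only delicate one, is to show that the column‑erasure summand disappears after multiplication by $\Mat M^{T}$, i.e.\ that $\B^{(C)}\Mat M^{T}=\0$; this is precisely where Lemma~\ref{lem:silva_qreverse_matrix} (and thereby the hypothesis $n=m$) enters. Write $\Gamma^{(C)}(\Normelement^{[\ell]})=\sum_a P_{a,\ell}\Normelement^{[a]}$. Applying Lemma~\ref{lem:silva_qreverse_matrix} with $\mathcal A=\mathcal B=\{\Normelement^{[0]},\dots,\Normelement^{[m-1]}\}$, whose dual basis is $\{\Normdualelement^{[0]},\dots,\Normdualelement^{[m-1]}\}$ and coincides with $\vec g$, yields $\qreciproc{\Gamma^{(C)}}(\Normdualelement^{[j]})=\sum_a P_{j,a}\Normdualelement^{[a]}$, so that $M_{j,\ell}=P_{j+\numbColErasures,\ell}$ (indices taken modulo $m$). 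On the other hand, $\Gamma^{(C)}(d^{(C)}_{\ell'})=0$ for every $\ell'$ by~\eqref{eq:defi_errorloc_span_errdev_intgab}, i.e.\ $\sum_\ell B^{(C)}_{\ell',\ell}\,\Gamma^{(C)}(\Normelement^{[\ell]})=0$; comparing coefficients in the basis $\{\Normelement^{[a]}\}$ forces $\sum_\ell B^{(C)}_{\ell',\ell}\,P_{a,\ell}=0$ for \emph{every} index $a$, in particular for $a=j+\numbColErasures$. Hence $(\B^{(C)}\Mat M^{T})_{\ell',j}=\sum_\ell B^{(C)}_{\ell',\ell}M_{j,\ell}=0$. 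This also accounts for the asymmetry remarked after~\eqref{eq:decomp_errorerasures_vector_intgab}: each $\Lambda^{(i,R)}$ can absorb its own pair $\a^{(i,R)},\B^{(i,R)}$, whereas the single polynomial $\Gamma^{(C)}$ can only annihilate a $\B^{(C)}$ that is common to all $i$.

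Finally I would assemble the pieces. By the previous steps $\e^{(i,RC)}=\Lambda^{(i,R)}(\a^{(i,E)})\,\B^{(E)}\Mat M^{T}$ for each $i$, hence
\begin{equation*}
\begin{pmatrix}\e^{(1,RC)}\\ \vdots\\ \e^{(s,RC)}\end{pmatrix}=\begin{pmatrix}\Lambda^{(1,R)}(\a^{(1,E)})\\ \vdots\\ \Lambda^{(s,R)}(\a^{(s,E)})\end{pmatrix}\cdot\bigl(\B^{(E)}\Mat M^{T}\bigr),
\end{equation*}
which is a product of an $s\times t$ matrix over $\Fqm$ with the matrix $\B^{(E)}\Mat M^{T}\in\Fq^{t\times n}$. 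Multiplying on the right by a matrix over $\Fq$ with $t$ rows cannot raise the $\Fq$-rank above $t$, so the stacked modified error has rank at most $t=\rk(\e^{(1,E)T}\ \e^{(2,E)T}\ \dots\ \e^{(s,E)T})$, which is the assertion. Steps~1, 2 and~4 are routine bookkeeping with linearized polynomials and with the $\Fq$-matrix representation of vectors over $\Fqm$; the identity $\B^{(C)}\Mat M^{T}=\0$ in the third step is the real content.
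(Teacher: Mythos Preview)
Your argument is correct, but it proceeds along a different line from the paper's own proof. The paper first invokes the proof of Theorem~\ref{theo:gen_trans_key_equation} (deferred to \cite[App.~A.2]{Wachterzeh_PhD_2013}) to obtain in one stroke that $\Lambda^{(i,R)}\big(\qtrafo e^{(i)}(\qreciproc{\Gamma^{(C)}}(x^{[\numbColErasures]}))\big)\equiv\Lambda^{(i,R)}\big(\qtrafo e^{(i,E)}(\qreciproc{\Gamma^{(C)}}(x^{[\numbColErasures]}))\big)$, i.e.\ both the row- and column-erasure summands vanish simultaneously; it then writes the result as a product with a matrix $\Mat G\in\Fq^{m\times m}$ and appeals to the appendix Lemma~\ref{lem:eval_row_space} (row space of a composition) to conclude that each row $\Lambda^{(i,R)}(\qtrafo e^{(i,E)}(\vec g))$ lies in $\Rowspace{\e^{(i,E)}}$. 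You instead treat the two erasure types separately and explicitly: the row part dies by the very definition of $\Lambda^{(i,R)}$, while for the column part you unpack Lemma~\ref{lem:silva_qreverse_matrix} to prove the matrix identity $\B^{(C)}\Mat M^{T}=\0$ directly. Your final step also differs: rather than invoking Lemma~\ref{lem:eval_row_space}, you obtain an explicit factorization through $\B^{(E)}\Mat M^{T}\in\Fq^{t\times n}$, from which the rank bound is immediate. The net effect is that your proof is self-contained within the paper (no reference to the external thesis proof), and the factorization makes the structural reason for the asymmetry between $\B^{(i,R)}$ and $\B^{(C)}$ fully transparent; the paper's version is shorter because it reuses the key-equation machinery already developed.
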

\begin{proof}
Since $\Lambda^{(i,R)} \big((\qtrafo{e}^{(i,R)}(x)+\qtrafo{e}^{(i,C)}(x))\otimes \qreciproc{\Gamma^{(C)}}(x^{[\numbColErasures]}))\big) \equiv 0 \mod (x^{[m]}-x)$, $\forall i \in \intervallincl{1}{s}$, see proof of Theorem~\ref{theo:gen_trans_key_equation}, we obtain
\begin{equation*}
\Lambda^{(i,R)} \big(\qtrafo{e}^{(i)}(\qreciproc{\Gamma^{(C)}}(x^{[\numbColErasures]}))\big) 
\equiv\Lambda^{(i,R)} \big(\qtrafo{e}^{(i,E)}(\qreciproc{\Gamma^{(C)}}(x^{[\numbColErasures]}))\big) \mod (x^{[m]}-x), \forall i \in \intervallincl{1}{s}.
\end{equation*}
Let $\G = \big(G_{l,j}\big)^{l \in \intervallexcl{0}{m}}_{j \in \intervallexcl{0}{m}} \in \Fq^{m \times m}$ be such that $\qreciproc{\Gamma^{(C)}}(g_j^{[\numbColErasures]}) = \sum_{l=0}^{m-1}G_{l,j} g_l$ and thus, $\forall j\in\intervallexcl{0}{m}$ and $\forall i \in \intervallincl{1}{s}$:
\begin{align*}
e^{(i,RC)}_j &= \Lambda^{(i,R)} \big(\qtrafo{e}^{(i)}(\qreciproc{\Gamma^{(C)}}(g_j^{[\numbColErasures]}))\big)
= \sum_{l=0}^{m-1}G_{l,j}\Lambda^{(i,R)} \big(\qtrafo{e}^{(i,E)}(g_l)\big).
\end{align*}
Hence,
\begin{equation*}
\begin{pmatrix}
\e^{(1,RC)}\\
\e^{(2,RC)}\\
\vdots\\
\e^{(s,RC)}\\
\end{pmatrix} \!=\! 
\begin{pmatrix}
\Lambda^{(1,R)} \big(\qtrafo{e}^{(1,E)}(g_0)\big) & \Lambda^{(1,R)} \big(\qtrafo{e}^{(1,E)}(g_1)\big) & \dots & \Lambda^{(1,R)} \big(\qtrafo{e}^{(1,E)}(g_{m-1})\big)\\
\Lambda^{(2,R)} \big(\qtrafo{e}^{(2,E)}(g_0)\big) & \Lambda^{(2,R)} \big(\qtrafo{e}^{(2,E)}(g_1)\big) & \dots & \Lambda^{(2,R)} \big(\qtrafo{e}^{(2,E)}(g_{m-1})\big)\\
\vdots &\vdots&\ddots& \vdots\\
\Lambda^{(s,R)} \big(\qtrafo{e}^{(s,E)}(g_0)\big) & \Lambda^{(s,R)} \big(\qtrafo{e}^{(s,E)}(g_1)\big) & \dots & \Lambda^{(s,R)} \big(\qtrafo{e}^{(s,E)}(g_{m-1})\big)
\end{pmatrix}
\cdot \G.
\end{equation*}

Due to Lemma~\ref{lem:eval_row_space} in the appendix, $\big(\Lambda^{(i,R)} (\qtrafo{e}^{(i,E)}(g_0)) \ \Lambda^{(i,R)} (\qtrafo{e}^{(i,E)}(g_1)) \ \dots $\\ $ \ \Lambda^{(i,R)} (\qtrafo{e}^{(i,E)}(g_{m-1}))\big)$ lies in the same row space as $\e^{(i,E)} = \big(\qtrafo{e}^{(i,E)}(g_0) \ \qtrafo{e}^{(i,E)}(g_1) \ \dots $ $ \ \qtrafo{e}^{(i,E)}(g_{m-1})\big)$, $\forall i \in \intervallincl{1}{s}$, and hence, has rank at most $t^{(i)}$. The multiplication with $\G$ does not increase the rank and the statement follows.
\qed\end{proof}
Lemma~\ref{lem:interleaved_mod_error} requires that $\Gamma^{(C)}(x)$ is common for all $i\in\intervallincl{1}{s}$, whereas the $\Lambda^{(i,R)}(x)$ can be different.
This clarifies why $\B^{(C)}$ has to be independent of $i$.

Therefore, error-erasure decoding of an $\IntGab{s;n,k^{(1)},\dots,k^{(s)}}$ is reduced to errors-only decoding of an $\IntGab{s;n,k^{(1)}+\numbRowErasures^{(1)}+\numbColErasures,\dots,k^{(s)}+\numbRowErasures^{(s)}+\numbColErasures}$ code. In principle, any error decoding algorithm for interleaved Gabidulin codes can now be applied, e.g. our interpolation-based principle.
Hence, we use $\qtrafo{y}^{(i)}(\vec{g})$, $\forall i\in\intervallincl{1}{s}$, as the input of interpolation-based decoding and treat $\qtrafo{y}^{(i)}(\vec{g})$, $\forall i \in \intervallincl{1}{s}$, in the same way as a codeword of an interleaved Gabidulin code of elementary dimensions $k^{(i)}+\numbRowErasures^{(i)}+\numbColErasures$, which is corrupted by an error of overall rank $t$ and by no erasures.

In order to apply the interpolation-based decoding strategy, we generalize Problem~\ref{prob:interpolation} as follows.
\begin{problem_bold}[Interpolation Step for Error-Erasure Decoding]\label{prob:interpolation_erasures}
Let $\qtrafo{y}^{(i)}(x)$, $\forall i \in \intervallincl{1}{s}$, as in \eqref{eq:errorerasure_modifiedreceivedword}, and $g_0,g_1,\dots,g_{n-1} \in \Fqm$, which are linearly independent over $\Fq$, be given. 
Find an $(s+1)$-variate linearized polynomial of the form 
\begin{equation*}
Q(x,y_1,\dots,y_s) = Q_0(x) + Q_1(y_1) + \dots + Q_s(y_s), 
\end{equation*}
which satisfies for given integers $n$, $\tau,k^{(1)}, \dots, k^{(s)}$, $\numbRowErasures^{(1)}, \dots, \numbRowErasures^{(s)} $, $\numbColErasures$:
\begin{itemize}
\item[$\bullet$] $Q(g_j,\qtrafo{y}^{(1)}(g_j),\qtrafo{y}^{(2)}(g_j),\dots,\qtrafo{y}^{(s)}(g_j)) = 0$, $\quad\forall j \in \intervallexcl{0}{n}$,
\item[$\bullet$] $\deg_q Q_0(x) < n-\tau$,
\item[$\bullet$] $\deg_q Q_i(y_i)< n-\tau-(k^{(i)}-\numbColErasures-\numbRowErasures^{(i)}-1)$, $\quad \forall i \in \intervallincl{1}{s}$.
\end{itemize}
\end{problem_bold}

Similar to Lemma~\ref{lem:nonzero_interpolpoly}, a non-zero interpolation polynomial $Q(x,y_1,\dots,y_s)$, 
which satisfies the above mentioned conditions, exists if
 \begin{equation*} 
\tau < \frac{sn-\sum_{i=1}^{s}(k^{(i)}+\numbRowErasures^{(i)}+\numbColErasures)+s}{s+1}.
 \end{equation*}
If $k^{(i)}=k$, and $\numbRowErasures^{(i)} =\numbRowErasures$, $\forall i \in \intervallincl{1}{s}$, we obtain
 $\tau < s(n-k+1-\numbRowErasures-\numbColErasures)/(s+1)$.

The interpolation and root-finding procedure is straight forward to the errors-only approach from Section~\ref{subsec:intgab_interpolation_algo}. 
The error-erasure list decoder therefore returns all $\Lambda^{(i,R)} \big(f^{(i)}(\qreciproc{\Gamma^{(C)}}(x^{[\numbColErasures]}))\big)$, $\forall i \in \intervallincl{1}{s}$, such that $(f^{(1)}(\vec{g})^T \ f^{(2)}(\vec{g})^T\ \dots \ f^{(s)}(\vec{g})^T)^T$ is in rank distance at most $\tau$. 
In order to obtain $f^{(i)}(x)$, we have to divide from the left and right by $\Lambda^{(i,R)}(x)$ and $\qreciproc{\Gamma^{(C)}}(x^{[\numbColErasures]})$, respectively, $\forall i\in \intervallincl{1}{s}$.

The unique decoder can be modified in a similar way and returns with high probability the unique solution if 
 \begin{equation*} 
\tau \leq \Bigg\lfloor\frac{sn-\sum_{i=1}^{s}(k^{(i)}+\numbRowErasures^{(i)}+\numbColErasures)}{s+1}\Bigg\rfloor.
 \end{equation*}
With this principle, our interpolation-based decoding algorithm can be applied to (unique or list) error-erasure decoding of interleaved Gabidulin codes.

For interpolation-based error-erasure decoding in Hamming metric it is more common to puncture the code at the erased positions and interpolate an (interleaved) code of smaller length and same dimension(s) as the original code, whereas we interpolate a code of same length as the original code, but higher dimension(s).

\section{Conclusion and Outlook}\label{sec:conclusion}
This paper considers decoding approaches for interleaved Gabidulin codes. 
First, we have described two known decoding principles \cite{Loidreau_Overbeck_Interleaved_2006,SidBoss_InterlGabCodes_ISIT2010} and have proven a relation between them.
Second, we have shown a new approach for decoding interleaved Gabidulin codes based on interpolating a multi-variate linearized polynomial.
The procedure consists of two steps: an interpolation step and a root-finding step, where 
both can be accomplished by solving a linear system of equations. 
Our decoder can be used as a list decoder or as a probabilistic unique decoder. To our knowledge, it is the first list decoding algorithm for interleaved Gabidulin codes.
The complexity of the unique decoder as well as finding a basis of all solutions of the list decoder is quadratic in the length of the code; however, for the list decoder, finding the explicit list might require exponential time complexity.
The output of both decoders is a unique decoding result with high probability. 
Further, we have derived a connection to the two known approaches for decoding interleaved Gabidulin codes. 
This relation provides an upper bound on the failure probability of our unique decoder.
Finally, we have generalized our decoding principle such that it incorporates also row and column erasures.

For future work, it should be possible to apply re-encoding in order to reduce the complexity and to use subspace evasive subsets for the elimination of the valid solutions of the list decoder, similar to \cite{GuruswamiWang-LinearAlgebraicForVariantsofReedSolomonCodes_2012}.


%
%
%
%
%
%
%
%
%

\section*{Acknowledgment}
The authors thank Vladimir Sidorenko for the valuable discussions and the reviewers for their suggestions that helped to improve the presentation of the paper.

\section*{Appendix}
\begin{lemma}[Row Space of Composition]\label{lem:eval_row_space}
Let ${a}(x)$ and ${b}(x)$ denote two linearized polynomials in $\Linpolyring$ with $\deg_q {a}(x),\deg_q {b}(x) < m$. Let $c(x) = b(a(x))$ and let $\NormbasisOrdered = (\beta_0 \ \beta_1 \ \dots \ \beta_{m-1})$ be a basis of $\Fqm$ over $\Fq$.
Let $\A \in \Fq^{m \times m}$, $\Mat{C}\in \Fq^{m \times m}$ denote the matrix representations according to $\Basis$ of
\begin{align*}
\vecevalm{a}{\beta},\quad
\vecevalm{c}{\beta},
\end{align*}
respectively. Then, for the row spaces the following holds:
\begin{equation*}
\Rowspace{\Mat{C}} \subseteq \Rowspace{\Mat{A}}.
\end{equation*}
\end{lemma}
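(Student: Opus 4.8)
The plan is to exploit that $b(x)$, being linearized, induces an $\Fq$-linear endomorphism of $\Fqm$: composing $a$ with $b$ on the outside amounts to multiplying the coordinate matrix of $(a(\beta_0)\ \dots\ a(\beta_{m-1}))$ on the \emph{left} by a fixed matrix over $\Fq$, and left multiplication can only shrink the $\Fq$-row space.

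First I would fix the matrix $\Mat{B}_b \in \Fq^{m\times m}$ representing the map $\xi\mapsto b(\xi)$ on $\Fqm$ with respect to the basis $\NormbasisOrdered$, i.e. the $i$-th column of $\Mat{B}_b$ collects the $\Fq$-coordinates of $b(\beta_i)$ in $\NormbasisOrdered$; this is well defined since $\deg_q b(x)<m$ and $\xi^{[m]}=\xi$ for all $\xi\in\Fqm$. Next I would show $\Mat{C}=\Mat{B}_b\,\Mat{A}$: writing $a(\beta_j)=\sum_{i=0}^{m-1}A_{i,j}\beta_i$ with $A_{i,j}\in\Fq$ (so that the $j$-th column of $\A$ is $(A_{0,j}\ \dots\ A_{m-1,j})^T$) and using $c(x)=b(a(x))$ together with the $\Fq$-linearity of $b$,
\begin{equation*}
c(\beta_j) \;=\; b\!\Big(\textstyle\sum_{i} A_{i,j}\beta_i\Big) \;=\; \sum_i A_{i,j}\, b(\beta_i) \;=\; \sum_i A_{i,j}\sum_l (\Mat{B}_b)_{l,i}\,\beta_l \;=\; \sum_l (\Mat{B}_b\A)_{l,j}\,\beta_l ,
\end{equation*}
so $C_{l,j}=(\Mat{B}_b\A)_{l,j}$ for all indices $l,j$, that is $\Mat{C}=\Mat{B}_b\A$. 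Finally, every row of $\Mat{B}_b\A$ is an $\Fq$-linear combination of the rows of $\A$, hence $\Rowspace{\Mat{C}}=\Rowspace{\Mat{B}_b\A}\subseteq\Rowspace{\A}$, which is the claim.

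I do not expect a genuine obstacle here; the only delicate point is to keep the direction of the matrix identity straight — outer composition corresponds to left multiplication of the coordinate matrix and therefore to a contraction of the \emph{row} space, whereas a right multiplication (or a transpose) would instead contract the column space, so matching ``$\Mat{C}=\Mat{B}_b\A$'' to the row-space formulation of the lemma is the one bookkeeping step to verify carefully.
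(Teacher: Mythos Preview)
Your proof is correct. The paper takes a slightly different route: it argues via kernels. Viewing $a$ and $c=b\circ a$ as $\Fq$-linear maps on $\Fqm$, every root of $a(x)$ is a root of $c(x)$, so $\ker(a)\subseteq\ker(c)$; under the basis identification this becomes $\ker(\A)\subseteq\ker(\C)$ for the right kernels, and since the row space is the orthogonal complement of the right kernel one gets $\Rowspace{\C}\subseteq\Rowspace{\A}$.

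Your approach is more direct: the factorization $\C=\Mat{B}_b\A$ makes the row-space inclusion a one-line consequence and simultaneously implies the kernel containment the paper uses. The paper's argument, in turn, avoids naming $\Mat{B}_b$ at all and relies only on the obvious fact about roots of a composition, at the cost of invoking the duality between right kernel and row space. Both are short; yours has the slight advantage that the bookkeeping (which side the multiplication sits on, hence which of the row or column space contracts) is made explicit rather than hidden inside the duality step.
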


\begin{proof}
Consider the linearized polynomials as linear maps over $\Fqm$. Then, the kernel of the map ${a}$ is equivalent to the set of roots of ${a}(x)$ in $\Fqm$, considered as a vector space over $\Fq$. Since the roots of ${a}(x)$ are also roots of $c(x) = {b}({a}(x))$, the kernels are connected by $\ker({a}) \subseteq \ker(c)$. 
For the right kernels of the matrices $\ker(\mathbf A) \subseteq \ker(\C)$ holds, and the row spaces are related by $\Rowspace{\C} \subseteq \Rowspace{\mathbf A}$.
\qed\end{proof}

\end{document}